\documentclass[runningheads, 11pt]{llncs}
\usepackage[left=1in,top=1in,right=1in,bottom=1in]{geometry}

\usepackage[T1]{fontenc}
\usepackage{amsmath,amssymb,amsfonts}
\usepackage{xcolor}
\usepackage{accents}
\usepackage{graphicx}
\usepackage{algorithm,algorithmic}
\usepackage{hyperref}
\usepackage{color}

\urlstyle{rm}
\usepackage{subcaption}
\usepackage[normalem]{ulem}
\usepackage{dsfont}
\usepackage{mathtools}
\usepackage{array}
\usepackage{tabularx}
\newcolumntype{Y}{>{\centering\arraybackslash}X}
\usepackage{makecell}

\usepackage{enumitem}
\usepackage{pifont}

\newcommand{\xmark}{\ding{55}}
\usepackage{multirow}
\usepackage{bm}

\usepackage[suppress]{color-edits} 
\addauthor{et}{orange}
\addauthor{aj}{green}
\addauthor{kr}{blue}
\addauthor{fp}{red}
\addauthor{vk}{purple}

\newcommand{\numplayers}{N}

\newcommand{\playeridx}{i}
\newcommand{\playeridxalt}{j}
\DeclarePairedDelimiter\ceil{\lceil}{\rceil}
\newcommand{\abar}{\bar{a}}
\newcommand{\betabar}{\bar{\beta}}
\newcommand{\ubar}[1]{\underaccent{\bar}{#1}}
\newcommand{\R}{\mathbb{R}}
\newcommand{\E}{\mathbb{E}}
\renewcommand{\P}{\mathbb{P}}
\newcommand{\sw}{\mathrm{SW}}
\newcommand{\argmax}[1]{\mathrm{argmax}_{#1} \ }
\newcommand{\lambdamax}[1]{\lambda_{\max}\left(#1\right)}
\newcommand{\lambdamin}[1]{\lambda_{\min}\left(#1\right)}
\newcommand{\spectralradius}[1]{\rho\left(#1\right)}
\newcommand{\inv}[1]{\left(#1\right)^{-1}}
\newcommand{\transpose}{T}
\newcommand{\erdosrenyi}{Erd\H{o}s-R\'enyi }
\newcommand{\lipschitz}{L}
\newcommand{\timeindex}{k}
\newcommand{\constant}{\xi}
\newcommand{\action}{a}
\newcommand{\actionvector}{\mathbf{a}}
\newcommand{\utilityalt}{\tilde{u}}
\newcommand{\utility}{u}
\newcommand{\adjacency}{G}
\newcommand{\potentialfunction}{\phi}

\newcommand{\identity}{I}
\newcommand{\actionvectorpotential}{\actionvector^{\alpha}}
\newcommand{\actionmax}{\bar{\action}}
\newcommand{\actionmin}{\ubar{\action}}
\newcommand{\actiontilde}{\tilde{\action}}
\newcommand{\actionvectoropt}{\actionvector^\mathrm{opt}}
\newcommand{\Gmat}{P}
\newcommand{\Projection}[2]{\mathcal{P}_{#1}\left[#2\right]}
\newcommand{\stepsize}{\eta}
\newcommand{\stepsizemin}{\ubar{\eta}}
\newcommand{\stepsizemax}{\bar{\eta}}
\newcommand{\one}[1]{\mathbf{1}_{#1}}
\newcommand{\actionfixed}{z}
\newcommand{\actionvectorfixed}{\mathbf{\actionfixed}}
\newcommand{\gradientapprox}{g}
\newcommand{\gradientpotential}{h}
\newcommand{\potentialincrement}{\delta}
\newcommand{\gradientbound}{D}
\newcommand{\actionset}{\mathcal{A}}
\newcommand{\localaggregate}{z}
\newcommand{\adelta}{a_{\delta}}
\newcommand{\posboundeigvals}{\mathrm{PoS}_{\gamma, \lambda(\adjacency)}}
\newcommand{\posboundnetworkvals}{\mathrm{PoS}_{\gamma, \adjacency}}
\newcommand{\posboundgamma}{\mathrm{PoS}_\gamma}
\newcommand{\regionone}{\mathcal{R}_{\mathrm{1}}^{k, i}}
\newcommand{\regiontwo}{\mathcal{R}_{\mathrm{2}}^{k, i}}
\newcommand{\regiononea}{\mathcal{R}_{\mathrm{1A}}^{k, i}}
\newcommand{\regiononeb}{\mathcal{R}_{\mathrm{1B}}^{k, i}}
\newcommand{\regiononec}{\mathcal{R}_{\mathrm{1C}}^{k, i}}
\newcommand{\regiontwoa}{\mathcal{R}_{\mathrm{2A}}^{k, i}}
\newcommand{\regiontwob}{\mathcal{R}_{\mathrm{2B}}^{k, i}}
\newcommand{\betavector}{\bm{\beta}}


\begin{document}

\author{Kiran Rokade\inst{1} \and  Adit Jain\inst{1} \and  Francesca Parise\inst{1} \and Vikram Krishnamurthy\inst{1} \and Eva Tardos\inst{2}}

\institute{School of Electrical and Computer Engineering, Cornell University \email{\{kvr36,aj457,fp264,vikramk\}@cornell.edu} \and Department of Computer Science, Cornell University \email{eva.tardos@cornell.edu}}

\title{Asymmetric Network Games: \texorpdfstring{$\alpha$}{alpha}-Potential Function and Learning}

\titlerunning{Asymmetric Network Games: \texorpdfstring{$\alpha$}{alpha}-Potential Function and Learning}

\maketitle

\begin{abstract} 

In a network game, players interact over a network and the utility of each player depends on his own action and on an aggregate of his neighbours' actions. Many real world networks of interest are asymmetric and involve a large number of heterogeneous players. This paper analyzes static network games using the framework of $\alpha$-potential games. Under mild assumptions on the action sets (compact intervals) and the utility functions (twice continuously differentiable) of the players, we derive an expression for an inexact potential function of the game, called the $\alpha$-potential function. Using such a function, we show that modified versions of the sequential best-response algorithm and the simultaneous gradient play algorithm achieve convergence of players' actions to a $2\alpha$-Nash equilibrium. For linear-quadratic network games, we show that $\alpha$ depends on the maximum asymmetry in the network and is well-behaved for a wide range of networks of practical interest. Further, we derive bounds on the social welfare  of the $\alpha$-Nash equilibrium  corresponding to the maximum of the $\alpha$-potential function, under suitable assumptions. We numerically illustrate the convergence of the proposed algorithms and properties of the learned $2\alpha$-Nash equilibria. 

\keywords{Network games \and $\alpha$-potential function \and Learning \and Social welfare.}

\end{abstract}

\section{Introduction}
\label{sec: introduction}

In a network game, players interacting through a network make strategic decisions aimed at maximizing their utility which depends on their own action and on an aggregate of their neighbors' actions~\cite{jackson_chapter}. 
Much of the existing literature studying learning dynamics in network games relies on global assumptions on the network structure such as symmetric interactions (i.e., $\adjacency = \adjacency^\transpose$, where $\adjacency$ is the adjacency matrix of the network \cite{bramoulle_strategic_interactions}), or spectral conditions which ensure monotonicity of the game Jacobian (e.g., conditions involving $\|\adjacency\|_2$ or $\|\adjacency\|_\infty$ \cite{francesca_VI_network_games}). While these assumptions are useful in establishing properties such as existence and uniqueness of a Nash equilibrium, and convergence of learning algorithms to such an equilibrium, they can be restrictive since (i) many real-world networks exhibit asymmetric interactions, e.g., social networks with unbalanced relationships \cite{asymmetricallycommittedrel} and technological networks with directional dependencies~\cite{Liu_2003_sensor_networks_asymmetric}, and (ii) for large networks with heterogeneous player characteristics, conditions which require a bound on the norm of the network adjacency matrix may not be satisfied. 
Hence, there is a need to study these classes of network games using a framework which is broad (encompassing networks of practical interest that can be large, asymmetric, generated deterministically or randomly) yet enabling tractable analysis of equilibrium outcomes and convergence of independent learning algorithms. 

\textbf{Contributions: }
In this work, we adopt the framework of $\alpha$-potential games to study a broad class of network games not previously covered in the literature. 
Under mild assumptions on the game, we prove existence of a common function (an $\alpha$-potential function) that is approximately optimized\footnote{In Theorems~\ref{thm: sequential best response convergence} and \ref{thm: gradient play convergence} we make the term ``approximately optimized'' precise.} as players selfishly maximize their own utilities, and whose maxima correspond to approximate Nash equilibria.
Our main contributions are as follows. 
\begin{enumerate}
    \item For static games whose utility functions are twice-differentiable, and whose action sets are compact subintervals of $\R$, we provide an explicit form of an $\alpha$-potential function and a bound on the corresponding value of $\alpha$ (Corollary \ref{cor: alpha potential existence}), where $\alpha>0$ is a constant whose value depends on the game parameters.
    \item Using the framework of $\alpha$-potential games, we show convergence of sequential best-response (Algorithm~\ref{alg: sequential best response}) and simultaneous gradient play (Algorithm~\ref{alg: gradient play}) dynamics modified to include a stopping criterion based on $\alpha$ to a $2\alpha$-Nash equilibrium of the game, in a finite number of iterations (Theorems \ref{thm: sequential best response convergence} and \ref{thm: gradient play convergence}). 
    \item For linear-quadratic (LQ) network games, we show that $\alpha$ is proportional to the degree of asymmetry in the network (Proposition~\ref{prop: alpha potential expression LQ}). We then present examples of networks of practical interest for which $\alpha$ is well-behaved (Section \ref{sec: alpha for various networks}). 
    \item Finally, for LQ games, we derive a bound on the welfare suboptimality of the $\alpha$-Nash equilibrium corresponding to the maximum of the provided $\alpha$-potential function in terms of the eigenvalues of the network adjacency matrix (Theorem~\ref{thm: pos}), under suitable assumptions on the parameters of the game. 
\end{enumerate}

\textbf{Related works: }
Our paper adds to the rich literature on \emph{network games} (see, e.g., \cite{jackson_chapter,network_games_galeotti,bramoulle_strategic_interactions,galeotti_interventions,francesca_VI_network_games} among others). 
Within this literature, 
our paper is more closely related to works that study convergence of \emph{learning dynamics}. 
For example, \cite{francesca_VI_network_games} guarantees convergence of discrete- and continuous-time best-response dynamics in network games under 
conditions on the norms of the network adjacency matrix~$\adjacency$. For an LQ game with local aggregate sensitivity parameter $\gamma$,\footnote{See equation \eqref{eq: LQ utility} in Section~\ref{sec: LQ network games} for the definition of an LQ utility function and a precise meaning of $\gamma$.} these conditions reduce to: (i) $|\gamma| \cdot \|\adjacency\|_2 < 1$, (ii) $|\gamma| \cdot \|\adjacency\|_\infty < 1$, (iii) $\gamma < 0$, $|\gamma| \cdot |\lambdamin{\adjacency}| < 1$. 
Although these conditions are broad, i.e., they cover asymmetric network games and games of strategic complements and substitutes, there exist simple network games which do not satisfy them (see Fig. \ref{fig: simple_asymm_networks} for two examples).
\begin{figure}[t]
    \begin{center}
    \begin{align*}
        G_1 = \begin{bmatrix}
                0 &0.95 &0.91 &0.90 \\
                0.97 &0 &0.94 &0.91 \\
                0.93 &0.90 &0 &0.95 \\
                0.95 &0.99  &0.98 &0 
            \end{bmatrix}, \quad \quad
        G_2 = \begin{bmatrix}
                \multicolumn{4}{c}{\multirow{4}{*}{$G_1$}} & 3.02 \\
                 & & & & 2.95 \\
                 & & & & 2.95 \\
                 & & & & 2.96 \\
                3.02 & 2.95 & 2.95 & 2.96 & 0
            \end{bmatrix}
    \end{align*}
    \renewcommand{\arraystretch}{1.2} 
    \setlength{\tabcolsep}{11pt}  
    \begin{tabular}{|c|c|c|c|c|c|} \hline
        &$\adjacency = \adjacency^T$ &$\|\adjacency\|_2$ &$\|\adjacency\|_\infty$ &$\alpha = \|\adjacency - \adjacency^T\|_\infty$ &Max. utility\\ \hline
        $G_1$ &\xmark &$2.82$ &$2.93$ &$0.33$ &$3.30$ \\ \hline
        $G_2$ &\xmark &$7.52$ &$11.91$ &$0.35$ &$12.48$ \\ \hline
    \end{tabular}
    \end{center}
    \caption{Examples of asymmetric networks for which $\|\adjacency\|_2 > 1$ and $\|\adjacency\|_\infty > 1$ (hence not satisfying the conditions in \cite{francesca_VI_network_games} for $\gamma = 1$), while $\alpha$ is small (see Proposition \ref{prop: alpha potential expression LQ}). Max. utility is defined as $\max_{i \in [\numplayers], \actionvector \in [-1,1]^\numplayers} |\utility_\playeridx(\actionvector)|$. $G_1$ can represent a clique of four friends with small errors in the reciprocity of their friendships, while $G_2$ can represent the same network with an additional influential node. 
    }
    \label{fig: simple_asymm_networks}
\end{figure}
Our work applies to these classes of network games. 
Under assumptions that the best-response mapping is either contractive, non-expansive, or monotone, and the network adjacency matrix satisfies $\|\adjacency\|_2 \leq 1$, \cite{PARISE2020Distributed} studies distributed algorithms for convergence to Nash equilibria. Convergence properties of best-response dynamics in network games under conditions such as the adjacency matrix being doubly stochastic and $\|\adjacency\|_2 \leq 1$ are studied in \cite{Grammatico2018Proximal}. 
For time-varying network games sampled from a monotone network game, \cite{feras_learning_LQ_games} provides convergence guarantees for gradient play dynamics. 
Moreover, passivity based approaches for learning in network games have been considered in \cite{Pavel_passivity_networks_TAC}. 
The problem of computing a Nash equilibrium of a graphical game (a general version of a network game) \cite{kearns_2001_graphical_games}, and of public goods games over networks \cite{PAPADIMITRIOU_public_good_directed_network_games} have also been studied in the literature. 
Finally, learning dynamics in special classes of network games have also been considered. 
For example, a particular class of asymmetric network games with a mix of strategic complements and substitutes, called games with coordinating and anti-coordinating agents, is studied in \cite{VANELLI20201Coordinating}. 
A special class of network games, called opinion games, are studied in \cite{BINDEL2015Opinion}. Therein, the authors provide guarantees on the convergence of De-Groot style best-response dynamics and the price of anarchy of the game. 
Network games with discrete action spaces and threshold learning dynamics are studied, for example, in \cite{jackson_behavioral_communities,kempe_diffusion_model}. 
Learning dynamics in \textit{aggregative games}, which are a special class of network games wherein the utility of a player depends on the sum of actions of \emph{all} the players have been considered, for example, in \cite{paccagnan_aggregative_games,ZHU2022Aggregative,Gadjov2021Aggregative}. 
Our work contributes to this literature by ensuring convergence of two learning algorithms to approximate Nash equilibria for a broad class of network games which are neither potential nor monotone, and can have a mix of strategic complements and substitutes. 

Our results are obtained by building on the framework of \emph{near/$\alpha$-potential games}. The concept of near-potential games was introduced in \cite{CANDOGAN_GEB,Candogan_TEAC} for static games. Therein the authors quantified the distance between a game and its nearest potential game using the ``maximum pairwise distance''. The concept of an $\alpha$-potential game was introduced in \cite{guo2025markovalphapotentialgames,guo2025alphapotentialgameframeworknplayer} for (dynamic) Markov games. In this paper, we apply the same notion of $\alpha$-potential games to static games. 
In this setting, any $\alpha$-potential game is a near-potential game as defined in \cite{CANDOGAN_GEB,Candogan_TEAC} with maximum pairwise distance $\alpha$ and vice-versa. 
We provide an explicit form of an $\alpha$-potential function for static games with continuous action spaces by specializing a result from \cite{guo2025alphapotentialgameframeworknplayer} to the static setting. The derived $\alpha$-potential function can be used to construct a potential game that is $\alpha$-close to the original one, yet not necessarily the nearest potential game (as instead studied in \cite{Candogan_TEAC,Candogan_CDC,Matni_CDC_projection_framework}). 
Learning algorithms including best- and better-response dynamics, logit response dynamics and fictitious play for near-potential games have also been considered in the literature (e.g., in \cite{CANDOGAN_GEB,Candogan_CDC,Matni_CDC_projection_framework}). 
We complement these results by analyzing learning algorithms (e.g., approximate gradient dynamics) for $\alpha$-potential games with continuous action spaces and discrete-time updates. Related learning algorithms for Markov $\alpha$-potential games are discussed in \cite{guo2025markovalphapotentialgames}. Finally, a main focus of our work is the application of these results to network games and the characterization of the value of $\alpha$ in the terms of network properties. We note that \cite[Section~6]{guo2025alphapotentialgameframeworknplayer} also applied the framework of $\alpha$-potential games to (dynamic) LQ network games, and discovered the connection of $\alpha$ to network asymmetry. However, the work in \cite{guo2025alphapotentialgameframeworknplayer} focused on deriving an analytical solution of an $\alpha$-Nash equilibrium, while our focus is on independent algorithms for learning $\alpha$-Nash equilibria of static network games.

\textbf{Organization: }
The rest of the paper is organized as follows. In Section \ref{sec: network games and alpha potential function}, we recall the notion of $\alpha$-potential games and study convergence of a sequential best-response algorithm and a simultaneous gradient play algorithm. In Section \ref{sec: LQ network games}, we focus on LQ network games and derive an expression of $\alpha$ for such games, connecting it with the asymmetry in the network. Then, we study the properties of $\alpha$ for various network structures of practical interest. Further, we derive bounds on the welfare suboptimality of the maximizer of an $\alpha$-potential function. In Section \ref{sec: numerical results}, we illustrate our results through numerical simulations. Finally, we give some concluding remarks in Section \ref{sec: conclusion}. Proofs of all the results are given in the appendix.

\textbf{Notation: }
Given a positive integer $\numplayers$, let $[\numplayers] := \{1, 2, \dots, \numplayers\}$. 
Given a vector $\actionvector \in \R^\numplayers$, $a_i$ denotes its $i$th element, while $\actionvector_{-i}$ denotes the vector of all elements of $\actionvector$ except $a_i$. 
$\Projection{A}{b}$ denotes the orthogonal projection of point $b$ onto set the $A$. For a matrix $M \in \R^{\numplayers \times \numplayers}$, $\spectralradius{M}$ denotes its spectral radius. $\mathcal{C}^2(\R^\numplayers)$ denotes the set of twice continuously differentiable functions defined on $\R^\numplayers$.

\section{Learning in \texorpdfstring{$\alpha$}{alpha}-potential games}
\label{sec: network games and alpha potential function}

In this section, we derive a closed-form expression of an $\alpha$-potential function for a general static game. 
We then present two algorithms, a sequential best-response algorithm, and a simultaneous gradient play algorithm with stopping criteria based on $\alpha>0$, whose value depends on the game parameters, and show finite-iteration convergence of the algorithms to the set of $2\alpha$-Nash equilibria of the game.

A static game consists of $\numplayers$ players enumerated using the set $[\numplayers] := \{1, 2, \dots, \numplayers\}$. 
Let the action set of player~$i$ be $\actionset_\playeridx$, and the set of all action profiles be $\actionset := \prod_{\playeridx \in [\numplayers]} \actionset_\playeridx$.
We use $\actionset_{-i} := \prod_{j \neq i} \actionset_j$ to denote the set of all action profiles of players other than player~$i$.  An action $\action_\playeridx \in \actionset_i$ of player $\playeridx$ may represent the level of effort exerted by the player in a group activity \cite{francesca_review}, or the amount of divisible good purchased by a consumer in a society \cite{candogan_optimal_pricing}. 
Given a vector $\actionvector = (\action_1, \dots, \action_\numplayers) \in \actionset$ of actions of all the players, the utility of player $\playeridx$ is given by $\utility_\playeridx(\actionvector) \in \R$.
We make the following assumptions on the action sets and the utility functions of the players. 

\begin{assumption}{(Continuity and compactness)}
\label{assn: continuity and compactness}
    For each $i \in [\numplayers]$: (i) The action set of player $\playeridx$ is a compact interval in $\R$, i.e., $\actionset_\playeridx := [\actionmin_i, \actionmax_i] \subseteq \R$. Let $\actionmax := \sup\{|\action_\playeridx| : \action_\playeridx \in \actionset_\playeridx, i \in [\numplayers]\}$, and $\adelta := \sup\{|\action_\playeridx - \action_\playeridx'| :  \action_\playeridx, \action_\playeridx' \in \actionset_\playeridx, i \in [\numplayers]\}$. (ii) The utility function $\utility_\playeridx \in \mathcal{C}^2(\R^\numplayers)$ and $\max_{\actionvector \in \actionset} \left|\partial/(\partial a_i) \utility_\playeridx(\actionvector)\right| \leq \gradientbound$ for some $\gradientbound > 0$. 
\end{assumption}

Next, we recall the definition of an $\epsilon$-Nash equilibrium of a game. 

\begin{definition}{($\epsilon$-Nash equilibrium)}
\label{def: epsilon nash equilibrium}
    An action vector $\actionvector = (\action_1, \dots, \action_\numplayers) \in \actionset$ is said to be an  $\epsilon$-Nash equilibrium of the game if for all players $\playeridx \in [\numplayers]$ and all actions $\action^\prime_\playeridx \in \actionset_i$,  
    \begin{align*}
        \utility_\playeridx(\actionvector) \geq  \utility_\playeridx(\action^\prime_\playeridx, \actionvector_{-\playeridx}) - \epsilon. 
    \end{align*}
    A $0$-Nash equilibrium is a Nash equilibrium of the game. 
\end{definition}

\subsection{\texorpdfstring{$\alpha$-potential function}{alpha-potential function}}

Our primary tool to characterize approximate Nash equilibria and to study learning algorithms is the concept of an $\alpha$-potential function (see, e.g., \cite{guo2025markovalphapotentialgames} for dynamic Markov games). 

\begin{definition}{($\alpha$-potential function)}
\label{def: alpha potential function}
    A function $\potentialfunction: \actionset \rightarrow \R$ is said to be an $\alpha$-potential function of the game if for all players $\playeridx \in [\numplayers]$ and all actions $\action_\playeridx, \action_\playeridx^\prime \in \actionset_i, \actionvector_{-\playeridx} \in \actionset_{-i}$, it holds that
    \begin{align}
    \label{eq: alpha potential condition}
        \left\vert\left[\utility_\playeridx(\actionvector) - \utility_\playeridx(\action_\playeridx^\prime,\actionvector_{-\playeridx})\right] - \left[\potentialfunction(\actionvector) - \potentialfunction(\action_\playeridx^\prime,\actionvector_{-\playeridx})\right]\right\vert \leq \alpha. 
    \end{align}
    A $0$-potential function of the game is an exact potential function or simply a potential function of the game \cite{shapley_potential}. 
\end{definition}

\begin{remark}
    The parameter $\alpha$ can be intuitively understood as a measure of how far a game deviates from being an exact potential game (see also \cite{CANDOGAN_GEB,Candogan_TEAC}). In exact potential games (in which $\alpha$ = 0), players' incentives are perfectly aligned with a potential function, ensuring that maxima of the potential function are Nash equilibria of the game \cite{shapley_potential}. 
    As $\alpha$ increases, this alignment weakens. 
\end{remark}

The following result provides an explicit form of an $\alpha$-potential function for a static game, and can be obtained by specializing \cite[Theorem~2.5]{guo2025alphapotentialgameframeworknplayer}, which was formulated for dynamic games, to static games. 

\begin{corollary}{(Characterization of an $\alpha$-potential function)}
\label{cor: alpha potential existence}
    Consider a game satisfying Assumption~\ref{assn: continuity and compactness} and the function $\potentialfunction : \actionset \to \R$ specified as 
    \begin{align}
    \label{eq: alpha potential expression}
        \potentialfunction(\actionvector)=\int_{0}^{1} \sum_{\playeridxalt=1}^{\numplayers}  \frac{\partial \utility_\playeridxalt}{\partial \action_\playeridxalt}(\actionvectorfixed + r(\actionvector-\actionvectorfixed)) \cdot (\action_\playeridxalt - \actionfixed_\playeridxalt) \ dr 
    \end{align}
    where $\actionvectorfixed \in \actionset$ is any fixed action vector. 
    Then $\potentialfunction$ is an $\alpha$-potential function of the game (Definition~\ref{def: alpha potential function}) 
    with $\alpha$ given by 
    \begin{align}
        \label{eq: alpha bound} 
        \alpha = \frac{\adelta^2}{2} \max_{\playeridx \in [\numplayers], \actionvector\in\actionset} \sum_{\playeridxalt=1}^\numplayers \left|\frac{\partial^2 \utility_\playeridx}{\partial \action_\playeridx \partial \action_\playeridxalt}(\actionvector) - \frac{\partial^2 \utility_\playeridxalt}{\partial \action_\playeridxalt \partial \action_\playeridx}(\actionvector)\right|. 
    \end{align}
    Further, $\potentialfunction$ is differentiable and for all $\actionvector \in \actionset$, 
    \begin{align}
    \label{eq: phi and u_i derivative relation}
        \left|\frac{\partial \potentialfunction}{\partial \action_\playeridx}(\actionvector) - \frac{\partial \utility_\playeridx}{\partial \action_\playeridx}(\actionvector)\right| \leq \frac{\alpha}{\adelta}. 
    \end{align}
\end{corollary}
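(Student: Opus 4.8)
The plan is to view $\potentialfunction$ in \eqref{eq: alpha potential expression} as the line integral of the vector field whose $\playeridxalt$-th component is $\partial\utility_\playeridxalt/\partial\action_\playeridxalt$, taken along the straight segment from $\actionvectorfixed$ to $\actionvector$ (which stays in $\actionset$ since $\actionset$ is a product of intervals, hence convex). With this viewpoint, the quantity controlled by \eqref{eq: alpha bound} is precisely the ``curl'' of this field, i.e., its antisymmetric part $\partial^2\utility_\playeridx/(\partial\action_\playeridx\partial\action_\playeridxalt) - \partial^2\utility_\playeridxalt/(\partial\action_\playeridxalt\partial\action_\playeridx)$, which vanishes identically exactly when the game is an exact potential game. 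I would first establish the gradient estimate \eqref{eq: phi and u_i derivative relation}, and then obtain the $\alpha$-potential condition \eqref{eq: alpha potential condition} as a short consequence.

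For the gradient estimate, since each $\utility_\playeridxalt\in\mathcal{C}^2(\R^\numplayers)$ and $\actionset$ is compact, the integrand and its $\action_\playeridx$-derivative are continuous, hence bounded, on $[0,1]\times\actionset$; this justifies that $\potentialfunction$ is differentiable and that I may differentiate under the integral sign (Leibniz rule). Differentiating, the explicit factor $(\action_\playeridx-\actionfixed_\playeridx)$ contributes $\int_0^1 \partial\utility_\playeridx/\partial\action_\playeridx\,dr$, while the chain rule on the argument contributes $\int_0^1 r\sum_\playeridxalt \partial^2\utility_\playeridxalt/(\partial\action_\playeridx\partial\action_\playeridxalt)(\actionvectorfixed+r(\actionvector-\actionvectorfixed))(\action_\playeridxalt-\actionfixed_\playeridxalt)\,dr$. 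The key manipulation is integration by parts in $r$ applied to $\tfrac{d}{dr}\big[\partial\utility_\playeridx/\partial\action_\playeridx(\actionvectorfixed+r(\actionvector-\actionvectorfixed))\big]=\sum_\playeridxalt \partial^2\utility_\playeridx/(\partial\action_\playeridxalt\partial\action_\playeridx)(\actionvectorfixed+r(\actionvector-\actionvectorfixed))(\action_\playeridxalt-\actionfixed_\playeridxalt)$, which gives $\partial\utility_\playeridx/\partial\action_\playeridx(\actionvector)-\int_0^1\partial\utility_\playeridx/\partial\action_\playeridx\,dr=\int_0^1 r\sum_\playeridxalt \partial^2\utility_\playeridx/(\partial\action_\playeridxalt\partial\action_\playeridx)(\actionvectorfixed+r(\actionvector-\actionvectorfixed))(\action_\playeridxalt-\actionfixed_\playeridxalt)\,dr$. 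Using this identity to eliminate the diagonal integral term, the two contributions combine into $\partial\potentialfunction/\partial\action_\playeridx(\actionvector)-\partial\utility_\playeridx/\partial\action_\playeridx(\actionvector)=\int_0^1 r\sum_\playeridxalt\big[\partial^2\utility_\playeridxalt/(\partial\action_\playeridx\partial\action_\playeridxalt)-\partial^2\utility_\playeridx/(\partial\action_\playeridxalt\partial\action_\playeridx)\big](\actionvectorfixed+r(\actionvector-\actionvectorfixed))(\action_\playeridxalt-\actionfixed_\playeridxalt)\,dr$. By Clairaut's theorem (valid as $\utility_\playeridx,\utility_\playeridxalt\in\mathcal{C}^2$), the bracket equals, up to sign, the summand in \eqref{eq: alpha bound}.

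I would then bound the right-hand side using $|\action_\playeridxalt-\actionfixed_\playeridxalt|\le\adelta$, $\int_0^1 r\,dr=\tfrac12$, and the maximum defining $\alpha$ in \eqref{eq: alpha bound}, which yields $|\partial\potentialfunction/\partial\action_\playeridx-\partial\utility_\playeridx/\partial\action_\playeridx|\le \tfrac{\adelta}{2}\cdot\tfrac{2\alpha}{\adelta^2}=\tfrac{\alpha}{\adelta}$, establishing \eqref{eq: phi and u_i derivative relation}. Finally, for \eqref{eq: alpha potential condition} I write both the utility and the potential differences as integrals over the single coordinate $\action_\playeridx$, obtaining $[\utility_\playeridx(\actionvector)-\utility_\playeridx(\action_\playeridx',\actionvector_{-\playeridx})]-[\potentialfunction(\actionvector)-\potentialfunction(\action_\playeridx',\actionvector_{-\playeridx})]=\int_{\action_\playeridx'}^{\action_\playeridx}\big[\partial\utility_\playeridx/\partial\action_\playeridx-\partial\potentialfunction/\partial\action_\playeridx\big](t,\actionvector_{-\playeridx})\,dt$, and then bound the integrand by $\alpha/\adelta$ via \eqref{eq: phi and u_i derivative relation} and the interval length by $|\action_\playeridx-\action_\playeridx'|\le\adelta$, giving exactly $\alpha$.

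The main obstacle is the integration-by-parts step: one must correctly isolate the diagonal line-integral term, keep the index placement on the mixed partials straight, and invoke symmetry of second derivatives so that the residual reduces precisely to the antisymmetric curl quantity in \eqref{eq: alpha bound}. Once that representation is in hand, the two estimates that follow are routine.
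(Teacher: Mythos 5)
Your proposal is correct and follows essentially the same route as the paper's proof: differentiate under the integral via the Leibniz rule, integrate by parts in $r$ to isolate the antisymmetric (``curl'') residual $\sum_\playeridxalt\bigl[\partial^2\utility_\playeridxalt/(\partial\action_\playeridxalt\partial\action_\playeridx)-\partial^2\utility_\playeridx/(\partial\action_\playeridx\partial\action_\playeridxalt)\bigr](\action_\playeridxalt-\actionfixed_\playeridxalt)$, bound it to get \eqref{eq: phi and u_i derivative relation}, and then obtain \eqref{eq: alpha potential condition} by integrating the single-coordinate gradient discrepancy over $[\action_\playeridx',\action_\playeridx]$. The only cosmetic difference is that you derive \eqref{eq: alpha potential condition} from the already-established pointwise bound \eqref{eq: phi and u_i derivative relation}, whereas the paper bounds the resulting double integral directly; the two chains of inequalities are identical.
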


The proof of Corollary \ref{cor: alpha potential existence} follows similar steps as that of \cite[Theorem~2.5]{guo2025alphapotentialgameframeworknplayer} on Markov games. For completeness, we provide a proof in Appendix~\ref{sec: proof of theorem alpha potential existence} by specializing the arguments to static games.  
In Section \ref{sec: LQ network games}, we provide a simplified expression of $\alpha$ for LQ network games, and relate it to the asymmetry of the network adjacency matrix. 

Next, we present two algorithms using which players can learn a $2\alpha$-Nash equilibrium. By learning we mean that the players update their actions so that the action profile converges to a $2\alpha$-Nash equilibrium.

\subsection{Sequential best-response algorithm}

The first algorithm we propose, Algorithm~\ref{alg: sequential best response}, is a slight modification of the well-known sequential best-response algorithm: At each time step, at most one player can update his action by playing his best response to the current set of actions of all the other players, \emph{provided he obtains a utility that is larger than his current utility by $\alpha + \epsilon$}, where $\epsilon > 0$. The update step is motivated by the following assumption on players' behaviour. 

\vspace{1ex}
\noindent \textit{Stopping criterion rule.} \textit{For all players $\playeridx\in[\numplayers]$, player $\playeridx$ will not deviate from an action vector $\actionvector \in \actionset$ if
$\max_{\action^\prime_\playeridx \in \actionset_i }\utility_\playeridx(\action^\prime_\playeridx,\actionvector_{-i})-\utility_\playeridx(\action_\playeridx,\actionvector_{-i}) \leq \alpha + \epsilon$, i.e., if the improvement in his utility is at most $\alpha + \epsilon$.}
\vspace{1ex}

\begin{algorithm}[t]
    \begin{algorithmic}[1]
        \STATE Fix $\alpha > 0$, $\epsilon > 0$. For each $i \in [\numplayers]$, initialize $a_i^0 \in \actionset_i$ arbitrarily.  
        \FOR{$\timeindex = 0, 1, 2,  \dots$}
        \IF{$\exists i \in [\numplayers]$ such that $\max_{a_i \in \actionset_i} \utility_\playeridx(a_i, \actionvector_{-i}^k) > \utility_\playeridx(\actionvector^k) + \alpha + \epsilon$}
        \STATE Pick any such $i \in [\numplayers]$. 
        \STATE Update $\action_{\playeridx}^{\timeindex+1} = \argmax{\action_\playeridx \in \actionset_i}\utility_\playeridx(\action_\playeridx,\actionvector_{-\playeridx}^{\timeindex})$, and fix $a_j^{k+1} = a_j^k$ for all $j \neq i$. 
        \ELSE 
        \STATE $\actionvector^{k+1} = \actionvector^k$. 
        \ENDIF
        \ENDFOR
    \end{algorithmic}
    \caption{Sequential best-response conditioned on utility improvement}
    \label{alg: sequential best response}
\end{algorithm}

Our first main result guarantees convergence of the action profiles learned via Algorithm~\ref{alg: sequential best response} to an $(\alpha + \epsilon)$-Nash equilibrium in finitely many iterations for any game with an $\alpha$-potential function. 

\begin{theorem}{(Convergence of Algorithm~\ref{alg: sequential best response})}
\label{thm: sequential best response convergence}
    Consider a network game satisfying Assumption~\ref{assn: continuity and compactness} with an $\alpha$-potential function $\potentialfunction$ (Definition~\ref{def: alpha potential function}) for some $\alpha > 0$. Let $\potentialfunction^{\max} := \max_{\actionvector \in \actionset} |\potentialfunction(\actionvector)| < \infty$, and $\bar{k} := \lceil\frac{2\potentialfunction^{\max}}{\epsilon} \rceil$. Let $\{\actionvector^k\}_{k = 0}^\infty$ be a sequence of action profiles generated by Algorithm~\ref{alg: sequential best response} for any fixed $\epsilon > 0$. Then, for all $k \geq \bar{k}$, $\actionvector^k = \actionvector^*$ where $\actionvector^*$ is an $(\alpha + \epsilon)$-Nash equilibrium of the game (Definition~\ref{def: epsilon nash equilibrium}). 
\end{theorem}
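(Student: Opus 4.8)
The plan is to run a standard potential-ascent argument, using the $\alpha$-potential function $\potentialfunction$ as a monotone progress measure and exploiting the gap $\epsilon$ between the update threshold $\alpha+\epsilon$ and the slack $\alpha$ permitted by Definition~\ref{def: alpha potential function}. The whole proof rests on one observation: every genuine best-response update strictly raises $\potentialfunction$ by more than $\epsilon$. Combined with the boundedness of $\potentialfunction$, this forces only finitely many updates, and the failure of the update test at the terminal profile delivers the $(\alpha+\epsilon)$-Nash property.

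First I would establish the one-step lemma. Suppose the \emph{if}-branch of Algorithm~\ref{alg: sequential best response} is taken at iteration $\timeindex$, so some player $\playeridx$ updates. The update condition gives $\utility_\playeridx(\actionvector^{\timeindex+1}) = \max_{\action_\playeridx \in \actionset_\playeridx}\utility_\playeridx(\action_\playeridx,\actionvector_{-\playeridx}^{\timeindex}) > \utility_\playeridx(\actionvector^{\timeindex}) + \alpha + \epsilon$, and since $\actionvector^{\timeindex+1}$ and $\actionvector^{\timeindex}$ differ only in coordinate $\playeridx$, the $\alpha$-potential inequality \eqref{eq: alpha potential condition} applied to this pair yields
\begin{align*}
\potentialfunction(\actionvector^{\timeindex+1}) - \potentialfunction(\actionvector^{\timeindex}) \geq \big[\utility_\playeridx(\actionvector^{\timeindex+1}) - \utility_\playeridx(\actionvector^{\timeindex})\big] - \alpha > \epsilon .
\end{align*}
Iterations taking the \emph{else}-branch leave the profile, and hence $\potentialfunction$, unchanged.

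Second I would convert this into finite termination. Since $|\potentialfunction|\leq\potentialfunction^{\max}$ on $\actionset$ (finite by continuity of $\potentialfunction$ on the compact set $\actionset$ under Assumption~\ref{assn: continuity and compactness}), the total increase of $\potentialfunction$ along the entire trajectory is at most $2\potentialfunction^{\max}$; as each update contributes strictly more than $\epsilon$, the number of \emph{if}-branch iterations is strictly less than $2\potentialfunction^{\max}/\epsilon \leq \bar{k}$. Hence at most $\bar{k}-1$ iterations are updates, so at least one of the iterations $0,1,\dots,\bar{k}-1$ must take the \emph{else}-branch. Let $k^*$ be the first such iteration. Because the \emph{else}-branch sets $\actionvector^{k^*+1}=\actionvector^{k^*}$, the identical termination test recurs on the identical profile at every subsequent step, so by induction $\actionvector^{\timeindex}=\actionvector^{k^*}=:\actionvector^*$ for all $\timeindex\geq k^*$; since $k^*\leq\bar{k}-1$, this holds in particular for all $\timeindex\geq\bar{k}$.

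Finally I would verify that $\actionvector^*$ is an $(\alpha+\epsilon)$-Nash equilibrium. At $\actionvector^*$ the \emph{if}-condition fails for every player, i.e.\ $\max_{\action_\playeridx\in\actionset_\playeridx}\utility_\playeridx(\action_\playeridx,\actionvector^*_{-\playeridx}) \leq \utility_\playeridx(\actionvector^*)+\alpha+\epsilon$ for all $\playeridx\in[\numplayers]$; rearranging gives $\utility_\playeridx(\actionvector^*)\geq\utility_\playeridx(\action_\playeridx',\actionvector^*_{-\playeridx})-(\alpha+\epsilon)$ for every $\action_\playeridx'\in\actionset_\playeridx$, which is exactly Definition~\ref{def: epsilon nash equilibrium} with tolerance $\alpha+\epsilon$. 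The argument is essentially bookkeeping once the one-step lemma is in place; the step I would be most careful about is the counting/freezing argument that pins termination to \emph{within} $\bar{k}$ iterations rather than merely ``eventually.'' The crucial quantitative input is that the update threshold exceeds the $\alpha$-potential slack by precisely $\epsilon$, which is what guarantees a per-update gain bounded below by $\epsilon$ and hence the explicit bound $\bar{k}=\lceil 2\potentialfunction^{\max}/\epsilon\rceil$.
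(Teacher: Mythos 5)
Your proposal is correct and follows essentially the same route as the paper's proof: the one-step lemma that each accepted best-response raises $\potentialfunction$ by more than $\epsilon$ (via Definition~\ref{def: alpha potential function}), the counting argument bounding the number of updates by $2\potentialfunction^{\max}/\epsilon$, and the observation that failure of the update test at the frozen profile is exactly the $(\alpha+\epsilon)$-Nash condition. Your treatment of the freezing step (the \emph{else}-branch recurring on the identical profile) is slightly more explicit than the paper's, but the argument is the same.
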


Note that Theorem~\ref{thm: sequential best response convergence} holds with the $\alpha$-potential function given in~\eqref{eq: alpha potential expression} and $\alpha$ in~\eqref{eq: alpha bound}. 
The proof of Theorem~\ref{thm: sequential best response convergence} given in Appendix~\ref{sec: proof of theorem sequential best-response convergence} is a simple extension of the idea of the ``approximate finite improvement property'' \cite{shapley_potential}: by definition of an $\alpha$-potential function, we show that at every step of Algorithm~\ref{alg: sequential best response}, improving the utility of a player by $\alpha + \epsilon$ increases the $\alpha$-potential function by $\epsilon > 0$. 

\begin{remark}
    Results similar to Theorem~\ref{thm: sequential best response convergence} have been proven earlier for the case of finite action spaces for both discrete-time \cite[Theorem~3.1]{CANDOGAN_GEB} and continuous-time best-response dynamics \cite[Theorem~5.7]{Candogan_TEAC}, as well as for games with continuous action spaces and polynomial utility functions \cite[Proposition~2]{Matni_CDC_projection_framework}. We state our result for the case of continuous action spaces, discrete-time best-response dynamics and any set of utility functions satisfying Assumption \ref{assn: continuity and compactness}.  
\end{remark}

Note that Theorem~\ref{thm: sequential best response convergence} only ensures convergence to an $(\alpha + \epsilon)$-Nash equilibrium of the game. 
However, the result applies to a very broad class of network games. Specifically, it does not require concavity of the utility functions or monotonicity of the game as in~\cite{francesca_VI_network_games}, or symmetry of the network or non-negativity of $G_{ij}$'s as in~\cite{bramoulle_strategic_interactions}, which are standard assumptions in the network games literature.

\subsection{Simultaneous gradient play algorithm}
\label{sec: gradient play algorithm}

Our next algorithm, Algorithm~\ref{alg: gradient play}, is a variant of the simultaneous gradient play algorithm. In this algorithm, at each iteration, players simultaneously update their actions based on feedback on the gradient of their utility function, \emph{provided their gradient is large enough}. 

\begin{algorithm}[t]
    \begin{algorithmic}[1]
        \STATE Fix $\alpha > 0$. Let $\adelta := \sup\{|\action_\playeridx - \action_\playeridx'| :  \action_\playeridx, \action_\playeridx' \in \actionset_\playeridx, i \in [\numplayers]\}$. For each $i \in [\numplayers]$, fix $\stepsize_i > 0$ and initialize $\action_\playeridx^0 \in \actionset_i$ arbitrarily.  
        \FOR{$\timeindex = 0, 1, 2, \dots$}
        \FOR{$\playeridx \in [\numplayers]$ }
        \IF{$\left|\partial/(\partial a_i)\utility_\playeridx(\actionvector^{\timeindex})\right| > 2\alpha/\adelta$}
        \STATE Update $\action_{\playeridx}^{\timeindex+1} = \Projection{\actionset_i}{\action_{\playeridx}^{\timeindex} + \stepsize_\playeridx \partial/(\partial a_i)\utility_\playeridx(\actionvector^{\timeindex})}$. 
        \ELSE
        \STATE $\action_{\playeridx}^{\timeindex+1} = \action_{\playeridx}^{\timeindex}$. 
        \ENDIF
        \ENDFOR
        \ENDFOR
    \end{algorithmic}
    \caption{Simultaneous gradient play conditioned on gradient value}
    \label{alg: gradient play}
\end{algorithm}

Our next main result guarantees that players learn a $2\alpha$-Nash equilibrium of the game using Algorithm~\ref{alg: gradient play}, provided the step sizes used by the players are small enough. 
We will need the following assumption on the utility functions of the game to ensure that any local $\epsilon$-Nash equilibrium is also a global $\epsilon$-Nash equilibrium of the game. 

\begin{assumption}{(Concavity)}
\label{assn: utility functions concave}
    For all $i \in [\numplayers]$, the utility function $\utility_\playeridx(a_i, \actionvector_{-i})$ is concave in $a_i$ for every fixed $\actionvector_{-i} \in \actionset_{-i}$. 
\end{assumption}

\begin{theorem}{(Convergence of Algorithm~\ref{alg: gradient play})}
\label{thm: gradient play convergence}
     Consider a game satisfying Assumptions~\ref{assn: continuity and compactness} and \ref{assn: utility functions concave}. Let $\potentialfunction$ be an $L$-smooth $\alpha$-potential function of the game with $\alpha > 0$, $L > 0$ (see Definition~\ref{def: alpha potential function}) that is bounded from above and
     satisfies the inequality \eqref{eq: phi and u_i derivative relation}. Let $\stepsizemin := \min_{i \in [\numplayers]} \stepsize_i > 0$, $\stepsizemax := \max_{i \in [\numplayers]} \stepsize_i$, $\constant_1 := 1/\adelta$ and $\constant_2 := \min\{2 \constant_1/\lipschitz,  1/\gradientbound\}$. Suppose $\stepsizemax \leq \min\left\{1/\lipschitz, \ 2\constant_1^2\alpha^2/(\lipschitz\gradientbound^2), \ \lipschitz \constant_2^2 \alpha^2 / (2 (\constant_1 \alpha \gradientbound + \constant_1^2 \alpha^2))\right\}$. Let $\{\actionvector^k\}_{k = 0}^\infty$ be the sequence of action profiles generated by Algorithm~\ref{alg: gradient play}. 
     Then, $\exists \bar{k} \geq 0$ such that for all $k \geq \bar{k}$, $\actionvector^k$ is a $2\alpha$-Nash equilibrium of the game (Definition~\ref{def: epsilon nash equilibrium}). 
\end{theorem}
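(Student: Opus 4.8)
The plan is to use the $\alpha$-potential function $\potentialfunction$ as a Lyapunov function, exploiting the fact that—by the derivative relation \eqref{eq: phi and u_i derivative relation}—the direction along which each player ascends $\utility_\playeridx$ is, up to an error of at most $\alpha/\adelta$, also an ascent direction for $\potentialfunction$. I would organize the argument into three pieces: (i) a characterization of any ``stopped'' configuration as a $2\alpha$-Nash equilibrium, (ii) a per-step lower bound on the increase of $\potentialfunction$, and (iii) a counting argument using boundedness of $\potentialfunction$ from above.

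First I would handle the equilibrium characterization, which is where Assumption~\ref{assn: utility functions concave} enters. Fix $\actionvector^\timeindex$ and a player $\playeridx$, write $g_\playeridx := \partial\utility_\playeridx/\partial \action_\playeridx(\actionvector^\timeindex)$, and take any $\action_\playeridx' \in \actionset_\playeridx$. Concavity gives $\utility_\playeridx(\action_\playeridx', \actionvector_{-\playeridx}^\timeindex) - \utility_\playeridx(\actionvector^\timeindex) \le g_\playeridx (\action_\playeridx' - \action_\playeridx^\timeindex)$. I would prove the contrapositive of the equilibrium claim: if $\actionvector^\timeindex$ is \emph{not} a $2\alpha$-Nash equilibrium, then for some $\playeridx$ and some $\action_\playeridx'$ we have $g_\playeridx(\action_\playeridx' - \action_\playeridx^\timeindex) > 2\alpha$; since $|\action_\playeridx' - \action_\playeridx^\timeindex| \le \adelta$ this forces $|g_\playeridx| > 2\alpha/\adelta$ (so player $\playeridx$ is flagged to update in Algorithm~\ref{alg: gradient play}) and guarantees strict room to move toward $\action_\playeridx'$ along the gradient direction, so the projected step yields a genuine move $\potentialincrement_\playeridx := \action_\playeridx^{\timeindex+1} - \action_\playeridx^\timeindex \neq 0$. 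Conversely, if every player is either unflagged ($|g_\playeridx| \le 2\alpha/\adelta$) or flagged-but-clipped (at a boundary with gradient pointing outward), then concavity gives $\utility_\playeridx(\action_\playeridx', \actionvector_{-\playeridx}^\timeindex) - \utility_\playeridx(\actionvector^\timeindex) \le |g_\playeridx|\,|\action_\playeridx' - \action_\playeridx^\timeindex| \le (2\alpha/\adelta)\adelta = 2\alpha$ in the first case and $\le 0$ in the second, so $\actionvector^\timeindex$ is a $2\alpha$-Nash equilibrium.

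Next I would quantify the move made by a deviating player and the resulting gain in $\potentialfunction$. From the two cases above I would extract a uniform lower bound $|\potentialincrement_\playeridx| \ge \constant_2\alpha$: in the unclipped case $|\potentialincrement_\playeridx| = \stepsize_\playeridx|g_\playeridx| \ge \stepsizemin\cdot 2\alpha/\adelta$, while in the clipped case the distance to the boundary exceeds $2\alpha/|g_\playeridx| \ge 2\alpha/\gradientbound$ using the gradient bound $\gradientbound$ of Assumption~\ref{assn: continuity and compactness}. For the potential gain, $L$-smoothness gives $\potentialfunction(\actionvector^{\timeindex+1}) \ge \potentialfunction(\actionvector^\timeindex) + \sum_\playeridx h_\playeridx \potentialincrement_\playeridx - \tfrac{\lipschitz}{2}\sum_\playeridx \potentialincrement_\playeridx^2$, with $h_\playeridx := \partial\potentialfunction/\partial \action_\playeridx(\actionvector^\timeindex)$. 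For each moving player the projection inequality yields $g_\playeridx\potentialincrement_\playeridx \ge \potentialincrement_\playeridx^2/\stepsize_\playeridx$ with $\mathrm{sign}(\potentialincrement_\playeridx) = \mathrm{sign}(g_\playeridx)$; combining $|h_\playeridx - g_\playeridx| \le \alpha/\adelta$ from \eqref{eq: phi and u_i derivative relation} with the flag threshold $|g_\playeridx| > 2\alpha/\adelta$ gives $h_\playeridx\potentialincrement_\playeridx \ge \tfrac12 g_\playeridx\potentialincrement_\playeridx \ge \potentialincrement_\playeridx^2/(2\stepsize_\playeridx)$, so $\potentialfunction(\actionvector^{\timeindex+1}) - \potentialfunction(\actionvector^\timeindex) \ge \sum_\playeridx\big(\tfrac{1}{2\stepsize_\playeridx} - \tfrac{\lipschitz}{2}\big)\potentialincrement_\playeridx^2 \ge 0$ whenever $\stepsizemax \le 1/\lipschitz$. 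The two remaining step-size bounds (involving $\constant_1, \constant_2, \alpha, \lipschitz, \gradientbound$) are precisely what is needed to absorb the mismatch terms and the clipped-step lower bound and promote this into a fixed strictly positive increment $\potentialfunction(\actionvector^{\timeindex+1}) - \potentialfunction(\actionvector^\timeindex) \ge c > 0$ at every step where some player actually moves. Finally I would close by counting: since $\potentialfunction$ is non-decreasing and bounded above, its total increase is finite, so a gain of at least $c$ can occur only finitely often; by step~(i) a step can fail to be a $2\alpha$-Nash equilibrium only when some player moves, hence only finitely many times, and taking $\bar\timeindex$ past the last such step gives the claim. I expect step~(ii) to be the main obstacle: reconciling that players ascend their \emph{own} utilities rather than $\potentialfunction$ (handled via \eqref{eq: phi and u_i derivative relation} and the threshold $2\alpha/\adelta$) while simultaneously controlling boundary clipping and the cross-player coupling absorbed by the $L$-smoothness term is exactly what dictates the intricate upper bounds on $\stepsizemax$.
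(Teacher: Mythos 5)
Your overall strategy (use $\potentialfunction$ as a Lyapunov function, show a per-step gain, count using boundedness from above, and characterize the terminal iterates via concavity) is the same as the paper's, but there is a genuine gap in the link between steps (i) and (ii): your claim that $\potentialfunction$ increases by a \emph{fixed} $c>0$ ``at every step where some player actually moves'' is false, and your equilibrium characterization does not cover all non-moving-relevant cases. The problematic configuration is a player $\playeridx$ whose current action lies strictly inside $\actionset_\playeridx$ but arbitrarily close to the boundary (say within $\varepsilon$ of $\actionmax_\playeridx$), with $\partial\utility_\playeridx/\partial\action_\playeridx>2\alpha/\adelta$ pointing outward. This player is flagged, the projected step is a genuine move of size exactly $\actionmax_\playeridx-\action_\playeridx^\timeindex\le\varepsilon$, and the resulting contribution to $\potentialfunction$ is at most $(\gradientbound+\alpha/\adelta)\varepsilon$, which can be made arbitrarily small; so no uniform $c>0$ exists over ``moving'' steps, and your claimed uniform move lower bound $|\potentialincrement_\playeridx|\ge\constant_2\alpha$ (via ``the distance to the boundary exceeds $2\alpha/|g_\playeridx|$'') fails for exactly this player. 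The same player also falls through your step (i): he is neither unflagged nor ``at a boundary,'' and for him concavity gives only $\utility_\playeridx(\action_\playeridx',\actionvector^\timeindex_{-\playeridx})-\utility_\playeridx(\actionvector^\timeindex)\le g_\playeridx(\actionmax_\playeridx-\action_\playeridx^\timeindex)$, which is positive, not $\le 0$.

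The repair is precisely the finer, three-way classification the paper uses. First, a player who certifies failure of the $2\alpha$-Nash condition satisfies $g_\playeridx(\action_\playeridx'-\action_\playeridx^\timeindex)>2\alpha$ for a feasible $\action_\playeridx'$, which forces not only $|g_\playeridx|>2\alpha/\adelta$ (as you note) but also $|\action_\playeridx'-\action_\playeridx^\timeindex|>2\alpha/|g_\playeridx|\ge 2\alpha/\gradientbound\ge\constant_2\alpha$; hence that player sits at distance at least $\constant_2\alpha$ from the boundary he is being pushed toward, i.e.\ in the paper's ``interior'' region $\regionone$, where the step-size conditions do yield a uniform gain $\potentialincrement_\playeridx^\timeindex\ge\stepsizemin\constant_1^2\alpha^2$. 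Second, flagged players within $\constant_2\alpha$ of the boundary (region $\regiontwo$) are shown to contribute $\potentialincrement_\playeridx^\timeindex\ge 0$ (not $\ge c$), and simultaneously to satisfy the $2\alpha$-best-response condition because $\utility_\playeridx(\action_\playeridx',\actionvector^\timeindex_{-\playeridx})-\utility_\playeridx(\actionvector^\timeindex)\le\gradientbound\cdot\constant_2\alpha\le\alpha$ using $\constant_2\le 1/\gradientbound$. The counting argument then applies to occurrences of ``some player lies in $\regionone$,'' not to ``some player moves.'' Your draft contains all the needed ingredients (notably the inequality $|\action_\playeridx'-\action_\playeridx^\timeindex|>2\alpha/\gradientbound$), but as written the dichotomy moving/not-moving is too coarse to close the argument.
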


Note that Theorem~\ref{thm: gradient play convergence} holds with the $\alpha$-potential function given in~\eqref{eq: alpha potential expression} and $\alpha$ in~\eqref{eq: alpha bound}  provided $\utility_\playeridx \in \mathcal{C}^3(\R^\numplayers)$ for all players $i$ (see Lemma \ref{lem: alpha potential properties} in Appendix \ref{sec: appendix intermediate results}). 
The proof of Theorem~\ref{thm: gradient play convergence} given in Appendix~\ref{sec: proof of theorem simultenous gradient play} proceeds by showing that, at every iteration $k$, either the value of the $\alpha$-potential function increases by a positive amount independent of $k$, or the trajectory is in a region of the action sets in which all action profiles are $2\alpha$-Nash equilibria. Since the $\alpha$-potential function is bounded from above, it follows that the trajectory must be a $2\alpha$-Nash equilibrium after at most finitely many iterations. 

\begin{remark}
    Exact gradient dynamics have been extensively studied for (exact) potential and monotone games (see, e.g., \cite{Mazumdar_2020},\cite{facchinei2003finite} among others), ensuring convergence to local or global Nash equilibria. 
    Theorem~\ref{thm: gradient play convergence} instead shows convergence of modified gradient dynamics to the set of $2\alpha$-Nash equilibria for games with an $\alpha$-potential function. 
\end{remark}

Next, we study a special class of network games called linear-quadratic (LQ) network games which help us connect the value of $\alpha$ in the $\alpha$-potential function of a network game to the asymmetry in the network structure. 
Note that the algorithms above can be directly applied to any network game satisfying Assumption \ref{assn: continuity and compactness} (and \ref{assn: utility functions concave}).

\section{Linear-quadratic (LQ) network games}
\label{sec: LQ network games}

In a network game, we assume that players are part of a network so that the utility obtained by a player depends on his own action and on the actions of his neighbors in the network. In particular, let $\adjacency \in \R^{\numplayers \times \numplayers}$ be the weighted adjacency matrix of the network such that $\adjacency_{\playeridx\playeridxalt} \neq 0$ if the utility of player $\playeridx$ depends on player $\playeridxalt$'s action, and $\adjacency_{\playeridx\playeridxalt} = 0$ otherwise.
We use the convention that $G_{ij} > 0$ if and only if there exists an edge from node $j$ to node $i$ in the underlying network. 
We fix $\adjacency_{\playeridx\playeridx} = 0$ for all $\playeridx \in [\numplayers]$. Let $\localaggregate_\playeridx(\actionvector_{-i}) := \sum_{\playeridxalt = 1}^\numplayers \adjacency_{\playeridx\playeridxalt}\action_\playeridxalt$ denote the local aggregate of player $\playeridx$. 
Given a vector $\actionvector \in \actionset$ of actions of all the players, the utility of player $\playeridx$ is given by the function $\utilityalt_\playeridx(\action_\playeridx,\localaggregate_\playeridx(\actionvector_{-i}))$. For ease of notation, we define $\utility_\playeridx(\actionvector) = \utilityalt_\playeridx(\action_\playeridx,\localaggregate_\playeridx(\actionvector_{-i}))$.\footnote{Note that $\utilityalt_i : \R^2 \to \R$, while $\utility_\playeridx : \R^\numplayers \to \R$.}

In the following, we focus on LQ network games. In an LQ network game, the utility function for each player $i \in [\numplayers]$ is given by
\begin{align}
    \label{eq: LQ utility}\utility_\playeridx(\actionvector) = -\frac{1}{2} \action_\playeridx^2 + \beta_i \action_\playeridx + \gamma \localaggregate_i(\actionvector_{-i}) \action_\playeridx, 
\end{align}
where $\beta_i \in \R$ is player~$i$'s bias towards his own action and $\gamma \in \R$ is the player's affinity towards the action of his neighbours in the network.\footnote{Even though we assume that the parameter $\gamma$ is the same for all players, each player $i \in [\numplayers]$ may have a different set of values $\{G_{ij} : j \in [\numplayers]\}$. In that sense, the utility function in \eqref{eq: LQ utility} is equivalent to a general LQ utility function.} These games are interesting since (i) they can model various scenarios such as a opinion formation in a social network \cite{BINDEL2015Opinion}, purchasing certain quantities of a good in a social network \cite{candogan_optimal_pricing}, choosing the amount of effort in a group activity \cite{jackson_chapter}, etc., and (ii) the LQ structure in the utility function helps in computations (e.g., leading to a closed-form expression of a Nash equilibrium in~\cite{Ballester_2006}). In this section, we exploit this structure to provide a direct connection between the value of $\alpha$ and the degree of asymmetry in the network adjacency matrix~$\adjacency$.

\subsection{\texorpdfstring{$\alpha$-potential function}{alpha-potential function}}

Given an LQ network game, consider the function $\potentialfunction: \actionset \rightarrow \R$ such that
\begin{align}
\label{eq: LQ alpha potential function}
    \potentialfunction(a) := -\frac{1}{2} \sum_{i = 1}^\numplayers a_i^2 + \sum_{i = 1}^\numplayers \beta_i a_i + \frac{\gamma}{2} \sum_{i = 1}^\numplayers \left(\sum_{j = 1}^\numplayers G_{ij} a_j\right) a_i.  
\end{align}
It is known that $\potentialfunction$ is an exact potential function for the network game if the network is symmetric i.e., $\adjacency = \adjacency^T$ (see \cite{bramoulle_strategic_interactions}). We show that $\potentialfunction$ is an $\alpha$-potential function of the game if the network is asymmetric, and relate $\alpha$ to the degree of asymmetry in the network. 

\begin{proposition}
\label{prop: alpha potential expression LQ}
    For an LQ network game satisfying Assumption~\ref{assn: continuity and compactness}, the function $\potentialfunction$ defined in \eqref{eq: LQ alpha potential function} is an $\alpha$-potential function of the game with 
    \begin{align}
    \label{eq: alpha bound LQ game}
        \alpha = \frac{\abar \adelta |\gamma|}{2} \|\adjacency - \adjacency^T\|_\infty.  
    \end{align} 
\end{proposition}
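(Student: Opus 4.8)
The plan is to verify the defining inequality \eqref{eq: alpha potential condition} directly from the explicit LQ form \eqref{eq: LQ utility}--\eqref{eq: LQ alpha potential function}, rather than invoking the general bound \eqref{eq: alpha bound} of Corollary~\ref{cor: alpha potential existence}. One could apply that corollary: computing the mixed second derivatives gives $\partial^2 \utility_\playeridx/(\partial \action_\playeridx \partial \action_\playeridxalt) = \gamma \adjacency_{\playeridx\playeridxalt}$ for $\playeridxalt \neq \playeridx$, so $\partial^2\utility_\playeridx/(\partial\action_\playeridx\partial\action_\playeridxalt) - \partial^2\utility_\playeridxalt/(\partial\action_\playeridxalt\partial\action_\playeridx) = \gamma(\adjacency_{\playeridx\playeridxalt}-\adjacency_{\playeridxalt\playeridx})$, and \eqref{eq: alpha bound} would yield $\frac{\adelta^2|\gamma|}{2}\|\adjacency-\adjacency^\transpose\|_\infty$. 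The direct route instead produces exactly the stated constant, with one factor $\adelta$ coming from an action difference and the factor $\abar$ from the magnitude of the actions entering the local aggregate (a sharper constant than $\adelta^2$ for the centered action sets typical here, and one that does not require $0\in\actionset$).

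First I would fix a player $\playeridx$, actions $\action_\playeridx,\action_\playeridx'\in\actionset_\playeridx$ and $\actionvector_{-\playeridx}$, and evaluate the two bracketed differences in \eqref{eq: alpha potential condition} separately. Since only $\action_\playeridx$ varies and $\adjacency_{\playeridx\playeridx}=0$, the local aggregate $z_\playeridx=\sum_\playeridxalt \adjacency_{\playeridx\playeridxalt}\action_\playeridxalt$ does not depend on $\action_\playeridx$, so the utility difference collapses to
\[
\utility_\playeridx(\actionvector)-\utility_\playeridx(\action_\playeridx',\actionvector_{-\playeridx}) = -\tfrac12\bigl(\action_\playeridx^2-(\action_\playeridx')^2\bigr) + \beta_\playeridx(\action_\playeridx-\action_\playeridx') + \gamma z_\playeridx(\action_\playeridx-\action_\playeridx').
\]
Next I would compute $\potentialfunction(\actionvector)-\potentialfunction(\action_\playeridx',\actionvector_{-\playeridx})$. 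The delicate term is the quadratic form $\frac{\gamma}{2}\sum_k(\sum_l \adjacency_{kl}\action_l)\action_k = \frac{\gamma}{2}\actionvector^\transpose \adjacency\,\actionvector$: perturbing $\action_\playeridx$ touches it both through the row index $k=\playeridx$ and the column index $l=\playeridx$. Collecting both contributions gives the symmetric combination
\[
\potentialfunction(\actionvector)-\potentialfunction(\action_\playeridx',\actionvector_{-\playeridx}) = -\tfrac12\bigl(\action_\playeridx^2-(\action_\playeridx')^2\bigr) + \beta_\playeridx(\action_\playeridx-\action_\playeridx') + \tfrac{\gamma}{2}(z_\playeridx+w_\playeridx)(\action_\playeridx-\action_\playeridx'),
\]
where $w_\playeridx:=\sum_k \adjacency_{k\playeridx}\action_k$ is the ``column'' aggregate, also independent of $\action_\playeridx$.

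Subtracting the two displays, the quadratic and linear-bias terms cancel and only the asymmetry survives:
\[
\bigl[\utility_\playeridx(\actionvector)-\utility_\playeridx(\action_\playeridx',\actionvector_{-\playeridx})\bigr]-\bigl[\potentialfunction(\actionvector)-\potentialfunction(\action_\playeridx',\actionvector_{-\playeridx})\bigr] = \tfrac{\gamma}{2}(\action_\playeridx-\action_\playeridx')(z_\playeridx-w_\playeridx) = \tfrac{\gamma}{2}(\action_\playeridx-\action_\playeridx')\sum_\playeridxalt(\adjacency_{\playeridx\playeridxalt}-\adjacency_{\playeridxalt\playeridx})\action_\playeridxalt.
\]
I would then bound each factor: $|\action_\playeridx-\action_\playeridx'|\leq\adelta$ by definition, and $\bigl|\sum_\playeridxalt(\adjacency_{\playeridx\playeridxalt}-\adjacency_{\playeridxalt\playeridx})\action_\playeridxalt\bigr|\leq\abar\sum_\playeridxalt|\adjacency_{\playeridx\playeridxalt}-\adjacency_{\playeridxalt\playeridx}|\leq\abar\|\adjacency-\adjacency^\transpose\|_\infty$, using that the $\infty$-norm is the maximum absolute row sum. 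Since these bounds are uniform in $\playeridx$, $\action_\playeridx$, $\action_\playeridx'$ and $\actionvector_{-\playeridx}$, the absolute value is at most $\frac{\abar\adelta|\gamma|}{2}\|\adjacency-\adjacency^\transpose\|_\infty$, which is the claimed $\alpha$. The main obstacle is purely the bookkeeping in the second step: one must recognize that a perturbation of $\action_\playeridx$ enters the symmetric quadratic form $\frac{\gamma}{2}\actionvector^\transpose \adjacency\,\actionvector$ as both a row and a column index, producing the average $\frac12(z_\playeridx+w_\playeridx)$; it is precisely this averaging, set against the game's one-sided dependence $\gamma z_\playeridx$, that isolates $z_\playeridx-w_\playeridx=\sum_\playeridxalt(\adjacency_{\playeridx\playeridxalt}-\adjacency_{\playeridxalt\playeridx})\action_\playeridxalt$ and hence the network asymmetry. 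Everything else is elementary algebra and the triangle inequality.
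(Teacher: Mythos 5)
Your proposal is correct and follows essentially the same route as the paper's proof in Appendix~D: both compute the two increments explicitly, observe that the quadratic and bias terms cancel so that only $\tfrac{\gamma}{2}(a_i-a_i')\sum_j (G_{ij}-G_{ji})a_j$ survives, and bound the factors by $\adelta$, $\abar$, and the maximum absolute row sum of $\adjacency-\adjacency^\transpose$. The only difference is presentational (your row/column-aggregate notation $z_i$, $w_i$ versus the paper's expanded sums).
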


A proof of Proposition~\ref{prop: alpha potential expression LQ} is given in Appendix~\ref{sec: proof of alpha potential expression LQ}. It proceeds by verifying the definition of an $\alpha$-potential function (Definition~\ref{def: alpha potential function}) for the function in \eqref{eq: LQ alpha potential function}.\footnote{Note that the expression of $\potentialfunction$ in \eqref{eq: LQ alpha potential function} can be obtained by substituting $\actionvectorfixed = 0$ in \eqref{eq: alpha potential expression}. However, we state Proposition~\ref{prop: alpha potential expression LQ} independently since $0 \notin \actionset$ in general, and the bound in \eqref{eq: alpha bound LQ game} is different from the one in \eqref{eq: alpha bound}. }
The bound on $\alpha$ in \eqref{eq: alpha bound LQ game} is a constant multiple of $\|\adjacency - \adjacency^T\|_\infty = \max_{i \in [\numplayers]} \sum_{\playeridxalt = 1}^\numplayers \left|\adjacency_{\playeridx\playeridxalt} - \adjacency_{\playeridxalt\playeridx}\right|$. Thus, if $G_{ij} \in \{0, 1\}$, then $\alpha$ is proportional to the maximum number of asymmetric links across all players in the network, and hence is a measure of the degree of asymmetry in the network. In the next section, we present some networks for which the asymmetry quantified by $\|\adjacency - \adjacency^T\|_\infty$, and thus the value of $\alpha$, is small.

\begin{remark}
    The bound given in \eqref{eq: alpha bound LQ game} is similar to the one in \cite[Equation (6.4)]{guo2025alphapotentialgameframeworknplayer}. 
    However, \cite{guo2025alphapotentialgameframeworknplayer} considers a continuous-time dynamic linear-quadratic network game. 
\end{remark}

\subsection{\texorpdfstring{$\alpha$}{alpha} for various networks}
\label{sec: alpha for various networks}

We next present examples of network games for which the value of $\alpha$ given by \eqref{eq: alpha bound LQ game} is well-behaved. For reference, we compare our results with \cite{francesca_VI_network_games}, which provides conditions for convergence of (exact) best-response dynamics in asymmetric network games. 

Consider an LQ network game with a utility function as in \eqref{eq: LQ utility}, $\gamma = 1$ and let the action sets be $\actionset_i = [-1,1], \forall \playeridx \in [\numplayers]$.
For this game, Theorems~\ref{thm: sequential best response convergence} and \ref{thm: gradient play convergence} guarantee convergence of the sequence of learned actions to the set of $2\alpha$-Nash equilibria where $\alpha = \|\adjacency - \adjacency^T\|_\infty$. On the other hand, \cite{francesca_VI_network_games} guarantees convergence of best-response dynamics to a Nash equilibrium if either $\|\adjacency\|_2 < 1$ or $\|\adjacency\|_\infty < 1$.\footnote{See \cite[Theorems~1~and~2, Assumption~2 and Table~2]{francesca_VI_network_games}. Note that the condition on $\lambdamin{\adjacency}$ in \cite[Assumption~2]{francesca_VI_network_games} applies  to symmetric network games. Hence, we do not consider it here.} Next, we present examples of network models which capture a variety of practical scenarios and have a ``reasonable'' value of $\alpha$ even as the number of players increases, while they do not usually satisfy the conditions $\|\adjacency\|_2 < 1$, $\|\adjacency\|_\infty < 1$. The results are summarized in Table \ref{tab: alpha values} and Fig. \ref{fig: alpha for various networks}. 

\begin{table}[t]
    \centering
    \renewcommand{\arraystretch}{1.2} 
    \setlength{\tabcolsep}{5pt}
    \begin{tabular}{|c|c|c|} \hline
        \textbf{Network model} & \textbf{Parameters} & \textbf{$\alpha = \|\adjacency - \adjacency^T\|_\infty$} \\ \hline
        Complete network with errors & $\epsilon(\numplayers) = 1/\numplayers^r$, $r > 1$ & $2/\numplayers^{r-1}$ \\ \hline
        Complete network with an influential player & $\epsilon(\numplayers) = 1/\numplayers^r$, $r > 1$, $w > 1$ & $2/\numplayers^{r-1}$ \\ \hline 
        Complete network with random signs & $\epsilon(\numplayers) = \delta(\numplayers) = 1/\numplayers^r$, $r > 2$ & $4/\numplayers^{r-1} + t$, w.h.p. \\ \hline
        \erdosrenyi network & $p(\numplayers) = 1/\numplayers^r$ or $p(\numplayers) = 1 - 1/\numplayers^r$, $r > 2$ & $2/\numplayers^{r-1} + t$, w.h.p. \\ \hline 
        Small world network &  $d(\numplayers) = \numplayers / 8$, $p(\numplayers) = 1/\numplayers^r$, $r > 2$ & $1/\numplayers^{r-1} + t$, w.h.p. \\ \hline 
        Star network with erased edges & $p(\numplayers) = 1/\numplayers^r$, $r > 2$ & $2/\numplayers^{r-1} + t$, w.h.p. \\ \hline 
    \end{tabular}
    \caption{A summary of bounds on the value of $\alpha$ for various network models and the LQ game described in Section \ref{sec: alpha for various networks}. The parameter $t > 0$ can be arbitrarily small, and ``w.h.p.'' stands for ``with high probability''.}
    \label{tab: alpha values}
\end{table}

\begin{example}{(Complete network with errors)}
\label{ex: complete network with errors}
    Consider a network of $\numplayers$ players where, for all $i \neq j$, $G_{ij} \in [1-\epsilon(\numplayers), 1 + \epsilon(\numplayers)]$ and $\epsilon(\numplayers)$ is a small positive number. 
    One can think of $\adjacency$ as a network of relationships among a clique of friends with an error of at most $2 \epsilon(\numplayers)$ in the reciprocity of their relationships. We show in Appendix~\ref{sec: proofs of alpha and G norm bounds} that, for this network, $\alpha = \|\adjacency - \adjacency^T\|_\infty \leq 2 \numplayers \epsilon(\numplayers)$, while $\|\adjacency\|_2 \geq \sqrt{\numplayers (\numplayers-2)} (1 - \epsilon(\numplayers))$ and $\|\adjacency\|_\infty \geq (\numplayers-1) (1 - \epsilon(\numplayers))$. 
    Suppose $\numplayers \epsilon(\numplayers) \to 0$ as $\numplayers \to \infty$, e.g., $\epsilon(\numplayers) = 1/\numplayers^2$. In this setting, as $\numplayers \to \infty$, $\alpha  = \|\adjacency - \adjacency^T\|_\infty \to 0$, whereas $\min\{\|\adjacency\|_2, \|\adjacency\|_\infty\} \geq \sqrt{\numplayers(\numplayers-2)} (1 - \epsilon(\numplayers)) \to \infty$ as $\numplayers \to \infty$. Thus, for a large $\numplayers$, the convergence results in \cite{francesca_VI_network_games} cannot be applied, while Algorithms~\ref{alg: sequential best response} and \ref{alg: gradient play} ensure convergence to the set of approximate Nash equilibria of the game, with approximation error ($2\alpha$) going to zero as $\numplayers\to\infty$. 
\end{example}

\begin{example}{(Complete network with an influential player)} 
    Consider a complete network with errors as described in Example \ref{ex: complete network with errors}, but suppose the edge weights of player $\numplayers$ are modified such that $G_{Nj}, G_{jN} \in [w -\epsilon(\numplayers), w + \epsilon(\numplayers)]$ for all $j \in [\numplayers]$ where $w$ is the ``average'' weight of a link formed by player $\numplayers$. If $w \gg 1$, we can think of player $\numplayers$ as a highly influential player in the network. Using similar techniques as in Example \ref{ex: complete network with errors}, we show in Appendix~\ref{sec: proofs of alpha and G norm bounds} that $\alpha = \|\adjacency - \adjacency^T\|_\infty \leq 2 \numplayers \epsilon(\numplayers)$, while $\|\adjacency\|_2 \geq \sqrt{\numplayers(\numplayers-2)} (1 - \epsilon(\numplayers))$ for any $w \geq 1$ and $\|\adjacency\|_\infty \geq (\numplayers-1) (w - \epsilon(\numplayers))$. In particular, the value of $w$ does not affect $\alpha = \|\adjacency - \adjacency^T\|_\infty$. 
\end{example}

\begin{example}{(Complete network with random signs)}
\label{ex: signed complete network}
    Consider a signed version of the complete network with errors as described in Example \ref{ex: complete network with errors}.
    Fix two small positive numbers $\epsilon(\numplayers)$ and $\delta(\numplayers)$ that represent the errors in the reciprocity and the signs of the relationships, respectively. For each pair of nodes $i \neq j$, with probability $1/2$, both $G_{ij}, G_{ji}$ are sampled independently and uniformly at random from $[1 - \epsilon(\numplayers), 1 + \epsilon(\numplayers)]$, and with probability $1/2$, both $G_{ij}, G_{ji}$ are sampled independently and uniformly at random from $[-1 - \epsilon(\numplayers), -1 + \epsilon(\numplayers)]$. 
    To introduce differences in the signs among the relationships of the neighboring nodes, for each $i < j$, set $G_{ij} = -G_{ji}$ with probability $\delta(\numplayers)$, and keep $G_{ij}, G_{ji}$ unchanged with probability $1 - \delta(\numplayers)$. It can be shown that, if $\epsilon(\numplayers) = \delta(\numplayers) = 1/\numplayers^r$ with $r > 2$, then, for any fixed $\delta > 0$ and $t > 0$, for $\numplayers$ large enough, with probability $1 - \delta$, $\alpha = \|\adjacency - \adjacency^T\|_\infty \leq 4/\numplayers^{r-1} + t$. See Appendix~\ref{sec: proofs of alpha and G norm bounds} for a proof. 
\end{example}

\begin{example}{(\erdosrenyi network)}
\label{ex: dense erdos renyi networks}
    Consider an \erdosrenyi model for generating directed random networks as follows. Given $\numplayers$ nodes, for each pair of nodes $i \neq j$, we set $G_{ij} = 1$ with probability $p(\numplayers)$, and $G_{ij} = 0$ otherwise. Suppose that, for a fixed $r > 2$, either $p(\numplayers) \leq 1/\numplayers^r$ (sparse network) or $p(\numplayers) \geq 1 - 1/\numplayers^r$ (dense network). Then, it can be shown that, for any $\delta > 0$ and $t > 0$, for $\numplayers$ large enough, with probability $1 - \delta$, $\alpha = \|\adjacency - \adjacency^T\|_\infty \leq 2 / \numplayers^{r-1} + t$. See Appendix~\ref{sec: proofs of alpha and G norm bounds} for a proof. 
\end{example}

\begin{example}{(Small world network)}
\label{ex: small world networks}
    Consider a directed version of the small world network model introduced in \cite{watts1998small-world}. Specifically, given $\numplayers$ nodes, we first arrange all nodes in the form of a ring. Then, we draw a directed edge from each node $i$ to each of its $d(\numplayers)$ adjacent nodes on either side. This gives rise to a regular ring network, each node having an out-degree and in-degree $2 d(\numplayers)$. Second, we rewire each edge with probability $p(\numplayers)$ by disconnecting its head and re-connecting it to one of the $\numplayers - 1$ nodes uniformly at random. We set $G_{ij} = 1$ if there exists at least one edge from node $j$ to node $i$ in the rewired network, and $G_{ij} = 0$ otherwise. 
    Suppose $d(\numplayers) = \numplayers / 8$, $p(\numplayers) = 1/\numplayers^r$ with $r > 2$. Then, it can be shown that, for any $\delta > 0$ and $t > 0$, for $\numplayers$ large enough,  with probability $1 - \delta$, $\alpha = \|\adjacency - \adjacency^T\|_\infty \leq 1/\numplayers^{r-1} + t$. See Appendix~\ref{sec: proofs of alpha and G norm bounds} for a proof. 
\end{example}

\begin{example}{(Star network with erased edges)}
\label{ex: star networks}
    Consider a star network of $\numplayers$ nodes where each node $i > 1$ has a directed edge to node 1 $(G_{1i} = 1)$ and vice-versa $(G_{i1} = 1)$, while no other edges are present in the network. To introduce asymmetry, we delete each edge with probability $p(\numplayers)$. Suppose $p(\numplayers) = 1/\numplayers^r$ with $r > 2$. Then, it can be shown that, for any $\delta > 0$ and $t > 0$, for $\numplayers$ large enough,  with probability $1 - \delta$, $\alpha = \|\adjacency - \adjacency^T\|_\infty \leq 2/\numplayers^{r-1} + t$. See Appendix~\ref{sec: proofs of alpha and G norm bounds} for a proof. 
\end{example}

For illustration, a plot of $\alpha = \|\adjacency - \adjacency^T\|_\infty$, $\|\adjacency\|_2$ and $\|\adjacency\|_\infty$ versus the size of the network ($\numplayers$) for the network models described above is shown in Figure \ref{fig: alpha for various networks}, averaged over $100$ random realizations of each network. It is observed that the value of $\alpha$ scales well with $\numplayers$, while the conditions $\|\adjacency\|_2 < 1$ and $\|\adjacency\|_\infty < 1$ required to show convergence of best-response dynamics in \cite{francesca_VI_network_games} are violated. Thus, the framework of $\alpha$-potential games provides a tractable way to analyze such network games not covered in the literature. 
In order to truly quantify the ``quality'' of the approximate Nash equilibria, one needs to study the value of $\alpha$ relative to the players' utilities. We defer this to Section \ref{sec: numerical results}. 
\begin{figure}[t]
    \centering
    \begin{subfigure}[t]{0.25\textwidth}
        \centering
        \includegraphics[width=\linewidth]{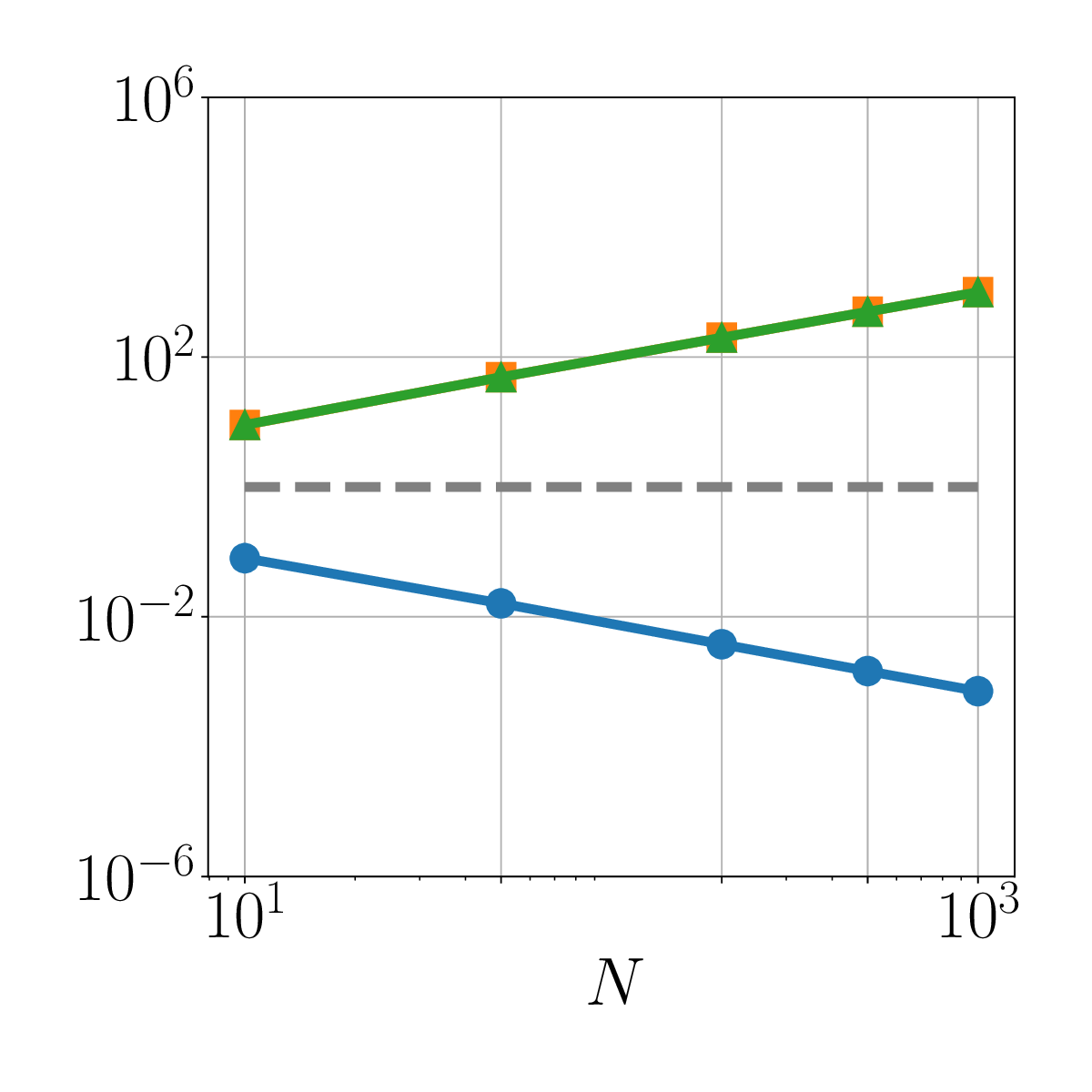}
        \caption{Complete networks with errors, $\epsilon(\numplayers) = 1/\numplayers^2$}
    \end{subfigure}%
    \hfill
    \begin{subfigure}[t]{0.25\textwidth}
        \centering
        \includegraphics[width=\linewidth]{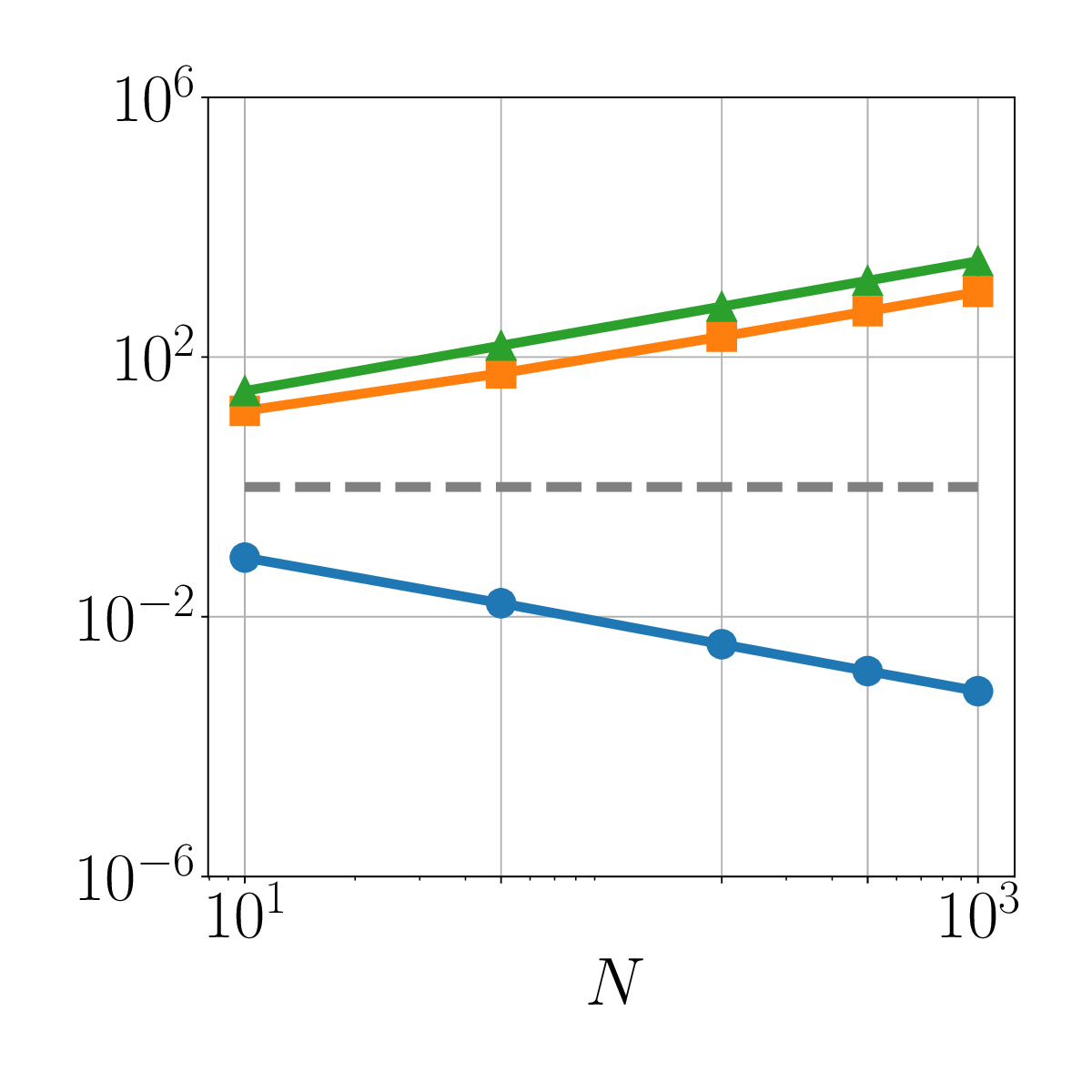}
        \caption{Complete network with an influential node, $\epsilon(\numplayers) = 1/\numplayers^2$, $w = 3$}
    \end{subfigure}%
    \hfill
    \begin{subfigure}[t]{0.25\textwidth}
        \centering
        \includegraphics[width=\linewidth]{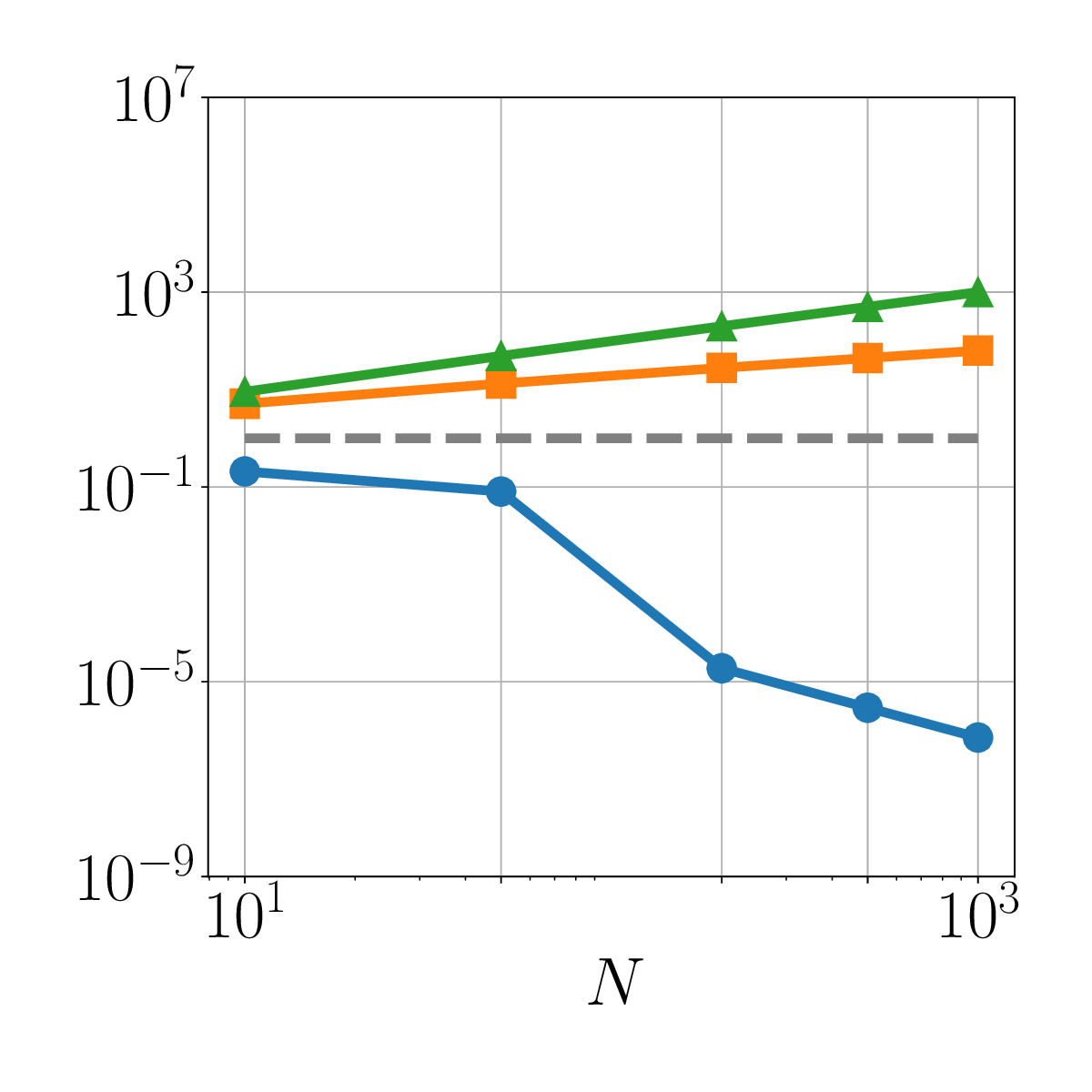}
        \caption{Complete network with random signs, $\epsilon(\numplayers) = \delta(\numplayers) = 1/\numplayers^3$}
    \end{subfigure}

    \begin{subfigure}[t]{0.25\textwidth}
        \centering
        \includegraphics[width=\linewidth]{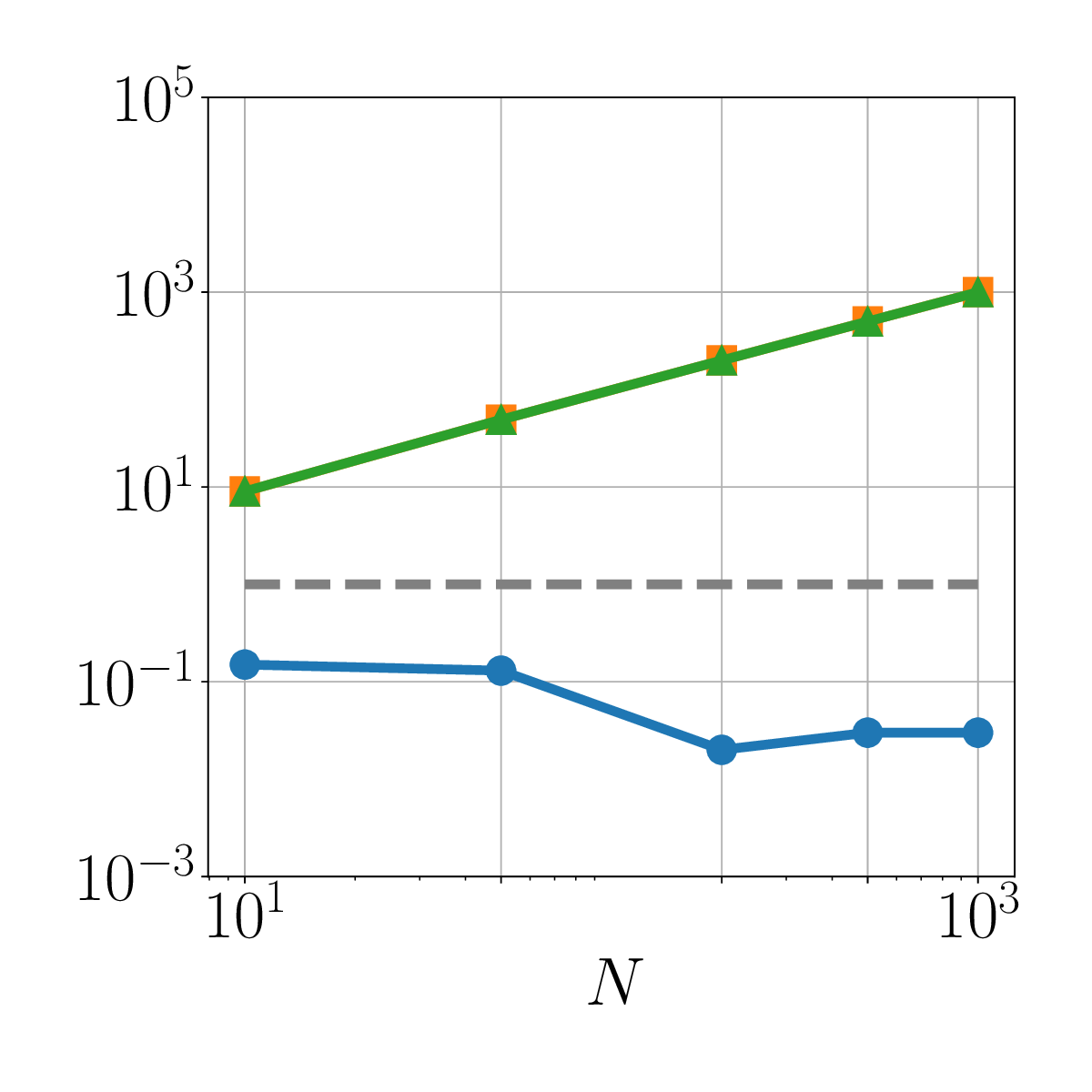}
        \caption{\erdosrenyi network, $p(\numplayers) = 1 - 1/\numplayers^{2.5}$}
    \end{subfigure}
    \hfill
    \begin{subfigure}[t]{0.25\textwidth}
        \centering
        \includegraphics[width=\linewidth]{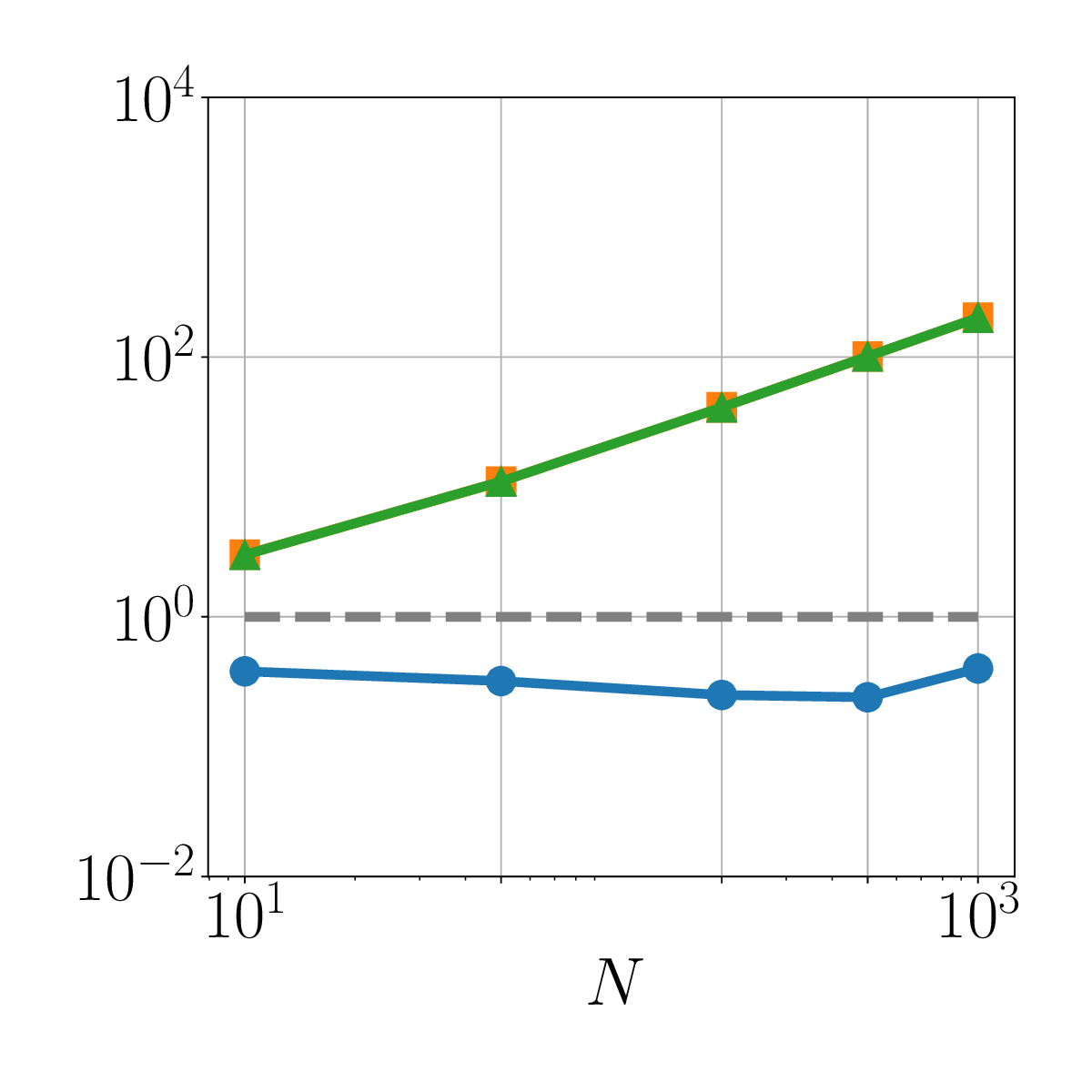}
        \caption{Small world network, $d(\numplayers) = \numplayers/10$, $p(\numplayers) = 1/\numplayers^2$}
    \end{subfigure}%
    \hfill
    \begin{subfigure}[t]{0.25\textwidth}
        \centering
        \includegraphics[width=\linewidth]{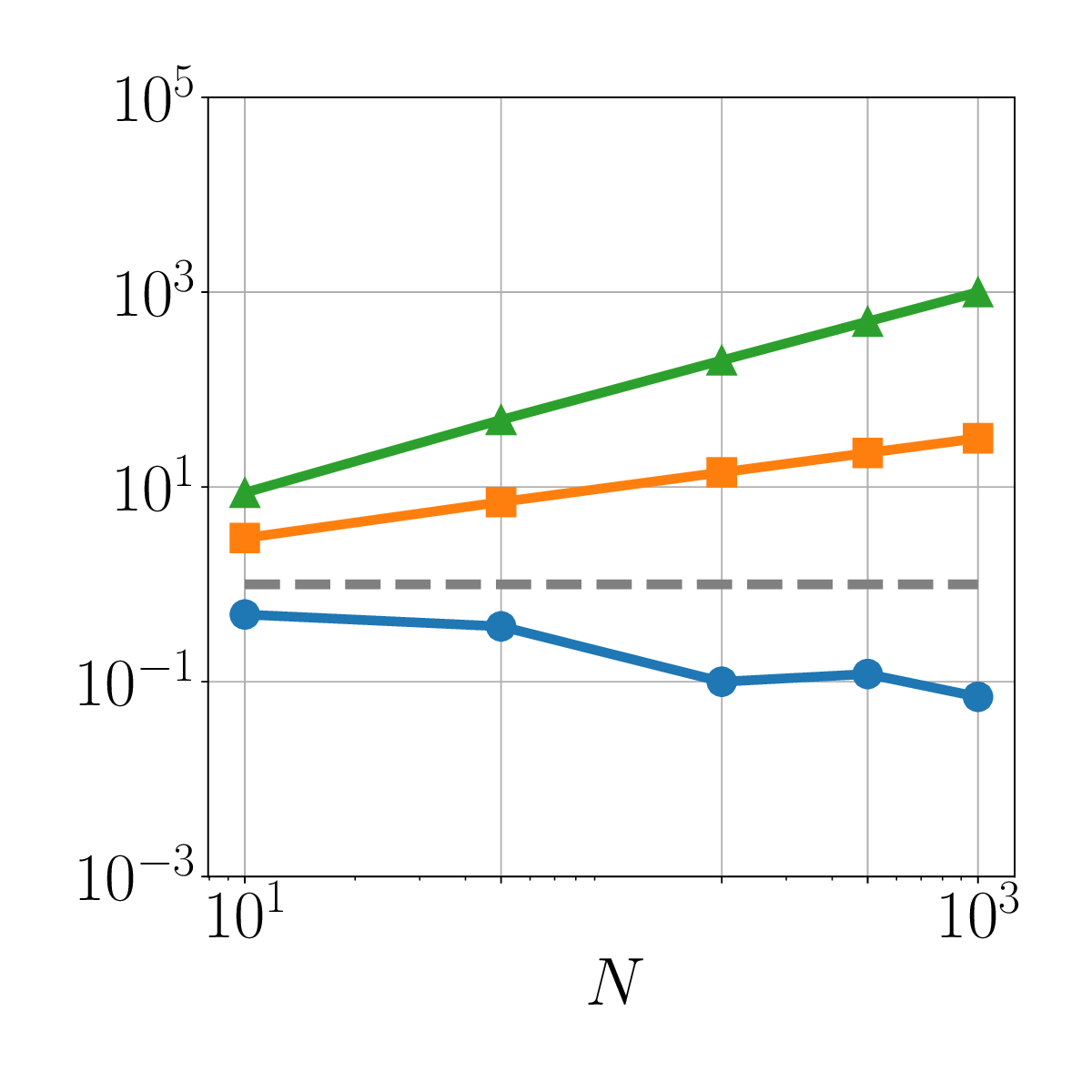}
        \caption{Star network with erased edges, $p(\numplayers) = 1/\numplayers^{1.5}$}
    \end{subfigure}%
    
    \begin{subfigure}{0.5\textwidth}
        \includegraphics[width=\linewidth]{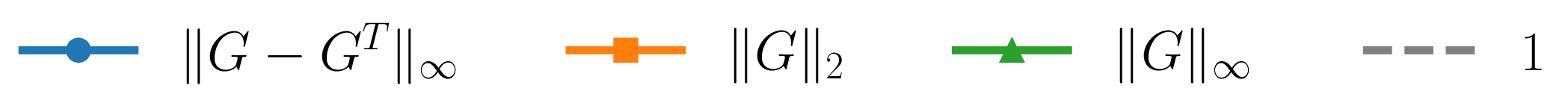}
    \end{subfigure}

    \caption{A plot of $\alpha = \|\adjacency - \adjacency^T\|_\infty$, $\|\adjacency\|_2$, and $\|\adjacency\|_\infty$ vs. the number of nodes $\numplayers \in \{10, 50, 200, 500, 1000\}$ for the network game and the network models described in Section \ref{sec: alpha for various networks}, averaged over $100$ instances of each network model. It is observed that $\alpha = \|\adjacency - \adjacency^T\|_\infty$ scales well with $\numplayers$, whereas $\|\adjacency\|_2$ and $\|\adjacency\|_\infty$ are greater than $1$. Thus, the framework of $\alpha$-potential games can be used to learn approximate Nash equilibria across a diverse set of networks even as the number of players grows, whereas the conditions $\|\adjacency\|_2 < 1$ and $\|\adjacency\|_\infty < 1$ required in \cite{francesca_VI_network_games} to ensure convergence of best-response dynamics are violated. 
    }
    \label{fig: alpha for various networks}
\end{figure}

\subsection{Welfare suboptimality}
\label{sec: price of stability}
 
In Section~\ref{sec: network games and alpha potential function}, we proved that Algorithms~\ref{alg: sequential best response} and \ref{alg: gradient play} ensure convergence of players' actions to the set of $2\alpha$-Nash equilibria. In this section, we focus on a particular $\alpha$-Nash equilibrium of the game: the maximizer of the $\alpha$-potential function given in \eqref{eq: LQ alpha potential function}.\footnote{An algorithm to learn the maximizer is discussed in Appendix \ref{sec: appendix alternative to gradient play algorithm}.} From the perspective of a central planner, our aim is to study the welfare properties of recommending such a solution to players in an LQ network game. 
Towards this, we adopt the total utility of all the players as the social welfare function. Specifically, define $\sw : \actionset \rightarrow \R$ such that
\begin{align}
\label{eq: social welfare}
    \sw(\actionvector) &:= \sum_{i = 1}^\numplayers \utility_\playeridx(\actionvector) = -\frac{1}{2} \sum_{i = 1}^\numplayers a_i^2 + \sum_{i = 1}^\numplayers \beta_i a_i + \gamma \sum_{i = 1}^\numplayers \left(\sum_{j = 1}^\numplayers G_{ij} a_j\right) a_i. 
\end{align}
Note that one can study a generalized social welfare function with convex combination of the utilities, however we consider a uniform combination to keep the exposition simple. 
We make the following assumption on the parameters of the LQ game to ensure that $\sw$ has a unique maximizer and to find its closed-form expression.

\begin{assumption}{(Contraction)}
\label{assn: contraction}
    Assume $|\gamma| < 1/2$, and for all $i, j \in [\numplayers], |G_{ij}| \leq 1/\numplayers$. Further, assume that $\exists \actiontilde \geq 0$ such that $[-\actiontilde, \actiontilde] \subseteq \actionset_i$ for all $i \in [\numplayers]$, and let $\betabar := \max_{i \in [\numplayers]} |\beta_i| \leq (1 - 2 |\gamma|) \actiontilde$. 
\end{assumption}

\begin{remark}
    Assumption \ref{assn: contraction} implies $\|\adjacency\|_2 \leq 1$, $\|\adjacency\|_\infty \leq 1$ and therefore is a sufficient condition for the game to be strongly monotone (see Lemma \ref{lem: lambda max G mat} and \cite{francesca_VI_network_games} for related results), and thus have a unique Nash equilibrium learnable via exact gradient dynamics \cite{facchinei2003finite}. 
\end{remark}

\begin{theorem}{(Welfare suboptimality)}
\label{thm: pos}
     Consider an LQ network game satisfying Assumption~\ref{assn: contraction}. Let $\actionvectoropt = \argmax{\actionvector \in \actionset} \sw(\actionvector)$ be the socially optimal action profile for $\sw$ in \eqref{eq: social welfare}, and $\actionvectorpotential = \argmax{\actionvector \in \actionset} \potentialfunction(\actionvector)$ be the maximizer of the $\alpha$-potential function in \eqref{eq: LQ alpha potential function}. Then, 
    \begin{align*}
        \frac{\sw(\actionvectoropt)}{\sw(\actionvectorpotential)} &\leq \underbrace{\frac{(1 - \lambdamin{\gamma(\adjacency + \adjacency^T)} \left(1 - \frac{\lambdamin{\gamma(\adjacency + \adjacency^T)}}{2}\right)^2}{\left(1 - \lambdamax{\gamma(\adjacency + \adjacency^T)}\right)^2}}_{\posboundeigvals} \\
        & \leq \underbrace{\frac{\left(1 + |\gamma| \max_{i \in [\numplayers]} \left(\sum_{j \neq i} |G_{ij}|+\sum_{j \neq i} |G_{ji}|\right)\right)^3}{\left(1 - |\gamma| \max_{i \in [\numplayers]} \left(\sum_{j \neq i} |G_{ij}| + \sum_{j \neq i} |G_{ji}|\right)\right)^2}}_{\posboundnetworkvals} \leq \underbrace{\frac{\left(1+2|\gamma|\right)^3}{\left(1-2|\gamma|\right)^2}}_{\posboundgamma}. 
     \end{align*}
\end{theorem}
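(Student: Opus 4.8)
The plan is to reduce the theorem to two quadratic optimizations and then chain Rayleigh-quotient estimates. Writing $S := \gamma(\adjacency + \adjacency^\transpose)$, both objectives are concave quadratics because the antisymmetric part of $\adjacency$ drops out of the quadratic forms: $\sw(\actionvector) = \betavector^\transpose \actionvector - \tfrac12 \actionvector^\transpose M \actionvector$ with $M := \identity - S$, and $\potentialfunction(\actionvector) = \betavector^\transpose \actionvector - \tfrac12 \actionvector^\transpose M_{\potentialfunction}\, \actionvector$ with $M_{\potentialfunction} := \identity - \tfrac12 S$. Under Assumption~\ref{assn: contraction} the symmetric matrix $S$ has $\|S\|_\infty \le 2|\gamma| < 1$ (each row-sum of $|\adjacency + \adjacency^\transpose|$ is at most $2(N-1)/N$, cf. Lemma~\ref{lem: lambda max G mat}), so $M$ and $M_{\potentialfunction}$ are positive definite with unique \emph{unconstrained} maximizers $M^{-1}\betavector$ and $M_{\potentialfunction}^{-1}\betavector$. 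First I would show these lie in $[-\actiontilde,\actiontilde]^\numplayers \subseteq \actionset$: since $\|M^{-1}\|_\infty \le 1/(1-2|\gamma|)$ and $\|M_{\potentialfunction}^{-1}\|_\infty \le 1/(1-|\gamma|)$, the bound $\betabar \le (1-2|\gamma|)\actiontilde$ forces both maximizers into the box, so the constrained maximizers coincide with the closed forms $\actionvectoropt = M^{-1}\betavector$ and $\actionvectorpotential = M_{\potentialfunction}^{-1}\betavector$.

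Next I would evaluate the two welfare values. Substituting the closed forms gives $\sw(\actionvectoropt) = \tfrac12 \betavector^\transpose M^{-1}\betavector$. The key computation is $\sw(\actionvectorpotential)$: plugging $\actionvectorpotential = M_{\potentialfunction}^{-1}\betavector$ into $\sw$ and simplifying (using $M = 2M_{\potentialfunction} - \identity$, so the linear and cross terms telescope) yields the clean identity
\begin{align*}
    \sw(\actionvectorpotential) = \tfrac12\, \betavector^\transpose M_{\potentialfunction}^{-2}\betavector = \tfrac12 \|\actionvectorpotential\|_2^2 > 0 ,
\end{align*}
which both guarantees $\sw(\actionvectorpotential)>0$ (so the ratio is well defined and at least $1$) and puts numerator and denominator in a common $\betavector$-quadratic form.

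For the first (eigenvalue) bound I would estimate numerator and denominator by Rayleigh quotients, grouping the terms so the $\|\betavector\|_2^2$ cancels. For the numerator, write $\sw(\actionvectoropt) = \tfrac12 (\actionvectoropt)^\transpose M \actionvectoropt \le \tfrac12 \lambdamax{M}\|\actionvectoropt\|_2^2$ together with $\|\actionvectoropt\|_2^2 = \betavector^\transpose M^{-2}\betavector \le \|\betavector\|_2^2/\lambdamin{M}^2$; for the denominator, $\sw(\actionvectorpotential) = \tfrac12\betavector^\transpose M_{\potentialfunction}^{-2}\betavector \ge \tfrac12 \|\betavector\|_2^2/\lambdamax{M_{\potentialfunction}}^2$. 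Substituting $\lambdamax{M} = 1-\lambdamin{S}$, $\lambdamin{M} = 1-\lambdamax{S}$, and $\lambdamax{M_{\potentialfunction}} = 1-\tfrac12\lambdamin{S}$ makes the norms cancel and produces exactly $\posboundeigvals$.

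Finally, the last two inequalities are monotonicity estimates. Since $S$ is symmetric, Gershgorin (equivalently $|\lambda|\le\|S\|_\infty$) gives $-\rho \le \lambdamin{S} \le \lambdamax{S} \le \rho$ with $\rho := |\gamma|\max_{i}\big(\sum_{j\neq i}|\adjacency_{ij}| + \sum_{j\neq i}|\adjacency_{ji}|\big)$; plugging these endpoints in and using $(1-\tfrac12\lambdamin{S})^2 \le (1+\tfrac12\rho)^2 \le (1+\rho)^2$ gives $\posboundnetworkvals$. Then $|\adjacency_{ij}|\le 1/\numplayers$ yields each row sum $<1$, hence $\rho \le 2|\gamma| < 1$, and since $g(\rho) = (1+\rho)^3/(1-\rho)^2$ is increasing on $[0,1)$ one concludes $\posboundgamma$. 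I expect the main obstacle to be the second paragraph: obtaining the telescoped identity $\sw(\actionvectorpotential) = \tfrac12\|\actionvectorpotential\|_2^2$ and, in the third paragraph, choosing the right grouping of Rayleigh bounds (quadratic-in-$\actionvectoropt$ for the numerator, quadratic-in-$\betavector$ for the denominator) so that the factor $(1-\lambdamin{S})$ appears and the stated expression $\posboundeigvals$ is reproduced exactly rather than a differently-shaped bound; the interiority argument of the first paragraph is the other place where Assumption~\ref{assn: contraction} must be used carefully.
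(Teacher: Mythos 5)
Your proposal is correct and takes essentially the same route as the paper's proof: it derives the closed forms $\actionvectoropt = \inv{\identity - \gamma(\adjacency+\adjacency^\transpose)}\betavector$ and $\actionvectorpotential = \inv{\identity - \tfrac{\gamma}{2}(\adjacency+\adjacency^\transpose)}\betavector$ under Assumption~\ref{assn: contraction} (via the same Neumann-series interiority argument as Lemma~\ref{lem: I - G mat inverse beta in A n}), establishes the identity $\sw(\actionvectorpotential)=\tfrac12\|\actionvectorpotential\|_2^2$, and chains the same Rayleigh-quotient and Gershgorin estimates to obtain $\posboundeigvals$, $\posboundnetworkvals$, and $\posboundgamma$. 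The only cosmetic difference is that you symmetrize the quadratic forms at the outset rather than midway through the computation.
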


The proof of Theorem~\ref{thm: pos} given in Appendix~\ref{sec: proof of theorem pos} proceeds by finding analytical expressions for $\actionvectoropt$ and $\actionvectorpotential$ under Assumption~\ref{assn: contraction}. Then, we substitute these expressions in the formula for $\sw$ in \eqref{eq: social welfare}, and use properties of the minimum and maximum eigenvalues of matrices. 
The result provides three types of bounds on the welfare suboptimality of $\actionvectorpotential$: the first bound $\posboundeigvals$ depends on $\gamma$ and the eigenvalues of the network adjacency matrix~$\adjacency$, and is the sharpest, the second bound $\posboundnetworkvals$ depends on $\gamma$ and the entries of $\adjacency$, while the third bound $\posboundgamma$ depends only on $\gamma$ and is the weakest. See Appendix \ref{sec: appendix numerical example for pos} for an illustration.

\begin{remark}
    For symmetric network games, it is known that $\potentialfunction$ in \eqref{eq: LQ alpha potential function} is an exact potential function of the LQ network game \cite{bramoulle_strategic_interactions} (note that $\alpha = 0$ in \eqref{eq: alpha bound LQ game} if $\adjacency = \adjacency^T$). Thus, under Assumption~\ref{assn: contraction}, $\actionvectorpotential$ is the unique Nash equilibrium of the network game (see also Lemma \ref{lem: unique alpha potential maximum} in Appendix~\ref{sec: appendix intermediate results}). Hence, the bounds derived in Theorem~\ref{thm: pos} are then bounds on the price of anarchy (PoA) of the network game.
\end{remark}

\section{Numerical results}
\label{sec: numerical results}

In this section, we consider a simple LQ network game over a complete network with random signs (as described in Example~\ref{ex: signed complete network}). We illustrate that the value of $\alpha$ for such a network is small. Further, we numerically verify the convergence of Algorithms~\ref{alg: sequential best response} and \ref{alg: gradient play}. In comparison, we demonstrate that the classical best-response and gradient play algorithms do not necessarily converge within 10,000 iterations. To verify the efficacy of the learned actions, we compare the social welfare of these algorithms.

\textbf{Game:} Consider an LQ network game with utility function as in \eqref{eq: LQ utility}. Let $\beta_i = 0$ for all $i \in [\numplayers]$ and $\gamma = 1$. Let the action sets of all players $i \in [\numplayers]$ be $\actionset_i = [-1, 1]$. This game can model scenarios where players' actions represent their propensity towards two alternative options represented by $\{-1, 1\}$ (such as two opposite activities or two extreme opinions).

\textbf{Network:} We consider a complete network with random signs of $\numplayers = 100$ players generated as described in Example \ref{ex: signed complete network} for $\epsilon(\numplayers) = \delta(\numplayers) = 1/\numplayers^3$. 
We generate $T = 100$ random instances (or trials) of the network. The values of the key metrics related to $\adjacency$ averaged over $T$ instances are given in Table \ref{tab: G norm values simulations}. 
\begin{table}[t]
    \centering
    \renewcommand{\arraystretch}{1.2} 
    \setlength{\tabcolsep}{11pt}
    \begin{tabular}{|c|c|c|}
    \hline
       $\alpha = \|\adjacency-\adjacency^\transpose\|_\infty$  & $\|\adjacency\|_2$ & $\|\adjacency\|_\infty$  \\ \hline
       $7.8 \times 10^{-5}$ &$19.31$ &$99.00$ \\ \hline
    \end{tabular}
    \caption{Average value of $\alpha = \|\adjacency-\adjacency^\transpose\|_\infty$ (see Proposition \ref{prop: alpha potential expression LQ}), $ \|\adjacency\|_2$ and $\|\adjacency\|_\infty$ for the complete network with random signs considered in Section \ref{sec: numerical results}, averaged over $T = 100$ trials. }
    \label{tab: G norm values simulations}
\end{table}
It is observed that the game does not usually have a potential function, since $\adjacency \neq \adjacency^T$ in general, nor does it usually satisfy $\|\adjacency\|_2 < 1$ or $\|\adjacency\|_\infty < 1$ as required in \cite{francesca_VI_network_games} to ensure convergence of best-response dynamics (see also Fig. \ref{fig: alpha for various networks}(c)). Moreover, the game has a mix of strategic complements ($G_{ij} > 0$) and substitutes ($G_{ij} < 0$). On the other hand, as shown in Example \ref{ex: signed complete network}, for any fixed $\delta > 0$ and $t > 0$, for $\numplayers$ large enough, with probability $1 - \delta$, $\alpha = \|\adjacency - \adjacency^T\|_\infty \leq 4/\numplayers^2 + t$.

\textbf{Algorithms and learned actions:} For each of the $T = 100$ instances of the network $\adjacency$ described above, we run Algorithm~\ref{alg: sequential best response} (with $\epsilon = \alpha$) and Algorithm~\ref{alg: gradient play} with the initial action of each player chosen uniformly at random from the player's action set. We also run the classical exact versions of these algorithms ($\alpha = 0$) to benchmark our results. 
In Table \ref{tab: algorithms comparison}, we observe that, in each of the $T$ trials, both Algorithms~\ref{alg: sequential best response} and \ref{alg: gradient play} reach $2\alpha$-Nash equilibria within a finite number of iterations, as guaranteed by Theorems~\ref{thm: sequential best response convergence} and \ref{thm: gradient play convergence}. 
On the other hand, the exact sequential best-response algorithm (Algorithm~\ref{alg: sequential best response} with $\alpha = \epsilon = 0$) and the exact simultaneous gradient play algorithm (Algorithm~\ref{alg: gradient play} with $\alpha = 0$) either take more iterations to converge, or do not converge within 10,000 iterations. Moreover, the social welfare of the actions learned by Algorithms~\ref{alg: sequential best response} and \ref{alg: gradient play} are comparable to those learned by the exact algorithms. 
\begin{table}[t]
    \centering
    \renewcommand{\arraystretch}{1.2} 
    \setlength{\tabcolsep}{2pt}
    \begin{tabular}{|c|c|c|c|}
        \hline
        \textbf{Algorithm} & \textbf{\% Convergence} & \textbf{\# Iterations} & \textbf{Social Welfare}  \\ \hline
        Best-response with $\alpha > 0$ (Algorithm~\ref{alg: sequential best response})  & 100\% & 851.25 & 1324.76 \\ \hline
        Gradient play with $\alpha > 0$ (Algorithm~\ref{alg: gradient play})  & 100\% & 100.10 & 1268.89 \\ \hline
        Exact best-response (Algorithm \ref{alg: sequential best response}, $\alpha = \epsilon = 0$) & 89\% & 1827.57 & 1274.40 \\ \hline
        Exact gradient play (Algorithm \ref{alg: gradient play}, $\alpha = 0$)  & 95\% & 109.57 & 1270.08 \\ \hline
    \end{tabular}
    \caption{Performance of various algorithms for the LQ network game described in Section \ref{sec: numerical results}. The numbers represent the mean over $T = 100$ instances of the network game. The first column is the \% of trials in which the algorithms converged within 10,000 iterations (we say that an algorithm converges at iteration $k$ if for all $i \in [\numplayers]$, $|a^{k+1}_i - a^k_i| \leq 10^{-8} + 10^{-5} \times a^k_i$).  The second column is the average number of iterations to convergence, and the third column is the average social welfare at termination, averaged over the trials where all algorithms converged within 10,000 iterations. }
    \vspace{-10mm}
    \label{tab: algorithms comparison}
\end{table}
Additional numerical results illustrating the quality of the learned strategies with respect to the utilities of the players, and the monotonic increase of the $\alpha$-potential function are provided in Appendix \ref{sec: appendix additional numerical results}.

\section{Conclusion}
\label{sec: conclusion}

We provided an explicit characterization of an $\alpha$-potential function for static games under mild regularity assumptions and showed that the $\alpha$-potential function can be used to guide learners to the set of approximate Nash equilibria. In particular, we proposed variants of the sequential best-response algorithm and the gradient play algorithm dependent on the value of $\alpha$ that converge to the set of 2$\alpha$-Nash equilibria in finite time. Our convergence results hold for network games without restrictive assumptions such as symmetry of the network or bounds on the norms of the network adjacency matrix. For linear-quadratic network games, we showed that $\alpha$ is proportional to the degree of asymmetry in the network quantified by the infinity norm of the antisymmetric component of the network adjacency matrix. We showed that in several network topologies, $\alpha$ scales well with the size of the network. Finally, we derived bounds on the social welfare of the particular $\alpha$-Nash equilibrium given by the maximizer of the $\alpha$-potential function. Through numerical experiments, we demonstrated that the proposed algorithms consistently reach approximate equilibria while standard algorithms may fail to converge. Thus, the framework of $\alpha$-potential games broadens the family of network games for which simple payoff-driven learning can be used to guarantee convergence to approximate Nash equilibria. Future extensions can look at time-varying games, stochastic algorithms, and network games with a community structure. Computing the optimal $\alpha$-potential function for a broad class of games is also an interesting problem.

\begin{credits}

\subsubsection{\ackname} This material is based upon work supported by the National Science Foundation under Award No. ECCS-2340289. Vikram Krishnamurthy and Adit Jain were supported by National Science Foundation grants CCF-2312198 and CCF-2112457. Eva Tardos was supported in part by AFOSR grants FA9550-23-1-0410 and FA9550-231-0068, and ONR MURI grant N000142412742. 

\end{credits}

\bibliographystyle{splncs04}
\bibliography{refs}

\newpage

\begin{center}
    {\large \textbf{APPENDIX}}
\end{center}

\appendix
\vspace{1ex}

\section{Proof of Corollary \ref{cor: alpha potential existence}}
\label{sec: proof of theorem alpha potential existence}

\textbf{Key insight:} The goal of Corollary~\ref{cor: alpha potential existence} is to establish that, under mild smoothness and compactness assumptions on the game, the proposed integral expression (given in equation~\eqref{eq: alpha potential expression}) is an $\alpha$-potential function of the game. The key idea of the proof is to quantify the first-order difference between the utility function and the proposed $\alpha$-potential function (see the left-hand-side of equation~\eqref{eq: alpha potential condition} in Definition~\ref{def: alpha potential function}) in terms of the second-order partial derivatives of the players' utilities, which in-turn gives the bound given in equation~\eqref{eq: alpha bound}. The proof specializes a known result for Markov games \cite[Theorem~2.5]{guo2025alphapotentialgameframeworknplayer} to the static setting, and verifies the $\alpha$-potential property. 

\vspace{1ex}
\noindent\textbf{Step 1: Derivative of $\potentialfunction$: }
Since $\utility_\playeridx \in \mathcal{C}^2(\R^\numplayers)$, the function $\potentialfunction$ is well defined and differentiable. Using Leibniz rule, we obtain that
\begin{align}
\label{eq: partial derivative potential function}
    \frac{\partial \potentialfunction}{\partial \action_\playeridx}(\actionvector) = &\int_0^1 \frac{\partial \utility_\playeridx}{\partial \action_\playeridx}(\actionvectorfixed + r(\actionvector - \actionvectorfixed)) \, dr + \int_0^1 \sum_{\playeridxalt=1}^\numplayers \frac{\partial^2 \utility_\playeridxalt}{\partial \action_\playeridxalt \partial \action_\playeridx}(\actionvectorfixed + r(\actionvector - \actionvectorfixed)) \cdot r \cdot (\action_\playeridxalt - \actionfixed_\playeridxalt) \, dr. 
\end{align}

\vspace{1ex}
\noindent\textbf{Step 2: Relation between derivatives of $\potentialfunction$ and $\utility_\playeridx$: }
Let us define the error  
\begin{align*}
    \mathcal{E}_{r}^i (\actionvector, \actionvectorfixed) = \sum_{\playeridxalt=1}^\numplayers \left( \frac{\partial^2 \utility_\playeridxalt}{\partial \action_\playeridxalt \partial \action_\playeridx}(\actionvectorfixed + r(\actionvector - \actionvectorfixed)) - \frac{\partial^2 \utility_\playeridx}{\partial \action_\playeridx \partial \action_\playeridxalt}(\actionvectorfixed + r(\actionvector - \actionvectorfixed)) \right) \cdot (\action_\playeridxalt - \actionfixed_\playeridxalt) 
\end{align*}
so that equation \eqref{eq: partial derivative potential function} can be written as 
\begin{align*}
    \frac{\partial \potentialfunction}{\partial \action_\playeridx}(\actionvector) =\int_0^1 \frac{\partial \utility_\playeridx}{\partial \action_\playeridx}(\actionvectorfixed + r(\actionvector - \actionvectorfixed)) \, dr + \int_0^1 \mathcal{E}_{r}^i (\actionvector, \actionvectorfixed) rdr  + \int_0^1 \sum_{\playeridxalt=1}^\numplayers \frac{\partial^2 \utility_\playeridx}{\partial \action_\playeridx \partial \action_\playeridxalt}(\actionvectorfixed + r(\actionvector - \actionvectorfixed)) \cdot r \cdot (\action_\playeridxalt - \actionfixed_\playeridxalt) \, dr. 
\end{align*}
Using the chain rule on the partial derivatives, we obtain that 
\begin{align*}
    \frac{d}{dr}\frac{\partial \utility_\playeridx}{\partial \action_\playeridx}(\actionvectorfixed + r(\actionvector - \actionvectorfixed)) = \sum_{\playeridxalt=1}^\numplayers \frac{\partial^2 \utility_\playeridx}{\partial \action_\playeridx \partial \action_\playeridxalt}(\actionvectorfixed + r(\actionvector - \actionvectorfixed)) \cdot (\action_\playeridxalt - \actionfixed_\playeridxalt).
\end{align*}
Moreover, it follows from integration by parts that
\begin{align*}
    \int_0^1 \frac{\partial \utility_\playeridx}{\partial \action_\playeridx}(\actionvectorfixed + r(\actionvector - \actionvectorfixed)) \, dr +  \int_0^1 \frac{d}{dr}\frac{\partial \utility_\playeridx}{\partial \action_\playeridx}(\actionvectorfixed + r(\actionvector - \actionvectorfixed)) \cdot r dr  = \left[r\frac{\partial \utility_\playeridx}{\partial \action_\playeridx}(\actionvectorfixed + r(\actionvector - \actionvectorfixed) )\right]_0^1 = \frac{\partial \utility_\playeridx}{\partial \action_\playeridx}(\actionvector). 
\end{align*}
Hence, 
\begin{align}
\label{eq: phi and u_i derivative first relation}
    \frac{\partial \potentialfunction}{\partial \action_\playeridx}(\actionvector)  = \frac{\partial \utility_\playeridx}{\partial \action_\playeridx}(\actionvector) + \int_0^1 \mathcal{E}_{r}^i (\actionvector, \actionvectorfixed) rdr. 
\end{align}

\vspace{1ex}
\noindent\textbf{Step 3: $\alpha$-potential property: }
For any $\playeridx \in \numplayers$, $\actionvector \in \actionset$, and $\action_\playeridx^\prime \in \actionset_\playeridx$, using the fundamental theorem of calculus, we obtain that
\begin{align*}
    \utility_\playeridx((\action_\playeridx^\prime, \action_{-\playeridx})) - \utility_\playeridx((\action_\playeridx, \action_{-\playeridx})) = \int_0^1 \frac{\partial \utility_\playeridx}{\partial \action_\playeridx}(\action_\playeridx + \varepsilon(\action_\playeridx^\prime - \action_\playeridx), \action_{-\playeridx}) \cdot (\action_\playeridx^\prime - \action_\playeridx) \, d\varepsilon. 
\end{align*}
Similarly, for the potential function, 
\begin{align*}
    \potentialfunction((\action_\playeridx^\prime, \action_{-\playeridx})) - \potentialfunction((\action_\playeridx, \action_{-\playeridx})) = \int_0^1 \frac{\partial \potentialfunction}{\partial \action_\playeridx}(\action_\playeridx + \varepsilon(\action_\playeridx^\prime - \action_\playeridx), \action_{-\playeridx}) \cdot (\action_\playeridx^\prime - \action_\playeridx) \, d\varepsilon. 
\end{align*}
Substituting the expression from \eqref{eq: phi and u_i derivative first relation} and defining $\actionvector^\varepsilon = (\action_\playeridx + \varepsilon(\action_\playeridx^\prime - \action_\playeridx), \action_{-\playeridx})$
we have
\begin{align*}
    \potentialfunction((\action_\playeridx^\prime, \action_{-\playeridx})) - \potentialfunction((\action_\playeridx, \action_{-\playeridx})) = \int_0^1 \left[    \frac{\partial \utility_\playeridx}{\partial \action_\playeridx}(\actionvector^\varepsilon)  + \int_0^1 \mathcal{E}_{r}^i (\actionvector^\varepsilon, \actionvectorfixed) r dr \right] \cdot (\action_\playeridx^\prime - \action_\playeridx) \, d\varepsilon. 
\end{align*}
Thus, the difference is given by  
\begin{align}
\label{eq: utility and potential difference difference}
    \left\vert \utility_\playeridx(\action_\playeridx^\prime, \action_{-\playeridx}) - \utility_\playeridx(\action_\playeridx, \action_{-\playeridx})  - (\potentialfunction(\action_\playeridx^\prime, \action_{-\playeridx}) - \potentialfunction(\action_\playeridx, \action_{-\playeridx})) \right\vert =  \left \vert \int_0^1 \int_0^1  \mathcal{E}_{r}^i (\actionvector^\varepsilon, \actionvectorfixed) (\action_\playeridx^\prime - \action_\playeridx) rd\varepsilon dr \right\vert.
\end{align}

\vspace{1ex}
\noindent\textbf{Step 4: Bounding the error term: }
We first bound the difference above as
\begin{align*}
   \left \vert \int_0^1 \int_0^1  \mathcal{E}_{r}^i (\actionvector^\varepsilon, \actionvectorfixed) (\action_\playeridx^\prime - \action_\playeridx) rd\varepsilon dr \right \vert \leq \adelta \int_0^1 \int_0^1  \left \vert\mathcal{E}_{r}^i (\actionvector^\varepsilon, \actionvectorfixed) \right\vert rd\varepsilon dr
    &\leq \adelta\sup_{r,\varepsilon}\left\vert\mathcal{E}_{r}^i (\actionvector^\varepsilon, \actionvectorfixed) \right\vert  \int_0^1 \int_0^1   rd\varepsilon dr \\
    &= \frac{1}{2} \adelta\sup_{r,\varepsilon}\left\vert\mathcal{E}_{r}^i (\actionvector^\varepsilon, \actionvectorfixed) \right\vert. 
\end{align*}
Further, we bound the error term as 
\begin{align*}
    \vert\mathcal{E}_{r}^i (\actionvector^\varepsilon, \actionvectorfixed) \vert &\leq \sum_{\playeridxalt=1}^\numplayers \left\vert\left( \frac{\partial^2 \utility_\playeridxalt}{\partial \action_\playeridxalt \partial \action_\playeridx}(\actionvectorfixed + r(\actionvector^\varepsilon - \actionvectorfixed)) - \frac{\partial^2 \utility_\playeridx}{\partial \action_\playeridx \partial \action_\playeridxalt}(\actionvectorfixed + r(\actionvector^\varepsilon - \actionvectorfixed)) \right) \right\vert \cdot \left\vert  (\action_\playeridxalt - \actionfixed_\playeridxalt)\right\vert \\
    &\leq \adelta\max_{ \actionvector\in\actionset} \sum_{\playeridxalt=1}^\numplayers \left|\frac{\partial^2 \utility_\playeridx}{\partial \action_\playeridx \partial \action_\playeridxalt}(\actionvector) - \frac{\partial^2 \utility_\playeridxalt}{\partial \action_\playeridxalt \partial \action_\playeridx}(\actionvector)\right|. 
\end{align*}
Substituting these bounds into \eqref{eq: utility and potential difference difference}, we obtain that $\potentialfunction$ is an $\alpha$-potential function with
\begin{align*}
    \alpha = \frac{\adelta^2}{2} \max_{\playeridx \in [\numplayers], \actionvector\in\actionset} \sum_{\playeridxalt=1}^\numplayers \left|\frac{\partial^2 \utility_\playeridx}{\partial \action_\playeridx \partial \action_\playeridxalt}(\actionvector) - \frac{\partial^2 \utility_\playeridxalt}{\partial \action_\playeridxalt \partial \action_\playeridx}(\actionvector)\right|. 
\end{align*}
Further, from \eqref{eq: phi and u_i derivative first relation}, we obtain that, for all $a \in \actionset$, 
\begin{align*}
    \left|\frac{\partial \potentialfunction}{\partial \action_\playeridx}(\actionvector) - \frac{\partial \utility_\playeridx}{\partial \action_\playeridx}(\actionvector)\right| &\leq \left|\int_0^1 \mathcal{E}_{r}^i (\actionvector, \actionvectorfixed) rdr\right| \\
    &\leq \sup_{r \in [0,1], \actionvector \in \actionset} |\mathcal{E}_r^i (\actionvector, \actionvectorfixed)| \cdot \left|\int_0^1 rdr\right| \\
    &\leq \frac{\adelta}{2}\max_{ \actionvector\in\actionset} \sum_{\playeridxalt=1}^\numplayers \left|\frac{\partial^2 \utility_\playeridx}{\partial \action_\playeridx \partial \action_\playeridxalt}(\actionvector) - \frac{\partial^2 \utility_\playeridxalt}{\partial \action_\playeridxalt \partial \action_\playeridx}(\actionvector)\right| \\
    &\leq \frac{\alpha}{\adelta}. 
\end{align*}

\section{Proof of Theorem~\ref{thm: sequential best response convergence}}
\label{sec: proof of theorem sequential best-response convergence}

Consider any $k \in \{0 ,1, 2, \dots \}$. Suppose that there exists a player $i \in [\numplayers]$ which satisfies $\max_{a_i \in \actionset_i} \utility_\playeridx(a_i, \actionvector_{-i}^k) > \utility_\playeridx(\actionvector^k) + \alpha + \epsilon$. Then, 
\begin{align*}
    \potentialfunction(\actionvector^{k+1}) &= \potentialfunction(a_i^{k+1}, \actionvector^{k}_{-i}) \\
    &= \potentialfunction(a_i^{k+1}, \actionvector^{k}_{-i}) - \potentialfunction(\actionvector^k) + \potentialfunction(\actionvector^k) \\
    &\stackrel{(a)}{\geq} \utility_\playeridx(a_i^{k+1}, \actionvector^{k}_{-i}) - \utility_\playeridx(\actionvector^k) -  \alpha + \potentialfunction(\actionvector^k) \\
    &\stackrel{(b)}{>} \potentialfunction(\actionvector^k) + \epsilon
\end{align*}
where $(a)$ follows from the fact that $\potentialfunction$ is an $\alpha$-potential function of the game (see Definition~\ref{def: alpha potential function}), and $(b)$ follows from the update rule of player~$i$ in the algorithm which implies $\utility_\playeridx(a_i^{k+1}, \actionvector_{-i}^k) > \utility_\playeridx(\actionvector^k) + \alpha + \epsilon$. On the other hand, if there does not exist a player~$i$ which satisfies the condition $\max_{a_i \in \actionset_i} \utility_\playeridx(a_i, \actionvector_{-i}^k) > \utility_\playeridx(\actionvector^k) + \alpha + \epsilon$, then $\actionvector^{k + 1} = \actionvector^{k}$ and hence $\potentialfunction(\actionvector^{k + 1}) = \potentialfunction(\actionvector^{k})$. 
Since $\max_{\actionvector \in \actionset} |\potentialfunction(\actionvector)| = \potentialfunction^{\max} < \infty$, after at most $\bar{k} := \ceil{\frac{2 \potentialfunction^{\max}}{\epsilon}}$ number of iterations, there cannot exist a player~$i$ which satisfies the condition $\max_{a_i \in \actionset_i} \utility_\playeridx(a_i, \actionvector_{-i}^k) > \utility_\playeridx(\actionvector^k) + \alpha + \epsilon$. Hence, we must have $\actionvector^{k} = \actionvector^{\bar{k}}$ for all $k \geq \bar{k}$, and $\max_{a_i \in \actionset_i} \utility_\playeridx(a_i, \actionvector_{-i}^k) \leq \utility_\playeridx(\actionvector^k) + \alpha + \epsilon$ for all $i \in [\numplayers]$.

\section{Proof of Theorem~\ref{thm: gradient play convergence}}
\label{sec: proof of theorem simultenous gradient play}

\textbf{Key insight:} Theorem~\ref{thm: gradient play convergence} shows that in any game with a smooth $\alpha$-potential function, a suitably modified simultaneous gradient play algorithm will reach a $2\alpha$-Nash equilibrium in finitely many steps. Our result uses ideas from the convergence results for projected gradient ascent on smooth functions using inexact gradients. The key idea of the proof is to show that, since the players' incentives are approximately aligned with the $\alpha$-potential function (see Definition~\ref{def: alpha potential function}), the players updating their actions using local gradient information leads to a monotonic increase in the value of the $\alpha$-potential function until the action profile reaches a set of approximate Nash equilibria. Towards this, we show that, at each iteration as the players update their actions, either the $\alpha$-potential increases by a fixed positive amount, or the current action profile belongs to the set of all $2\alpha$-Nash equilibria of the game. The step-size constraint intuitively guarantees that the sequence of action profiles does not cycle indefinitely. 

\vspace{1ex}
\noindent\textbf{Proof outline:} First, note that since $\potentialfunction$ is $\lipschitz$-smooth, for all $k \geq 0$, 
\begin{align}
    \potentialfunction(\actionvector^{\timeindex+1}) &\stackrel{(a)}{\geq} \potentialfunction(\actionvector^\timeindex) + \nabla \potentialfunction(\actionvector^\timeindex)^T (\actionvector^{\timeindex+1} - \actionvector^\timeindex) - \frac{\lipschitz}{2} \|\actionvector^{\timeindex+1} - \actionvector^\timeindex\|_2^2 \nonumber \\
    &= \potentialfunction(\actionvector^\timeindex) + \sum_{i = 1}^\numplayers \underbrace{\left[\frac{\partial}{\partial a_i} \potentialfunction(\actionvector^k) (a^{\timeindex+1}_i - a^\timeindex_i) - \frac{\lipschitz}{2} (a^{\timeindex+1}_i - a^\timeindex_i)^2\right]}_{=: \potentialincrement^k_i} \label{eq: potential function increment L smooth property}
\end{align}
where $(a)$ follows from Lemma \ref{lem: smooth function property} in Appendix \ref{sec: appendix intermediate results}. Note that $\potentialincrement^k_i$ represents the change in the $\alpha$-potential function due to player~$i$'s updated action at time $k$. If player~$i$ does not update his action at time $k$, then $\potentialincrement^k_i = 0$. To lowerbound $\potentialincrement^k_i$ when player~$i$ updates his action at time $k$, we divide $\actionset_i$ into five $k$-dependent regions. 
Specifically, let $a_i^{+k} := a_i + \stepsize_i \partial/(\partial a_i)\utility_\playeridx(a_i, \actionvector^{\timeindex}_{-i})$, and
\begin{align*}
    \regiononea &:= \{a_i \in \actionset_i : a_i^{+k} \geq \actionmax_i, \ a_i \leq \actionmax_i - \constant_2 \alpha\}, \\
    \regiononeb &:= \{a_i \in \actionset_i : a_i^{+k} \leq \actionmin_i, \ a_i \geq \actionmin_i + \constant_2 \alpha\}, \\
    \regiononec &:= \{a_i \in \actionset_i : \actionmin_i < a_i^{+k} < \actionmax_i\}, \\
    \regiontwoa &:= \{a_i \in \actionset_i : a_i^{+k} \geq \actionmax_i, \ a_i > \actionmax_i - \constant_2 \alpha\}, \\
    \regiontwob &:= \{a_i \in \actionset_i : a_i^{+k} \leq \actionmin_i, \ a_i < \actionmin_i + \constant_2 \alpha\}. 
\end{align*}
In words, $\regiononea$ (resp. $\regiontwoa$) represents the region within $\actionset_i$ where $a_i^{+k}$ is on or outside the \emph{right}-hand-side boundary of $\actionset_i$, while $a_i$ is at least (resp. less than) $\constant_2 \alpha$ inside the boundary; $\regiononeb$ (resp. $\regiontwob$) represents the region within $\actionset_i$ where $a_i^{+k}$ is on or outside the \emph{left}-hand-side boundary of $\actionset_i$, while $a_i$ is at least (resp. less than) $\constant_2 \alpha$ inside the boundary; finally, $\regiononec$ represents the region within $\actionset_i$ where $a_i^{+k}$ is strictly inside the boundaries of $\actionset_i$. 
Note that these regions form a partition of $\actionset_i$. We refer to $\regionone := \regiononea \cup \regiononeb \cup \regiononec$ as the \emph{interior region} and $\regiontwo := \regiontwoa \cup \regiontwob$ as the \emph{exterior region} of the action set $\actionset_i$. 
In \textbf{Part 1} below, we show that if player~$i$ updates his action at time $k$ and $a^k_i \in \regionone$ (the interior region), then $\potentialincrement^k_i \geq \stepsizemin \constant_1^2 \alpha^2$ (Case 1), while if $a^k_i \in \regiontwo$ (the exterior region), then $\potentialincrement^k_i \geq 0$ (Case 2).  
Based on these facts, we can distinguish between the following two only possible scenarios: (S1) At time $k$, there exists a player~$i$ who updates his action, and $a^k_i \in \regionone$. In this scenario, using \eqref{eq: potential function increment L smooth property}, we obtain that $\potentialfunction(\actionvector^{\timeindex+1}) \geq \potentialfunction(\actionvector^\timeindex) + \stepsizemin \constant_1^2 \alpha^2$. (S2) At time $k$, either no player updates his action, or for every player~$i$ that updates his action, $a^k_i \in \regiontwo$. In this scenario, using \eqref{eq: potential function increment L smooth property}, we obtain that $\potentialfunction(\actionvector^{\timeindex+1}) \geq \potentialfunction(\actionvector^\timeindex)$. This ensures that $\{\potentialfunction(\actionvector^k)\}_{k = 0}^\infty$ forms a non-decreasing sequence which increases by at least $\stepsizemin \constant_1^2 \alpha^2 > 0$ whenever (S1) happens. Since $\potentialfunction$ is bounded from above, (S1) can happen only finitely many times. Hence, $\exists \bar{k} \geq 0$ such that for all $k \geq \bar{k}$, only (S2) can happen. Finally, in \textbf{Part 2} below, we show that if (S2) happens at time $k$, then $\actionvector^k$ is a $2\alpha$-Nash equilibrium of the game. 

For ease of notation, let $\gradientapprox^k_i := \partial/(\partial a_i)\utility_\playeridx(\actionvector^{\timeindex})$, $\gradientpotential^k_i := (\partial/ \partial a_i) \potentialfunction(\actionvector^k)$ for all $k \geq 0$, for all $i \in [\numplayers]$. Note that $\potentialincrement^k_i = \gradientpotential^k_i(a^{k+1}_i - a^k_i) - (L/2)(a^{k+1}_i - a^k_i)^2$.

\vspace{1ex}
\noindent\textbf{Part 1}: 
\emph{Case 1: Interior region:} Suppose player~$i$ updates his action at time $k$ such that $a^k_i \in \regionone$. We need to show that 
\begin{align}
\label{eq: potential function increment player i positive}
    \potentialincrement^k_i := \gradientpotential^k_i(a^{k+1}_i - a^k_i) -  \frac{L}{2} (a^{k+1}_i - a^k_i)^2 \geq \stepsizemin \constant_1^2 \alpha^2. 
\end{align}
We consider the following sub-cases. 

\noindent\emph{Case 1A: $a^k_i \in \regiononea$: } 
Using the relationship between the gradient of the utility function of player~$i$ and that of the $\alpha$-potential function in \eqref{eq: phi and u_i derivative relation}, we obtain that
\begin{align*}
    \gradientpotential^k_i \geq \gradientapprox^k_i - \frac{\alpha}{\adelta} = \gradientapprox^k_i - \constant_1 \alpha \geq \frac{\actionmax_i - a^k_i}{\stepsize_i} - \constant_1 \alpha,
\end{align*}
where we use the fact that $\constant_1 = 1/\adelta$ by definition, and $\action_\playeridx^\timeindex+\stepsize_i\gradientapprox^k_i \geq \actionmax_i$ since $a^k_i \in \regiononea$. Due to the projection operation in the update step in Algorithm \ref{alg: gradient play}, we have $a^{k+1}_i := \Projection{\actionset_i}{\action_\playeridx^\timeindex+\stepsize_i\gradientapprox^k_i} = \actionmax_i$. Hence, $\action_\playeridx^{\timeindex+1} - a^k_i \geq 0$. Thus, multiplying both sides of the inequality above by $\action_\playeridx^{\timeindex+1} - a^k_i$, we obtain that
\begin{align}
\label{eq: potential function increment lowerbound 1}
    \gradientpotential^k_i (\action_\playeridx^{\timeindex+1} - a^k_i) \geq \frac{(a^{k+1}_i - a^k_i)^2}{\stepsize_i} - \constant_1 \alpha (\action_\playeridx^{\timeindex+1} - a^k_i). 
\end{align}
Given $\stepsize_i \leq L \constant_2^2 \alpha^2 / (2 (\constant_1 \alpha \gradientbound + \constant_1^2 \alpha^2))$, by rearranging the terms, we obtain that 
\begin{align*}
    \frac{L}{2} \constant_2^2 \alpha^2 \geq \stepsize_i (\constant_1 \alpha \gradientbound + \constant_1^2 \alpha^2). 
\end{align*}
Further, since $\stepsize_i \leq 1/L$ by assumption, we have $1/\stepsize_i - \lipschitz/2 \geq \lipschitz/2 > 0$. Combining this with the inequality above, we obtain that
\begin{align*}
    \constant_2^2 \alpha^2 \left(\frac{1}{\stepsize_i} - \frac{\lipschitz}{2}\right) \geq \constant_2^2 \alpha^2 \frac{L}{2} \geq \constant_1 \alpha \stepsize_i \gradientbound + \stepsize_i \constant_1^2 \alpha^2. 
\end{align*}
Note that $1/\stepsize_i - \lipschitz/2 > 0$ and since $a^k_i \in \regiononea$, we have $a_i^{k+1} - a^k_i = \actionmax_i - a^k_i \geq \constant_2 \alpha \geq 0$ and $a_i^{k+1} - a^k_i = \actionmax_i - a^k_i \leq \stepsize_i \gradientapprox^k_i \leq \stepsize_i \gradientbound$ by Assumption~\ref{assn: continuity and compactness}. Hence, from the inequality above, we obtain that
\begin{align*}
    (a^{k+1}_i - a^k_i)^2 \left(\frac{1}{\stepsize_i} - \frac{\lipschitz}{2}\right) \geq \constant_1 \alpha (a^{k+1}_i - a^k_i) + \stepsize_i \constant_1^2 \alpha^2,
\end{align*}
which can be rewritten as
\begin{align*}
    \frac{(a^{k+1}_i - a^k_i)^2}{\stepsize_i} - \constant_1 \alpha (a^{k+1}_i - a^k_i) \geq \frac{L}{2} (a^{k+1}_i - a^k_i)^2 + \stepsize_i \constant_1^2 \alpha^2. 
\end{align*}
Combining the inequality above with \eqref{eq: potential function increment lowerbound 1} gives the required bound \eqref{eq: potential function increment player i positive} by noting that $\stepsize_i \geq \stepsizemin := \min_{i \in [\numplayers] \stepsize_i}$.

\noindent\emph{Case 1B: $a^k_i \in \regiononeb$: } 
The required result follows from similar steps as in Case 1A and is thus omitted.

\noindent\emph{Case 1C: $a^k_i \in \regiononec$: } 
In this case, $\action_\playeridx^{k+1} = \Projection{\actionset_i}{\action_\playeridx^\timeindex+\stepsize_i\gradientapprox^k_i} = \action_\playeridx^k + \stepsize_i \gradientapprox^k_i$. Hence, 
\begin{align}
\label{eq: L/2 update difference square upperbound}
      \frac{\lipschitz}{2} (a^{\timeindex+1}_\playeridx - a^k_i)^2 =  \frac{\lipschitz}{2}\stepsize_i^2  (\gradientapprox^k_i)^2 \leq \frac{\lipschitz}{2} \stepsize_i^2 \gradientbound^2
\end{align}
due to Assumption~\ref{assn: continuity and compactness}. 
To lower-bound $\gradientpotential^k_i (\action_\playeridx^{\timeindex+1} - a^k_i)$, we first write 
\begin{align}
\label{eq: grad phi_i times update difference}
    \gradientpotential^k_i (\action_\playeridx^{\timeindex+1} - a^k_i)  =\stepsize_i\gradientapprox^k_i\gradientpotential^k_i.  
\end{align}
Since player~$i$ updates at time $k$, from the condition given in Algorithm \ref{alg: gradient play}, it must be that $\left|\gradientapprox^k_i\right| > 2\alpha/\adelta = 2 \constant_1 \alpha$. Thus, either $\gradientapprox^k_i < -2 \constant_1 \alpha$ or $\gradientapprox^k_i > 2 \constant_1 \alpha$. 
If $\gradientapprox^k_i < -2\constant_1 \alpha$, then, using the relationship in \eqref{eq: phi and u_i derivative relation}, we obtain that
\begin{align*}
     \gradientpotential^k_i &\leq  \gradientapprox^k_i  +  \constant_1 \alpha <  - \constant_1 \alpha. 
\end{align*}
Else, $\gradientapprox^k_i > 2\constant_1 \alpha$ and hence
\begin{align*}
     \gradientpotential^k_i &\geq  \gradientapprox^k_i  -  \constant_1 \alpha >   \constant_1 \alpha. 
\end{align*}
Thus, in each case, it holds that
\begin{align*}
    \gradientapprox^k_i \gradientpotential^k_i  > 2\constant_1^2\alpha^2. 
\end{align*}
Hence, since $\stepsize_i \leq 2\constant_1^2\alpha^2/(\lipschitz\gradientbound^2)$ by assumption, we have 
\begin{align*}
    \stepsize_i \gradientapprox^k_i \gradientpotential^k_i > 2\stepsize_i \constant_1^2\alpha^2 \geq \frac{\lipschitz}{2} \stepsize_i^2 \gradientbound^2 + \stepsize_i \constant_1^2\alpha^2. 
\end{align*}
Combining the inequality above with \eqref{eq: grad phi_i times update difference} and \eqref{eq: L/2 update difference square upperbound}, we obtain the required bound \eqref{eq: potential function increment player i positive}  by noting that $\stepsize_i \geq \stepsizemin$.

\noindent \emph{Case 2: Exterior region:} Suppose player~$i$ updates his action at time $k$ such that $a^k_i \in \regiontwo$. We need to show that
\begin{align}
\label{eq: potential function increment player i non negative}
    \potentialincrement^k_i := \gradientpotential^k_i(a^{k+1}_i - a^k_i) - \frac{L}{2} (a^{k+1}_i - a^k_i)^2 \geq 0. 
\end{align}
We consider the following sub-cases. 

\noindent\emph{Case 2A: $a^k_i \in \regiontwoa$: }
Note that $a^k_i \leq \actionmax_i$ and since $a^k_i \in \regiontwoa$, we have $\action_\playeridx^\timeindex+\stepsize_i\gradientapprox^k_i \geq \actionmax_i$. Hence, it must be that $\gradientapprox^k_i \geq 0$. Further, since player~$i$ updates at time $k$, it must be that $\gradientapprox^k_i > 2\constant_1\alpha$. Using \eqref{eq: phi and u_i derivative relation}, this implies that $\gradientpotential^k_i > \constant_1\alpha$. Since $\actionmax_i - a^k_i \geq 0$, we obtain that
\begin{align*}
    \gradientpotential^k_i (\actionmax_i - a^k_i) \geq \constant_1\alpha (\actionmax_i - a^k_i). 
\end{align*}
Since $a^k_i \in \regiontwoa$, we have $a^{k+1}_i := \Projection{\actionset_i}{\action_\playeridx^\timeindex+\stepsize_i\gradientapprox^k_i} = \actionmax_i$ and $\actionmax_i - a^k_i < \constant_2 \alpha$. Further, $\constant_2 \leq 2 \constant_1 / L$ by definition. Hence, we have $(\lipschitz / 2)(\actionmax_i - a^k_i) < (L/2) \constant_2 \alpha \leq \constant_1 \alpha$. Using these facts with the inequality above, we obtain that
\begin{align*}
    \gradientpotential^k_i (a^{k+1}_i - a^k_i) &\geq \constant_1\alpha (a^{k+1}_i - a^k_i) \geq \frac{\lipschitz}{2} (a^{k+1}_i - a^k_i)^2
\end{align*}
as required in \eqref{eq: potential function increment player i non negative}.

\noindent\emph{Case 2B: $a^k_i \in \regiontwob$: } 
The required result follows from similar steps as in Case 2A and is thus omitted.

\vspace{1ex}
\noindent\textbf{Part 2:} We need to show that if (S2) happens at time $k$, i.e., either no player updates his action, or for every player~$i$ that updates his action, $a^k_i \in \regiontwo$, then $\actionvector^k$ is a $2\alpha$-Nash equilibrium. 
Suppose player~$i$ does not update his action at time $k$. By the first-order condition for concavity of $\utility_\playeridx(\cdot, \actionvector^k_{-i})$, for all actions $a_i \in \actionset_i$, it holds that
\begin{align*}
    \utility_\playeridx(a_i, \actionvector^k_{-i}) &\leq \utility_\playeridx(\actionvector^k) + \gradientapprox^k_i \cdot (a_i - a^k_i) \nonumber \\
    &\leq \utility_\playeridx(\actionvector^k) + 2 \constant_1 \alpha \cdot \adelta \nonumber \\
    &= \utility_\playeridx(\actionvector^k) + 2 \alpha,  
\end{align*}
where we use the fact that $|a_i - a^k_i| \leq \adelta$ by Assumption~\ref{assn: continuity and compactness}, $|\gradientapprox^k_i| \leq 2 \constant_1 \alpha$ and $\constant_1 = 1/\adelta$ by definition. 
Next, suppose player~$i$ updates his action and $a^k_i \in \regiontwoa$. Since $a^k_i \in \regiontwoa$, we have $a_i - a^k_i \leq \actionmax_i - a^k_i < \constant_2 \alpha$ for all $a_i \in \actionset_i$. Further, $\gradientapprox^k_i \geq 0$ (see Case 2A in Part 1) and $\gradientapprox^k_i \leq \gradientbound$ by Assumption~\ref{assn: continuity and compactness}. 
Thus, we have
\begin{align*}
    \utility_\playeridx(a_i, \actionvector^k_{-i}) &\leq \utility_\playeridx(\actionvector^k) + \gradientapprox^k_i \cdot (a_i - a^k_i) \nonumber \\
    &\leq \utility_\playeridx(\actionvector^k) + \gradientbound \cdot \constant_2 \alpha \nonumber \\
    &\leq \utility_\playeridx(\actionvector^k) + \alpha \\
    &\leq \utility_\playeridx(\actionvector^k) + 2\alpha,  
\end{align*}
where we use the fact that $\constant_2 \leq 1/\gradientbound$ by definition. 
Similar arguments apply for when $a^k_i \in \regiontwob$. 
Thus, if (S2) happens, then $\actionvector^k$ is a $2 \alpha$-Nash equilibrium of the game.

\section{Proof of Proposition~\ref{prop: alpha potential expression LQ}}
\label{sec: proof of alpha potential expression LQ}

Consider any player $i \in [\numplayers]$ and actions $a_i, a_i' \in \actionset_i, \actionvector_{-i} \in \actionset_{-i}$. Then, 
\begin{align*}
    &\potentialfunction(\actionvector) - \potentialfunction(a_i',\actionvector_{-i}) = -\frac{1}{2} (a_i^2 - a_i'^2) + \beta_i (a_i - a_i') + \frac{\gamma}{2} \sum_{j = 1}^\numplayers \left[G_{ij} a_j (a_i - a_i') + G_{ji} a_j (a_i - a_i')\right].  
\end{align*}
Further, 
\begin{align*}
    \utility_\playeridx(\actionvector) - \utility_\playeridx(a_i',\actionvector_{-i}) = -\frac{1}{2} (a_i^2 - a_i'^2) + \beta_i (a_i - a_i') + \gamma \sum_{j = 1}^\numplayers G_{ij} a_j (a_i - a_i').  
\end{align*}
Hence, 
\begin{align*}
    \left|\left[\utility_\playeridx(\actionvector) - \utility_\playeridx(a_i',\actionvector_{-i})\right] - \left[\potentialfunction(\actionvector) - \potentialfunction(a_i',\actionvector_{-i})\right]\right| &= \frac{|\gamma|}{2} \left|\sum_{j = 1}^\numplayers \left(G_{ij} a_j (a_i - a_i') - G_{ji} a_j (a_i - a_i')\right)\right| \\
    &\leq \frac{|\gamma|}{2} \sum_{j = 1}^\numplayers \left|G_{ij} - G_{ji}\right| \cdot |a_j| \cdot |a_i - a_i'| \\
    &\leq \frac{\abar \adelta |\gamma|}{2} \sum_{j = 1}^\numplayers \left|G_{ij} - G_{ji}\right|. 
\end{align*}
The result then follows from the definition of an $\alpha$-potential function (Definition~\ref{def: alpha potential function}).

\section{Proofs of bounds on \texorpdfstring{$\|\adjacency - \adjacency^T\|_\infty$}{norm of G - G transpose} for various networks in Section \ref{sec: alpha for various networks}}
\label{sec: proofs of alpha and G norm bounds}

To derive high probability bounds, we use the following result which is a simple consequence of Markov's inequality. 

\begin{lemma}{(Application of Markov's inequality)}
\label{lem: Markov inequality}
    Consider $\numplayers$ non-negative random variables $X_1, \dots, X_N$ each with expectation at most $e(\numplayers)$ such that $\numplayers e(\numplayers) \to 0$ as $\numplayers \to \infty$. 
    Then, given $t > 0$ and $\delta > 0$, for all $\numplayers$ large enough, with probability $1 - \delta$, we have $\max_{i \in [\numplayers]} X_i \leq e(\numplayers) + t$. 
\end{lemma}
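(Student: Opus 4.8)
The plan is to reduce the statement about the maximum of the $X_i$ to individual tail bounds via a union bound, and then control each tail using Markov's inequality. Concretely, I would first write, by Boole's inequality,
\[
\P\left(\max_{i \in [\numplayers]} X_i > e(\numplayers) + t\right) = \P\left(\bigcup_{i=1}^\numplayers \{X_i > e(\numplayers)+t\}\right) \le \sum_{i=1}^\numplayers \P\left(X_i > e(\numplayers) + t\right).
\]

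Next, since each $X_i$ is non-negative with $\E[X_i] \le e(\numplayers)$, Markov's inequality gives $\P(X_i > e(\numplayers)+t) \le \E[X_i]/(e(\numplayers)+t) \le e(\numplayers)/(e(\numplayers)+t)$ for every $i$ (here the hypothesis $t > 0$ guarantees the denominator is strictly positive). Summing the $\numplayers$ identical bounds yields the uniform estimate
\[
\P\left(\max_{i \in [\numplayers]} X_i > e(\numplayers) + t\right) \le \frac{\numplayers\, e(\numplayers)}{e(\numplayers) + t}.
\]

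The final step is a limiting argument. Since $0 \le e(\numplayers) \le \numplayers e(\numplayers) \to 0$, we have $e(\numplayers) \to 0$, so the denominator converges to the fixed positive constant $t$, while the numerator $\numplayers e(\numplayers)$ tends to $0$ by hypothesis; hence the right-hand side tends to $0$ as $\numplayers \to \infty$. In particular, for all $\numplayers$ large enough it is at most $\delta$, and taking complements gives $\P(\max_{i \in [\numplayers]} X_i \le e(\numplayers)+t) \ge 1 - \delta$, which is the claim.

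I do not anticipate a genuine obstacle: the result is essentially a one-line consequence of Markov's inequality combined with a union bound. The only point requiring minor care is keeping $t$ fixed so that the denominator stays bounded away from zero throughout the limit; the scaling hypothesis $\numplayers e(\numplayers)\to 0$ is precisely what is needed to defeat the factor of $\numplayers$ introduced by the union bound, so that even after summing $\numplayers$ tail probabilities the total still vanishes.
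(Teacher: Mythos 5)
Your proof is correct and follows essentially the same route as the paper: Markov's inequality on each $X_i$, a union bound over the $\numplayers$ events, and then using $\numplayers e(\numplayers) \to 0$ to drive the total tail probability below $\delta$. The only cosmetic difference is that the paper bounds $\numplayers e(\numplayers)/(e(\numplayers)+t)$ above by $\numplayers e(\numplayers)/t$ directly rather than passing to the limit of the denominator, which is an equivalent final step.
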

\begin{proof}
    By Markov's inequality, for all fixed $t > 0$, 
    \begin{align*}
        \P(X_i \geq e(\numplayers) + t) \leq \frac{\E(X_i)}{e(\numplayers) + t} \leq \frac{e(\numplayers)}{e(\numplayers) + t}. 
    \end{align*}
    By the union bound over all events $X_i \geq e(\numplayers) + t$, we obtain that
    \begin{align*}
        \P\Big(\max_{i \in [\numplayers]} X_i \geq e(\numplayers) + t\Big) \leq \frac{Ne(\numplayers)}{e(\numplayers) + t} \leq \frac{Ne(\numplayers)}{t}. 
    \end{align*}
    Since $\numplayers e(\numplayers) \to 0$ as $\numplayers \to \infty$, given $\delta > 0$, for all $\numplayers$ large enough, 
    \begin{align*}
        \P\Big(\max_{i \in [\numplayers]} X_i \geq e(\numplayers) + t\Big) \leq \delta. 
    \end{align*}
\end{proof}

\subsection{Complete network with errors} 

First, we obtain $|G_{ij} - G_{ji}| \leq 2 \epsilon(\numplayers)$. Thus, $\alpha = \|\adjacency - \adjacency^T\|_\infty \leq 2 \numplayers \epsilon(\numplayers)$. 
Moreover, 
\begin{align*}
    \|\adjacency\|_2^2 &= \sup_{\actionvector \neq 0} \|\adjacency \actionvector\|_2^2/ \|\actionvector\|_2^2 \geq \|\adjacency \one{\numplayers}\|_2^2/ \|\one{\numplayers}\|_2^2 = \one{\numplayers}^T \adjacency^T \adjacency \one{\numplayers}/(\one{\numplayers}^T \one{\numplayers}) = \sum_{i, j} [\adjacency^T \adjacency]_{ij} / \numplayers \\
    &\geq \sum_{i, j} (\numplayers-2) (1 - \epsilon(\numplayers))^2 / \numplayers = \numplayers(\numplayers-2) (1 - \epsilon(\numplayers))^2. 
\end{align*}
Thus, $\|\adjacency\|_2 \geq \sqrt{\numplayers(\numplayers-2)} (1 - \epsilon(\numplayers))$. 
Also, 
\begin{align*}
    \|\adjacency\|_\infty &= \max_i \sum_{j = 1}^\numplayers |G_{ij}| \geq \max_i \sum_{j \neq i} (1 - \epsilon(\numplayers)) = (\numplayers-1) (1 - \epsilon(\numplayers)). 
\end{align*}

\subsection{Complete network with an influential player} 

The bounds on $\|\adjacency - \adjacency^T\|_\infty$ and $\|\adjacency\|_2$ follow by the same arguments as above. Further,
\begin{align*}
    \|\adjacency\|_\infty &= \max_i \sum_{j = 1}^\numplayers |G_{ij}| \geq \sum_{j = 1}^\numplayers |G_{Nj}| \geq \sum_{j \neq i} (w - \epsilon(\numplayers)) = (\numplayers-1) (w - \epsilon(\numplayers)). 
\end{align*}

\subsection{Complete network with random signs} 

Fix any $i < j$. With probability $1 - \delta(\numplayers)$, the sign of $G_{ij}$ is not flipped. Thus, we obtain $|G_{ij} - G_{ji}| \leq 2 \epsilon(\numplayers)$ with probability $1 - \delta(\numplayers)$ as in the case of a complete network with errors (without signs). With probability $\delta(\numplayers)$, we obtain $|G_{ij} - G_{ji}| \leq 2 (1 + \epsilon(\numplayers))$ due to the sign of $G_{ij}$ and $G_{ji}$ being opposite. Thus, $\E(|G_{ij} - G_{ji}|) \leq 2 \epsilon(\numplayers) (1 - \delta(\numplayers)) + 2 (1 + \epsilon(\numplayers)) \delta(\numplayers) = 2 (\epsilon(\numplayers) + \delta(\numplayers))$. 

Consider the case $\epsilon(\numplayers) = \delta(\numplayers) = 1/\numplayers^r$ with $r > 2$. Let $\sum_{j = 1}^\numplayers |G_{ij} - G_{ji}| =: X_i$. Then, for all $i \in [\numplayers]$, $\E(X_i) \leq 2 \numplayers (\epsilon(\numplayers) + \delta(\numplayers)) = 4 / \numplayers^{r - 1}$. Note that $\numplayers \E(X_i) = 4 / \numplayers^{r - 2} \to 0$ as $\numplayers \to \infty$. Thus, by Lemma \ref{lem: Markov inequality}, we obtain that, for any $\delta > 0$ and $t > 0$, with probability $1 - \delta$, $\alpha = \|\adjacency - \adjacency^T\|_\infty = \max_{i \in [\numplayers]} X_i \leq 4/\numplayers^{r-1} + t$.

\subsection{\erdosrenyi network}

Given any $i \neq j$, $\E(|G_{ij} - G_{ji}|) = 2 p(\numplayers)(1 - p(\numplayers))$. 
Given that either $p(\numplayers) \leq 1/\numplayers^r$ or $p(\numplayers) \geq 1 - 1/\numplayers^r$, we have $p(\numplayers) (1 - p(\numplayers)) \leq 1/\numplayers^r$. Suppose $r > 2$. Let $\sum_{j = 1}^\numplayers |G_{ij} - G_{ji}| =: X_i$. Then, for all $i \in [\numplayers]$, $\E(X_i) \leq 2 \numplayers p(\numplayers)(1 - p(\numplayers)) \leq 2 / \numplayers^{r-1}$. Note that $\numplayers \E(X_i) \leq 2/\numplayers^{r - 2}$. Thus, by Lemma \ref{lem: Markov inequality}, we obtain that, for any $\delta > 0$ and $t > 0$, with probability $1 - \delta$, $\alpha = \|\adjacency - \adjacency^T\|_\infty = \max_{i \in [\numplayers]} X_i \leq 2 / \numplayers^{r-1} + t$.

\subsection{Small world network}

Fix any $i \neq j$. We consider the following two cases. (a)
Suppose that before rewiring the network, edge $(j,i)$ exists. 
Then, $G_{ij} = 0$ if and only if edge $(j,i)$ is selected for rewiring and connected to $k \neq i$. Thus, $\P(G_{ij} = 0) = p(\numplayers) \cdot (\numplayers-2)/(\numplayers-1) \leq p(\numplayers)$. Note that the two events $G_{ij} = 0$ and $G_{ji} = 1$ are independent. Hence, $\P(G_{ij} = 0, G_{ji} = 1) = \P(G_{ij} = 0) \cdot \P(G_{ji} = 1) \leq p(\numplayers)$. 
(b) Next, consider the case where before rewiring the network, the edge $(j,i)$ does not exist. Then, $G_{ij} = 1$ if and only if for some $k$ such that edge $(j,k)$ exists before rewiring the network (note that there are $2d(\numplayers)$ such edges), the edge is chosen for rewiring and connected to $i$. Note that each of these $2d(\numplayers)$ events has probability $p(\numplayers) \cdot 1/(\numplayers-1)$. By the union bound, the probability that at least one of them happens satisfies $\P(G_{ij} = 1) \leq 2 d(\numplayers) \cdot p(\numplayers) \cdot 1 / (\numplayers-1)$. Hence, $\P(G_{ij} = 1, G_{ji} = 0) = \P(G_{ij} = 1) \cdot \P(G_{ji} = 0) \leq 2 d(\numplayers) p(\numplayers) / (\numplayers-1)$. 

Fix any node $i \in [\numplayers]$. It has $2 d(\numplayers)$ neighbours $j$ such that the edges $(i,j)$ and $(j,i)$ exist before rewiring. Thus, using the bounds derived above, 
\begin{align*}
    \E\left(\sum_{j = 1}^\numplayers |G_{ij} - G_{ji}|\right) &= \sum_{j = 1}^\numplayers \left( \P(G_{ij} = 0, G_{ji} = 1) + \P(G_{ij} = 1, G_{ji} = 0)\right) \\
    &= 2 \sum_{j = 1}^\numplayers \P(G_{ij} = 0, G_{ji} = 1) \\
    &= 2 \left( \sum_{\substack{j : (i,j) \text{ exists } \\ \text{ before rewiring}}} \P(G_{ij} = 0, G_{ji} = 1) + \sum_{\substack{j : (i,j) \text{ does not exist } \\ \text{ before rewiring}}} \P(G_{ij} = 0, G_{ji} = 1)\right) \\
    &\leq 2 \left( 2 d(\numplayers) \cdot p(\numplayers) + \underbrace{(\numplayers - 2 d(\numplayers))}_{\leq \numplayers-1} \cdot \frac{2 d(\numplayers) p(\numplayers)}{\numplayers-1}\right) \\
    &\leq 8 d(\numplayers) p(\numplayers). 
\end{align*}
Let $X_i := \sum_{j = 1}^\numplayers |G_{ij} - G_{ji}|$. Then, for all $i \in [\numplayers]$, $\E(X_i) \leq 8 d(\numplayers) p(\numplayers)$. Suppose $d(\numplayers) = \numplayers / 8$, $p(\numplayers) = 1/\numplayers^r$ with $r > 2$. Hence, $\numplayers \E(X_i) = 1/\numplayers^{r-2} \to 0$ as $\numplayers \to \infty$. Thus, by Lemma \ref{lem: Markov inequality}, we obtain that, for any $\delta > 0$ and $t > 0$, with probability $1 - \delta$, $\alpha = \|\adjacency - \adjacency^T\|_\infty = \max_{i \in [\numplayers]} X_i \leq 1/\numplayers^{r-1} + t$.

\subsection{Star network with erased edges}

For $i > 1, j > 1$, there does not exist an edge between nodes $i, j$. Hence, for such $i, j$, we have $|G_{ij} - G_{ji}| = 0$. 
For $j > 1$, there exists an edge from node $1$ to node $j$ with probability $1 - p(\numplayers)$ and from node $j$ to node $1$ independently with probability $1 - p(\numplayers)$. Hence, for all $j > 1$, $|G_{1j} - G_{j1}| = 1$ with probability $2 p(\numplayers) (1 - p(\numplayers))$ and $|G_{1j} - G_{j1}| = 0$ otherwise. 
Thus, for $i > 1$, $\E(\sum_{j = 1}^\numplayers |G_{ij} - G_{ji}|) = \E(|G_{i1} - G_{1i}|) = 2 p(\numplayers) (1 - p(\numplayers)) \leq 2 p(\numplayers)$. 
Further, for $i = 1$, $\E(\sum_{j = 1}^\numplayers |G_{1j} - G_{j1}|) \leq 2 (\numplayers-1) p(\numplayers) (1 - p(\numplayers)) \leq 2 \numplayers p(\numplayers)$.
Thus, for all $i \in [\numplayers]$, $\E(\sum_{j = 1}^\numplayers |G_{ij} - G_{ji}|) \leq 2 \numplayers p(\numplayers)$. 

Let $\sum_{j = 1}^\numplayers |G_{ij} - G_{ji}| =: X_i$. Suppose $p(\numplayers) = 1/\numplayers^r$ with $r > 2$. Then, $\E(X_i) \leq 2 \numplayers p(\numplayers) = 2/\numplayers^{r-1}$. Hence, $\numplayers \E(X_i) \leq 2 / \numplayers^{r-2} \to 0$ as $\numplayers \to \infty$. Thus, by Lemma \ref{lem: Markov inequality}, we obtain that, for any $\delta > 0$ and $t > 0$, with probability $1 - \delta$, $\alpha = \|\adjacency - \adjacency^T\|_\infty = \max_{i \in [\numplayers]} X_i \leq 2/\numplayers^{r-1} + t$.

\section{Proof of Theorem~\ref{thm: pos}}
\label{sec: proof of theorem pos}

By definition of $\sw$ in \eqref{eq: social welfare} and Lemma \ref{lem: unique social optimum} in Appendix~\ref{sec: appendix intermediate results}, we obtain that
{\allowdisplaybreaks
\begin{align*}
    \sw(\actionvectoropt) &= -\frac{1}{2} (\actionvectoropt)^T \actionvectoropt + (\actionvectoropt)^T \left( \betavector + \gamma \adjacency \actionvectoropt\right) \\
    &= -\frac{1}{2} (\actionvectoropt)^T \actionvectoropt + (\actionvectoropt)^T \left( \betavector + \gamma (\adjacency + \adjacency^T) \actionvectoropt - \gamma \adjacency^T \actionvectoropt\right) \\
    &= -\frac{1}{2} (\actionvectoropt)^T \actionvectoropt + (\actionvectoropt)^T \left(\actionvectoropt - \gamma \adjacency^T \actionvectoropt\right) \\
    &= \frac{1}{2} (\actionvectoropt)^T \actionvectoropt - \gamma (\actionvectoropt)^T \adjacency^T \actionvectoropt \\
    &= (\actionvectoropt)^T \left(\frac{\identity}{2} - \gamma \adjacency^T\right) \actionvectoropt \\
    &\stackrel{(a)}{=} (\actionvectoropt)^T \left(\frac{\identity}{2} - \frac{\gamma(\adjacency + \adjacency^T)}{2}\right) \actionvectoropt \\
    &\stackrel{(b)}{\leq} \lambdamax{\frac{\identity}{2} - \gamma\frac{\adjacency + \adjacency^T}{2}} \|\actionvectoropt\|_2^2 \\
    &= \frac{\lambdamax{\identity - \gamma(\adjacency + \adjacency^T)}}{2} \betavector^T \left(\identity - \gamma (\adjacency + \adjacency^T)\right)^{-2} \betavector \\
    &\stackrel{(c)}{\leq} \frac{\lambdamax{\identity - \gamma(\adjacency + \adjacency^T)}}{2} \lambdamax{\left(\identity - \gamma (\adjacency + \adjacency^T)\right)^{-2}} \|\betavector\|_2^2 \\
    &\stackrel{}{=} \frac{\lambdamax{\identity - \gamma(\adjacency + \adjacency^T)}}{2\lambdamin{\identity - \gamma (\adjacency + \adjacency^T)}^{2}} \|\betavector\|_2^2 \\
    &\stackrel{}{=} \frac{1 - \lambdamin{\gamma(\adjacency + \adjacency^T)}}{2(1 - \lambdamax{\gamma (\adjacency + \adjacency^T)}^{2}} \|\betavector\|_2^2 \\
    &\stackrel{(d)}{>} 0, 
\end{align*}
where $(a)$ follows from the fact that for any matrix $A$ and vector $x$, $x^T A x = x^T ((A + A^T)/2) x$, $(b)$ holds since for any symmetric matrix $A$, $x^T A x \leq \lambdamax{A} \|x\|_2^2$, $(c)$ holds since $\lambdamax{\identity - \gamma(\adjacency + \adjacency^T)} > 0$ by Lemma \ref{lem: lambda max G mat} and $(d)$ follows from Lemma \ref{lem: lambda max G mat}. 

Similarly, by Lemma \ref{lem: unique alpha potential maximum} in Appendix~\ref{sec: appendix intermediate results}, we obtain that 
\begin{align*}
    \sw(\actionvectorpotential) \nonumber &= -\frac{1}{2} (\actionvectorpotential)^T \actionvectorpotential + (\actionvectorpotential)^T \left( \betavector + \gamma \adjacency \actionvectorpotential\right) \\
    &= -\frac{1}{2} (\actionvectorpotential)^T \actionvectorpotential + (\actionvectorpotential)^T \left( \betavector + \frac{\gamma}{2} (\adjacency + \adjacency^T) \actionvectorpotential + \frac{\gamma}{2} (\adjacency - \adjacency^T) \actionvectorpotential\right) \\
    &= -\frac{1}{2} (\actionvectorpotential)^T \actionvectorpotential + (\actionvectorpotential)^T \left(\actionvectorpotential + \frac{\gamma}{2}(\adjacency - \adjacency^T) \actionvectorpotential\right) \\
    &= \frac{1}{2} (\actionvectorpotential)^T \actionvectorpotential + \frac{\gamma}{2} (\actionvectorpotential)^T (\adjacency - \adjacency^T) \actionvectorpotential \\
    &= \frac{1}{2} \|\actionvectorpotential\|_2^2 \\
    &= \frac{1}{2} \betavector^T \left(\identity - \frac{\gamma(\adjacency + \adjacency^T)}{2}\right)^{-2} \betavector \\
    &\geq \frac{1}{2} \lambdamin{\left(\identity - \frac{\gamma(\adjacency + \adjacency^T)}{2}\right)^{-2}} \|\betavector\|_2^2 \\
    &= \frac{1}{2\lambdamax{\identity - \frac{\gamma(\adjacency + \adjacency^T)}{2}}^2} \|\betavector\|_2^2 \\
    &= \frac{1}{2\left(1 - \frac{\lambdamin{\gamma(\adjacency + \adjacency^T)}}{2}\right)^2} \|\betavector\|_2^2 \\
    &> 0. 
\end{align*}
Thus, 
\begin{align*}
    \frac{\sw(\actionvectoropt)}{\sw(\actionvectorpotential)} &\leq \frac{(1 - \lambdamin{\gamma(\adjacency + \adjacency^T)} \left(1 - \frac{\lambdamin{\gamma(\adjacency + \adjacency^T)}}{2}\right)^2}{\left(1 - \lambdamax{\gamma(\adjacency + \adjacency^T)}\right)^2} \\
    &\leq \frac{\left(1 + \left|\lambdamin{\gamma(\adjacency + \adjacency^T)}\right|\right) \left(1 + \frac{\left|\lambdamin{\gamma(\adjacency + \adjacency^T)}\right|}{2}\right)^2}{\left(1 - \left|\lambdamax{\gamma(\adjacency + \adjacency^T)}\right|\right)^2} \\
    &\leq \frac{\left(1 + \left|\lambdamin{\gamma(\adjacency + \adjacency^T)}\right|\right)^3}{\left(1 - \left|\lambdamax{\gamma(\adjacency + \adjacency^T)}\right|\right)^2} \\
    &\leq \frac{\left(1 + |\gamma| \max_{i \in [\numplayers]} \left(\sum_{j \neq i} |G_{ij}|+\sum_{j \neq i} |G_{ji}|\right)\right)^3}{\left(1 - |\gamma| \max_{i \in [\numplayers]} \left(\sum_{j \neq i} |G_{ij}| + \sum_{j \neq i} |G_{ji}|\right)\right)^2}
\end{align*}
where the last inequality is due to the Gershgorin Disks Theorem \cite[Theorem~2.8]{bullo2018lectures}. Further, given $|G_{ij}| \leq 1/\numplayers$ for all $i, j$, then 
\begin{align*}
    \frac{\left(1 + |\gamma| \max_{i \in [\numplayers]} \left(\sum_{j \neq i} |G_{ij}|+\sum_{j \neq i} |G_{ji}|\right)\right)^3}{\left(1 - |\gamma| \max_{i \in [\numplayers]} \left(\sum_{j \neq i} |G_{ij}| + \sum_{j \neq i} |G_{ji}|\right)\right)^2} \leq \frac{\left(1+2|\gamma|\right)^3}{\left(1-2|\gamma|\right)^2}. 
\end{align*}}

\section{Intermediate results}
\label{sec: appendix intermediate results}

We recall the following well-known property of an $\lipschitz$-smooth function. 
\begin{lemma}{(Lower quadratic bound for $\lipschitz$-smooth functions \cite[Lemma~3.4]{bubeck2015convexopt})}
\label{lem: smooth function property}
    Let $\actionset \subseteq \R^\numplayers$ be a convex set and $\potentialfunction : \actionset \to \R$ be $\lipschitz$-smooth. Then, for all $\actionvector, \actionvector' \in \actionset$, it holds that
        \begin{align*}
            \potentialfunction(\actionvector) \geq \potentialfunction(\actionvector') + \nabla \potentialfunction(\actionvector')^T (\actionvector - \actionvector') - \frac{\lipschitz}{2} \|\actionvector - \actionvector'\|_2^2. 
        \end{align*}
\end{lemma}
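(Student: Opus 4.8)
The plan is to prove this standard lower quadratic bound directly from the definition of $\lipschitz$-smoothness, namely that the gradient is $\lipschitz$-Lipschitz: $\|\nabla\potentialfunction(\actionvector) - \nabla\potentialfunction(\actionvector')\|_2 \leq \lipschitz \|\actionvector - \actionvector'\|_2$ for all $\actionvector, \actionvector' \in \actionset$. The core idea is to integrate the gradient along the straight-line segment joining $\actionvector'$ to $\actionvector$, which stays inside $\actionset$ by convexity, and then to control the deviation of the gradient from its value at $\actionvector'$ using the Lipschitz property.

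First I would parametrize the segment by $\actionvector(t) := \actionvector' + t(\actionvector - \actionvector')$ for $t \in [0,1]$, which lies in $\actionset$ since $\actionset$ is convex. By the fundamental theorem of calculus applied to $t \mapsto \potentialfunction(\actionvector(t))$ together with the chain rule, I obtain
\begin{align*}
    \potentialfunction(\actionvector) - \potentialfunction(\actionvector') = \int_0^1 \nabla\potentialfunction(\actionvector(t))^T (\actionvector - \actionvector') \, dt.
\end{align*}
Subtracting and adding the linear term $\nabla\potentialfunction(\actionvector')^T(\actionvector - \actionvector')$ inside the integral then gives
\begin{align*}
    \potentialfunction(\actionvector) - \potentialfunction(\actionvector') - \nabla\potentialfunction(\actionvector')^T (\actionvector - \actionvector') = \int_0^1 \left[\nabla\potentialfunction(\actionvector(t)) - \nabla\potentialfunction(\actionvector')\right]^T (\actionvector - \actionvector') \, dt.
\end{align*}

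Next I would lower-bound the right-hand side. By the Cauchy--Schwarz inequality, the integrand is at least $-\|\nabla\potentialfunction(\actionvector(t)) - \nabla\potentialfunction(\actionvector')\|_2 \cdot \|\actionvector - \actionvector'\|_2$, and by the $\lipschitz$-Lipschitz property of the gradient together with $\|\actionvector(t) - \actionvector'\|_2 = t\|\actionvector - \actionvector'\|_2$, this is at least $-\lipschitz t \|\actionvector - \actionvector'\|_2^2$. Integrating $-\lipschitz t \|\actionvector - \actionvector'\|_2^2$ over $t \in [0,1]$ yields $-\tfrac{\lipschitz}{2}\|\actionvector - \actionvector'\|_2^2$, so that
\begin{align*}
    \potentialfunction(\actionvector) - \potentialfunction(\actionvector') - \nabla\potentialfunction(\actionvector')^T (\actionvector - \actionvector') \geq -\frac{\lipschitz}{2}\|\actionvector - \actionvector'\|_2^2,
\end{align*}
which rearranges to the claimed inequality.

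This argument is essentially routine; the only point requiring care is to apply Cauchy--Schwarz in the correct direction so as to produce a \emph{lower} bound (rather than the usual upper bound one gets when proving the companion descent lemma), and to track the factor of $t$ that arises from the Lipschitz estimate along the segment, which is exactly what produces the factor $1/2$ after integration. No deep obstacle is expected, since this is a classical fact (see \cite[Lemma~3.4]{bubeck2015convexopt}); the proof is included only for completeness.
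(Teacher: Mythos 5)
Your proof is correct and is exactly the standard argument for this classical descent-type inequality: integrate the gradient along the segment (valid by convexity of $\actionset$), subtract the linear term, and bound the integrand from below via Cauchy--Schwarz and the $\lipschitz$-Lipschitz gradient, with the factor $t$ integrating to $1/2$. The paper gives no proof of its own here, deferring entirely to the cited reference, and your argument coincides with the one in that source, so there is nothing to compare beyond noting that it is complete and correct.
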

The $\alpha$-potential function in \eqref{eq: alpha potential expression} has the following useful properties. 

\begin{lemma}{(Properties of $\potentialfunction$ in \eqref{eq: alpha potential expression})}
\label{lem: alpha potential properties}
    Consider a game satisfying Assumption~\ref{assn: continuity and compactness}, and assume $\utility_\playeridx \in \mathcal{C}^3(\R^\numplayers)$ for all $i \in [\numplayers]$. Then, the $\alpha$-potential function $\potentialfunction$ in \eqref{eq: alpha potential expression} satisfies the following properties. 
    \begin{enumerate}
        \item $\potentialfunction$ is bounded, i.e., $\exists \potentialfunction^{\max} \geq 0$ such that $\max_{\actionvector \in \actionset} |\potentialfunction(\actionvector)| \leq \potentialfunction^{\max}$. 
        \item $\potentialfunction$ is $\lipschitz$-smooth. 
    \end{enumerate}
\end{lemma}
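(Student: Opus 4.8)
The plan is to establish both properties by direct estimation, exploiting the convexity and compactness of $\actionset$ together with the smoothness of the utilities. Throughout I use that, since $\actionset = \prod_\playeridx [\actionmin_\playeridx, \actionmax_\playeridx]$ is convex and both $\actionvector, \actionvectorfixed \in \actionset$, the whole segment $\actionvectorfixed + r(\actionvector - \actionvectorfixed)$ lies in $\actionset$ for every $r \in [0,1]$, so the uniform bounds of Assumption~\ref{assn: continuity and compactness} apply along it.

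For the first property (boundedness) I would bound the integrand of $\potentialfunction$ in \eqref{eq: alpha potential expression} pointwise. Assumption~\ref{assn: continuity and compactness}(ii) gives $\left|\partial \utility_\playeridxalt / \partial \action_\playeridxalt\right| \leq \gradientbound$ along the segment, while $\left|\action_\playeridxalt - \actionfixed_\playeridxalt\right| \leq \adelta$. Applying the triangle inequality termwise and integrating over $r \in [0,1]$ yields $\left|\potentialfunction(\actionvector)\right| \leq \numplayers \gradientbound \adelta$ for all $\actionvector \in \actionset$, so the choice $\potentialfunction^{\max} := \numplayers \gradientbound \adelta$ establishes part~1.

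For the second property ($\lipschitz$-smoothness) I would first upgrade the regularity of $\potentialfunction$ from mere differentiability (already provided by Corollary~\ref{cor: alpha potential existence}) to $\mathcal{C}^2$. Starting from the formula for $\partial \potentialfunction / \partial \action_\playeridx$ in \eqref{eq: partial derivative potential function}, the integrand is a sum of first- and second-order partial derivatives of the $\utility_\playeridxalt$, composed with the affine map $r \mapsto \actionvectorfixed + r(\actionvector - \actionvectorfixed)$ and multiplied by the smooth factors $r$ and $(\action_\playeridxalt - \actionfixed_\playeridxalt)$. Because $\utility_\playeridxalt \in \mathcal{C}^3(\R^\numplayers)$, this integrand is $\mathcal{C}^1$ in $\actionvector$ with partial derivatives jointly continuous in $(\actionvector, r)$ on the compact domain $\actionset \times [0,1]$; the Leibniz rule then permits differentiation under the integral sign, showing that each second-order partial $\partial^2 \potentialfunction / (\partial \action_\playeridx \partial \action_\playeridxalt)$ exists and equals the integral over $[0,1]$ of a function that is continuous in $(\actionvector, r)$. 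Hence $\potentialfunction \in \mathcal{C}^2(\actionset)$, and every Hessian entry, being the integral of a continuous function over a compact set, is bounded. Defining
\[
\lipschitz := \numplayers \max_{\playeridx, \playeridxalt \in [\numplayers]} \, \max_{\actionvector \in \actionset} \left| \frac{\partial^2 \potentialfunction}{\partial \action_\playeridx \partial \action_\playeridxalt}(\actionvector) \right|,
\]
which is finite by the above, I obtain $\|\nabla^2 \potentialfunction(\actionvector)\|_2 \leq \|\nabla^2 \potentialfunction(\actionvector)\|_F \leq \lipschitz$ for all $\actionvector \in \actionset$. Finally, for any $\actionvector, \actionvector' \in \actionset$, convexity of $\actionset$ lets me write
\[
\nabla \potentialfunction(\actionvector) - \nabla \potentialfunction(\actionvector') = \int_0^1 \nabla^2 \potentialfunction\bigl(\actionvector' + s(\actionvector - \actionvector')\bigr)(\actionvector - \actionvector') \, ds,
\]
and taking norms yields $\|\nabla \potentialfunction(\actionvector) - \nabla \potentialfunction(\actionvector')\|_2 \leq \lipschitz \|\actionvector - \actionvector'\|_2$, which is exactly $\lipschitz$-smoothness.

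I expect the main obstacle to be the regularity bookkeeping in part~2: carefully justifying the interchange of differentiation and integration, and verifying that every term produced by differentiating \eqref{eq: partial derivative potential function} once more---in particular the third-order derivatives of the $\utility_\playeridxalt$ that the $\mathcal{C}^3$ hypothesis is precisely introduced to control---is continuous on the compact domain. Once continuity on a compact set is secured, the boundedness of the Hessian and the resulting Lipschitz estimate are routine.
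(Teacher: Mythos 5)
Your proposal is correct and follows essentially the same route as the paper's proof: boundedness from compactness of $\actionset$ (you additionally extract the explicit constant $\numplayers \gradientbound \adelta$ by termwise estimation of the integrand, which the paper does not bother to do), and $\lipschitz$-smoothness by differentiating \eqref{eq: partial derivative potential function} under the integral sign using the $\mathcal{C}^3$ hypothesis, bounding the resulting continuous Hessian on the compact set $\actionset$, and integrating $\nabla^2\potentialfunction$ along the segment joining $\actionvector'$ to $\actionvector$. The only cosmetic difference is that you control the spectral norm via the Frobenius norm with an explicit constant, whereas the paper directly asserts a bound on $\max_{\actionvector\in\actionset}\|\nabla^2\potentialfunction(\actionvector)\|_2$ and invokes Taylor's theorem; both are valid.
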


\begin{proof}
Under Assumption~\ref{assn: continuity and compactness}: 
\begin{enumerate}
    \item $\potentialfunction : \actionset \to \R$ is a continuous function defined on a compact set and is thus bounded. 
    
    \item The derivative of $\potentialfunction$, i.e., $\nabla \potentialfunction : \actionset \to \R^\numplayers$ is well-defined and its expression is given by equation \eqref{eq: partial derivative potential function} in Appendix~\ref{sec: proof of theorem alpha potential existence}. Since $\utility_\playeridx \in \mathcal{C}^3(\R^\numplayers)$ for all $i \in [\numplayers]$, we obtain that $\nabla^2 \potentialfunction : \actionset \to \R^{\numplayers \times \numplayers}$ exists and is continuous. Since $\actionset$ is compact, we obtain that $\exists \lipschitz \geq 0$ such that $\max_{\actionvector \in \actionset} \|\nabla^2 \potentialfunction(\actionvector)\|_2 \leq \lipschitz$. Hence, using Taylor's theorem \cite[Theorem~2.1]{nocedal1999numerical}, we obtain that for all $\actionvector, \actionvector' \in \actionset$, 
    \begin{align*}
        \|\nabla \potentialfunction(\actionvector) - \nabla \potentialfunction(\actionvector')\|_2^2 &= \left\|\int_0^1 \nabla^2 \potentialfunction(\actionvector' + t(\actionvector - \actionvector')) \cdot (\actionvector - \actionvector') dt\right\|_2^2 \\
        &\leq \int_0^1 \|\nabla^2 \potentialfunction(\actionvector' + t(\actionvector - \actionvector'))\|_2^2 \cdot \|\actionvector - \actionvector'\|_2^2 dt \\
        &\leq \lipschitz^2 \|\actionvector - \actionvector'\|_2^2. 
    \end{align*}
\end{enumerate} 
\end{proof}

\begin{lemma}
\label{lem: lambda max G mat}
     Under Assumption~\ref{assn: contraction}, it holds that $\spectralradius{\gamma (\adjacency + \adjacency^T)} < 1$ and $\spectralradius{\gamma (\adjacency + \adjacency^T)/2} < 1/2$. Hence, the matrices $\identity - \gamma (\adjacency + \adjacency^T)$ and $\identity - \gamma (\adjacency + \adjacency^T)/2$ are invertible. 
\end{lemma}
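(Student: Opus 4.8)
The plan is to control the spectral radius of the symmetric matrix $\gamma(\adjacency + \adjacency^\transpose)$ directly from the entrywise and scalar bounds in Assumption~\ref{assn: contraction}, and then read off invertibility from the fact that a spectral radius strictly below $1$ precludes $1$ from being an eigenvalue. First I would observe that $\adjacency + \adjacency^\transpose$ is symmetric and has zero diagonal (since $\adjacency_{ii} = 0$ for all $i$), so its eigenvalues are real and can be localized by the Gershgorin Disks Theorem \cite[Theorem~2.8]{bullo2018lectures} (already invoked in the proof of Theorem~\ref{thm: pos}). Each Gershgorin disk is centered at the diagonal entry $0$ with radius equal to the corresponding absolute off-diagonal row sum, so every eigenvalue $\lambda$ of $\adjacency + \adjacency^\transpose$ satisfies $|\lambda| \le \max_{i} \sum_{j \ne i} |\adjacency_{ij} + \adjacency_{ji}|$.

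Next I would bound this row sum using the entrywise hypothesis $|\adjacency_{ij}| \le 1/\numplayers$. For each $j \ne i$ we get $|\adjacency_{ij} + \adjacency_{ji}| \le |\adjacency_{ij}| + |\adjacency_{ji}| \le 2/\numplayers$, and there are at most $\numplayers - 1$ such terms in any row, so the maximal absolute row sum is at most $(\numplayers - 1)\cdot (2/\numplayers) < 2$. Hence $\spectralradius{\adjacency + \adjacency^\transpose} < 2$. (Equivalently, one could use $\spectralradius{\adjacency+\adjacency^\transpose} = \|\adjacency+\adjacency^\transpose\|_2 \le \|\adjacency+\adjacency^\transpose\|_\infty < 2$, valid since the matrix is symmetric; Gershgorin is the cleaner route and is consistent with the rest of the appendix.)

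Combining this with $|\gamma| < 1/2$ from Assumption~\ref{assn: contraction} gives $\spectralradius{\gamma(\adjacency + \adjacency^\transpose)} = |\gamma|\,\spectralradius{\adjacency + \adjacency^\transpose} < 2|\gamma| < 1$, and dividing by $2$ yields $\spectralradius{\gamma(\adjacency + \adjacency^\transpose)/2} < 1/2$, as claimed. For invertibility I would use that the eigenvalues of $\identity - M$ are exactly $1 - \lambda$ over the eigenvalues $\lambda$ of $M$; since $\spectralradius{M} < 1$ forces $|\lambda| < 1$ and therefore $1 - \lambda \ne 0$, the matrix $\identity - M$ has no zero eigenvalue and is invertible. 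Applying this to $M = \gamma(\adjacency + \adjacency^\transpose)$ and to $M = \gamma(\adjacency + \adjacency^\transpose)/2$ (where $\spectralradius{M} < 1/2 < 1$ already suffices) gives invertibility of both matrices.

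I do not anticipate a genuine obstacle here: the argument is a one-line spectral-radius estimate. The only points requiring a little care are exploiting the zero diagonal so that the Gershgorin disks are centered at the origin, and correctly counting the $\numplayers - 1$ off-diagonal entries per row so that the factor $(\numplayers-1)/\numplayers$ keeps the bound strictly below $2$; the strict inequality is what ultimately delivers the strict bound needed for invertibility.
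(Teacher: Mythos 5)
Your proof is correct and follows essentially the same route as the paper's: both exploit the zero diagonal, bound the off-diagonal entries of $\gamma(\adjacency+\adjacency^T)$ (or of $\adjacency+\adjacency^T$ before scaling by $|\gamma|<1/2$) via $|G_{ij}|\le 1/\numplayers$, and apply the Gershgorin Disks Theorem to place all eigenvalues in a disk of radius strictly less than $1$, from which invertibility of $\identity - \gamma(\adjacency+\adjacency^T)$ and $\identity - \gamma(\adjacency+\adjacency^T)/2$ follows. The only cosmetic difference is that the paper applies Gershgorin directly to $\Gmat := \gamma(\adjacency+\adjacency^T)$ with row sums bounded by $(\numplayers-1)/\numplayers < 1$, whereas you apply it to $\adjacency+\adjacency^T$ first and multiply by $|\gamma|$ afterward; both yield the same strict bound.
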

\begin{proof}
    Let $\Gmat := \gamma (\adjacency + \adjacency^T)$.
    Note $\Gmat_{ii} = 0$ since $G_{ii} = 0$ for all $i \in [\numplayers]$. For $j \neq i$, 
    \begin{align*}
        |\Gmat_{ij}| \leq |\gamma| \cdot (|G_{ij}| + |G_{ji}|) \leq \frac{2 |\gamma|}{\numplayers} < \frac{1}{\numplayers}. 
    \end{align*}
    Hence, 
    \begin{align*}
        \sum_{j \neq i} |\Gmat_{ij}| < \frac{\numplayers - 1}{\numplayers} < 1. 
    \end{align*}
    Thus, by the Gershgorin Disks Theorem \cite[Theorem~2.8]{bullo2018lectures}, all eigenvalues of $\Gmat$ are contained within disks centered at $0$ with a radius less than $1$ in the complex plane. Thus, $\spectralradius{\Gmat} < 1$ and $\spectralradius{\Gmat/2} < 1/2$. This implies $\lambdamin{\identity - \Gmat} > 0$ and hence $\identity - \Gmat$ is invertible. 
    Similarly, $\lambdamin{\identity - \Gmat/2} > 1/2 > 0$ and hence $\identity - \Gmat/2$ is invertible. 
\end{proof}

\begin{lemma}
\label{lem: I - G mat inverse beta in A n}
    Under Assumption~\ref{assn: contraction}, it holds that $\inv{\identity - \gamma (\adjacency + \adjacency^T)} \betavector \in \actionset$, where $\actionset$ is the set of all action profiles. 
\end{lemma}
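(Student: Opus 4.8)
The plan is to show directly that every coordinate of the vector $x := \inv{\identity - \gamma(\adjacency + \adjacency^T)}\betavector$ lies in $[-\actiontilde, \actiontilde]$. Since Assumption~\ref{assn: contraction} guarantees $[-\actiontilde, \actiontilde] \subseteq \actionset_i$ for every $i$, this immediately gives $x \in \actionset$. Equivalently, it suffices to establish the single scalar bound $\|x\|_\infty \leq \actiontilde$. Invertibility of $\identity - \gamma(\adjacency + \adjacency^T)$ is already supplied by Lemma~\ref{lem: lambda max G mat}, so $x$ is well-defined and I only need to control its size.

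The key quantity is the induced $\infty$-norm, matched to the box structure of $\actionset$. Writing $\Gmat := \gamma(\adjacency + \adjacency^T)$ as in the proof of Lemma~\ref{lem: lambda max G mat}, I would first bound $\|\Gmat\|_\infty$ using the entrywise hypotheses $|G_{ij}| \leq 1/\numplayers$ and $\Gmat_{ii} = 0$: since $|\Gmat_{ij}| \leq |\gamma|(|G_{ij}| + |G_{ji}|) \leq 2|\gamma|/\numplayers$ for $j \neq i$, summing the $\numplayers-1$ off-diagonal entries in each row gives $\|\Gmat\|_\infty \leq 2|\gamma|(\numplayers-1)/\numplayers \leq 2|\gamma| < 1$, the final inequality using $|\gamma| < 1/2$. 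With $\|\Gmat\|_\infty < 1$ in hand, I would bound the inverse via the Neumann series $\inv{\identity - \Gmat} = \sum_{k=0}^\infty \Gmat^k$, which converges in the induced $\infty$-norm, together with submultiplicativity, to get $\|\inv{\identity - \Gmat}\|_\infty \leq 1/(1 - \|\Gmat\|_\infty)$. Chaining the estimates yields
\begin{align*}
    \|x\|_\infty \leq \|\inv{\identity - \Gmat}\|_\infty \, \|\betavector\|_\infty \leq \frac{\betabar}{1 - \|\Gmat\|_\infty} \leq \frac{\betabar}{1 - 2|\gamma|},
\end{align*}
and the calibrated hypothesis $\betabar \leq (1 - 2|\gamma|)\actiontilde$ closes the argument: $\|x\|_\infty \leq \actiontilde$, so $x \in \actionset$.

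I do not expect a deep obstacle here; the argument is essentially routine once the correct norm is chosen. The one point requiring care is precisely this choice: the spectral-radius bound from Lemma~\ref{lem: lambda max G mat} alone does not deliver a coordinatewise box containment, whereas the $\infty$-norm is the natural matching quantity for the $\infty$-ball constraint defining $\actionset$, and it is exactly the norm for which the entrywise bounds $|G_{ij}| \leq 1/\numplayers$ convert cleanly into a row-sum bound below $1$. An equivalent and more transparent packaging, consistent with the name of Assumption~\ref{assn: contraction}, is to note that the affine map $T(x) = \betavector + \Gmat x$ is a contraction in $\|\cdot\|_\infty$ whose fixed point is $x$, and that $T$ maps the box $[-\actiontilde, \actiontilde]^\numplayers$ into itself, since $\|T(x)\|_\infty \leq \betabar + 2|\gamma|\actiontilde \leq \actiontilde$ whenever $\|x\|_\infty \leq \actiontilde$; its unique fixed point must therefore lie in the box.
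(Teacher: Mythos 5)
Your proof is correct and follows essentially the same route as the paper's: both expand $\inv{\identity - \gamma(\adjacency+\adjacency^T)}$ as a Neumann series, bound the powers of $\Gmat$ geometrically by $(2|\gamma|)^k$, and sum to obtain the coordinatewise bound $\betabar/(1-2|\gamma|) \leq \actiontilde$. The only difference is cosmetic: the paper establishes the entrywise estimate $|[\Gmat^k]_{ij}| \leq (2|\gamma|)^k/\numplayers$ by induction, whereas you package the same computation as submultiplicativity of the induced $\infty$-norm, which is slightly cleaner.
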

\begin{proof}
    Let $\Gmat := \gamma (\adjacency + \adjacency^T)$. 
    By Lemma \ref{lem: lambda max G mat}, we know that $\spectralradius{\Gmat} < 1$ and $\inv{\identity - \Gmat}$ exists. By the Neumann series expansion, we have
    \begin{align*}
        \inv{\identity - \Gmat} = \identity + \Gmat + \Gmat^2 + \dots. 
    \end{align*}
    Note that for all $i \neq j$, $|\Gmat_{ij}| \leq 2|\gamma|/\numplayers$ and for all $i$, $\Gmat_{ii} = 0$. Hence, 
    \begin{align*}
        \left|[\Gmat^2]_{ij}\right| = \left|\sum_{l = 1}^\numplayers \Gmat_{il} \Gmat_{lj}\right| \leq \sum_{l = 1}^\numplayers \frac{(2 |\gamma|)^2}{\numplayers^2} = \frac{(2 |\gamma|)^2}{\numplayers}. 
    \end{align*}
    By induction, we can show that for all $k \geq 1$, for all $i, j \in [\numplayers]^2$,
    \begin{align*}
        \left|[\Gmat^k]_{ij}\right| \leq \frac{(2|\gamma|)^k}{\numplayers}. 
    \end{align*}
    Let $\betabar = \max_{i \in [\numplayers]} |\beta_i|$. Thus, for all $i \in [\numplayers]$, 
    \begin{align*}
        \left|[\inv{\identity - \Gmat} \betavector]_i\right| &= \left|[(\identity + \Gmat + \Gmat^2 + \dots) \betavector]_i\right| \leq |\beta_i| + \sum_{j = 1}^\numplayers (|\Gmat_{ij}| + |[\Gmat^2]_{ij}| + \dots) |\beta_j| \\
        &\leq \betabar + \sum_{j = 1}^\numplayers \left(\frac{2|\gamma|}{\numplayers} + \frac{(2|\gamma|)^2}{\numplayers} + \dots\right) \betabar = \betabar\left(1 + 2|\gamma|+ (2|\gamma|)^2 + \dots\right) \\
        &= \frac{\betabar}{1 - 2 |\gamma|} \leq \actiontilde,  
    \end{align*}
    where the last inequality is due to Assumption~\ref{assn: contraction}. 
    Thus, for all $i \in [\numplayers]$, $[\inv{\identity - \Gmat} \betavector]_i \in [-\actiontilde, \actiontilde] \subseteq \actionset_i$. 
\end{proof}

\begin{lemma}
\label{lem: unique social optimum}
    Under Assumption~\ref{assn: contraction}, the unique maximizer of $\sw$ given in \eqref{eq: social welfare} is given by
    \begin{align*}
        \actionvectoropt = \inv{\identity - \gamma (\adjacency + \adjacency^T)}\betavector \in \actionset. 
    \end{align*}
\end{lemma}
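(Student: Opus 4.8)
The plan is to recast $\sw$ in matrix form, locate its unique \emph{unconstrained} maximizer via the first-order condition, use strict concavity to certify that this stationary point is the global maximizer, and finally invoke Lemma~\ref{lem: I - G mat inverse beta in A n} to conclude that this maximizer already lies in $\actionset$ and hence solves the constrained problem.

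First I would write $\sw(\actionvector) = -\tfrac{1}{2}\actionvector^\transpose \actionvector + \betavector^\transpose \actionvector + \gamma\,\actionvector^\transpose \adjacency \actionvector$, using that $\sum_i\big(\sum_j G_{ij}a_j\big)a_i = \actionvector^\transpose \adjacency \actionvector$. Differentiating and using $\nabla(\actionvector^\transpose \adjacency \actionvector) = (\adjacency + \adjacency^\transpose)\actionvector$ gives $\nabla \sw(\actionvector) = -\actionvector + \betavector + \gamma(\adjacency + \adjacency^\transpose)\actionvector$. Setting this to zero yields the stationarity condition $(\identity - \gamma(\adjacency + \adjacency^\transpose))\actionvector = \betavector$. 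Since Lemma~\ref{lem: lambda max G mat} guarantees that $\identity - \gamma(\adjacency + \adjacency^\transpose)$ is invertible, the unique stationary point of $\sw$ over $\R^\numplayers$ is $\actionvectoropt = \inv{\identity - \gamma(\adjacency + \adjacency^\transpose)}\betavector$.

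Next I would establish strict concavity. The Hessian $\nabla^2 \sw = -\identity + \gamma(\adjacency + \adjacency^\transpose)$ is constant, and by Lemma~\ref{lem: lambda max G mat} we have $\spectralradius{\gamma(\adjacency + \adjacency^\transpose)} < 1$, so every eigenvalue of $\gamma(\adjacency + \adjacency^\transpose)$ lies in $(-1,1)$; in particular $\lambdamax{\gamma(\adjacency + \adjacency^\transpose)} < 1$. Hence $\nabla^2 \sw \prec 0$, so $\sw$ is strictly concave on $\R^\numplayers$ and admits a unique global maximizer over $\R^\numplayers$, namely $\actionvectoropt$.

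Finally, I would tie the unconstrained optimizer to the constrained problem: Lemma~\ref{lem: I - G mat inverse beta in A n} shows $\actionvectoropt = \inv{\identity - \gamma(\adjacency + \adjacency^\transpose)}\betavector \in \actionset$. Because the unique global maximizer of a strictly concave function over $\R^\numplayers$ is feasible for the compact convex set $\actionset$, it is automatically the unique maximizer of $\sw$ restricted to $\actionset$, which is exactly the claimed formula. The proof is largely routine given the supporting lemmas; the only conceptual obstacle, and the step on which the result truly hinges, is the feasibility claim $\actionvectoropt \in \actionset$ — without it one could only characterize a projected/boundary optimum rather than the clean closed form. That obstacle is entirely discharged by Lemma~\ref{lem: I - G mat inverse beta in A n}, whose Neumann-series argument uses Assumption~\ref{assn: contraction} to bound $\big|[\actionvectoropt]_i\big| \leq \betabar/(1-2|\gamma|) \leq \actiontilde$.
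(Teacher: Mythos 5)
Your proposal is correct and follows essentially the same route as the paper's proof: matrix form of $\sw$, first-order condition plus invertibility of $\identity - \gamma(\adjacency+\adjacency^T)$ from Lemma~\ref{lem: lambda max G mat}, concavity from the constant negative-definite Hessian, and feasibility of the unconstrained optimizer via Lemma~\ref{lem: I - G mat inverse beta in A n}. The only cosmetic difference is that the paper phrases the concavity step as strong concavity rather than strict concavity; since the Hessian is constant and negative definite both hold, and the argument is unaffected.
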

\begin{proof}
    Recall that the social welfare function of an LQ network game is given by
    \begin{align*}
        \sw(\actionvector) = -\frac{1}{2} \actionvector^T \actionvector + \betavector^T \actionvector + \gamma \actionvector^T \adjacency \actionvector. 
    \end{align*}
    We first compute the maximizer of $\sw$ over the unconstrained domain $\R^\numplayers$. Note that
    \begin{align*}
        \nabla_\actionvector \sw(\actionvector) = -\actionvector + \betavector + \gamma (\adjacency + \adjacency^T) \actionvector 
    \end{align*}
    and 
    \begin{align*}
        \nabla^2_{\actionvector,\actionvector} \sw(\actionvector) = -\identity + \gamma (\adjacency + \adjacency^T). 
    \end{align*}
    By Lemma \ref{lem: lambda max G mat}, we know that $\lambdamax{-\identity + \gamma (\adjacency + \adjacency^T)} < 0$. Hence, $\sw$ is strongly concave. By the first-order optimality condition, the unique maximizer $\actionvectoropt$ of the unconstrained $\sw$ function must satisfy
    \begin{align*}
        \nabla_\actionvector \sw(\actionvectoropt) = -\actionvectoropt + \betavector + \gamma (\adjacency + \adjacency^T) \actionvectoropt = 0. 
    \end{align*}
    Hence, 
    \begin{align*}
        \actionvectoropt = \betavector + \gamma (\adjacency + \adjacency^T) \actionvectoropt 
    \end{align*}
    which implies
    \begin{align*}
        \left(\identity - \gamma (\adjacency + \adjacency^T)\right) \actionvectoropt = \betavector. 
    \end{align*}
    By Lemma \ref{lem: lambda max G mat}, the matrix above is invertible. Hence, 
    \begin{align*}
        \actionvectoropt = \inv{\identity - \gamma (\adjacency + \adjacency^T)}\betavector. 
    \end{align*}
    Further, by Lemma \ref{lem: I - G mat inverse beta in A n}, we have that $\actionvectoropt \in \actionset$. Thus, $\actionvectoropt$ is the unique maximizer of $\sw$. 
\end{proof}

\begin{lemma}
\label{lem: unique alpha potential maximum}
    Under Assumption~\ref{assn: contraction}, the unique maximizer of $\potentialfunction$ given in \eqref{eq: LQ alpha potential function} is given by
    \begin{align*}
        \actionvectorpotential = \inv{\identity - \frac{\gamma}{2} (\adjacency + \adjacency^T)} \betavector \in \actionset. 
    \end{align*}
\end{lemma}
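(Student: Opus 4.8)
The plan is to prove Lemma~\ref{lem: unique alpha potential maximum} by following essentially the same argument as the proof of Lemma~\ref{lem: unique social optimum}, adapted to the $\alpha$-potential function $\potentialfunction$ in~\eqref{eq: LQ alpha potential function} in place of the social welfare $\sw$. The key structural difference is that $\sw$ contains the full quadratic term $\gamma \actionvector^T \adjacency \actionvector$, whereas $\potentialfunction$ contains only $\frac{\gamma}{2} \actionvector^T \adjacency \actionvector$ (equivalently, the symmetrized quadratic form appears with coefficient $\gamma/2$ rather than $\gamma$), which is precisely why the maximizer involves $\identity - \frac{\gamma}{2}(\adjacency + \adjacency^T)$ rather than $\identity - \gamma(\adjacency + \adjacency^T)$.

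First I would write $\potentialfunction$ in matrix form as $\potentialfunction(\actionvector) = -\frac{1}{2}\actionvector^T\actionvector + \betavector^T\actionvector + \frac{\gamma}{2}\actionvector^T \adjacency \actionvector$, compute the gradient $\nabla_\actionvector \potentialfunction(\actionvector) = -\actionvector + \betavector + \frac{\gamma}{2}(\adjacency + \adjacency^T)\actionvector$ (using $\actionvector^T \adjacency \actionvector = \actionvector^T \frac{\adjacency + \adjacency^T}{2}\actionvector$), and compute the Hessian $\nabla^2_{\actionvector,\actionvector}\potentialfunction(\actionvector) = -\identity + \frac{\gamma}{2}(\adjacency + \adjacency^T)$. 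Then I would invoke Lemma~\ref{lem: lambda max G mat}, which gives $\spectralradius{\gamma(\adjacency + \adjacency^T)/2} < 1/2$, to conclude that $\lambdamax{-\identity + \frac{\gamma}{2}(\adjacency + \adjacency^T)} < 0$, so $\potentialfunction$ is strongly concave on $\R^\numplayers$ and has a unique unconstrained maximizer.

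Next, setting the gradient to zero yields $\left(\identity - \frac{\gamma}{2}(\adjacency + \adjacency^T)\right)\actionvectorpotential = \betavector$, and by Lemma~\ref{lem: lambda max G mat} the matrix $\identity - \frac{\gamma}{2}(\adjacency + \adjacency^T)$ is invertible, giving $\actionvectorpotential = \inv{\identity - \frac{\gamma}{2}(\adjacency + \adjacency^T)}\betavector$. The remaining obligation is to verify feasibility, i.e.\ $\actionvectorpotential \in \actionset$, so that the unconstrained maximizer coincides with the constrained one. The most delicate point is that Lemma~\ref{lem: I - G mat inverse beta in A n} is stated for $\inv{\identity - \gamma(\adjacency + \adjacency^T)}\betavector$, not for the factor-of-two version needed here; I would either invoke an analogous feasibility bound or simply re-run the Neumann-series estimate from Lemma~\ref{lem: I - G mat inverse beta in A n} with $\Gmat/2$ in place of $\Gmat$. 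Concretely, writing $\Gmat := \gamma(\adjacency + \adjacency^T)$, the entrywise bounds $|[\Gmat^k]_{ij}| \leq (2|\gamma|)^k/\numplayers$ give $|[(\Gmat/2)^k]_{ij}| \leq |\gamma|^k/\numplayers$, so the Neumann series yields $|[\actionvectorpotential]_i| \leq \betabar(1 + |\gamma| + |\gamma|^2 + \cdots) = \betabar/(1-|\gamma|)$; since $|\gamma| < 1/2$, this is at most $\betabar/(1 - 2|\gamma|) \leq \actiontilde$ by Assumption~\ref{assn: contraction}, so $\actionvectorpotential \in [-\actiontilde,\actiontilde]^\numplayers \subseteq \actionset$.

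I expect the main obstacle to be purely bookkeeping rather than conceptual: ensuring that the feasibility argument is correctly adapted to the $\gamma/2$ scaling, since the cleanest route is to re-derive the geometric-series bound rather than cite Lemma~\ref{lem: I - G mat inverse beta in A n} verbatim. Everything else (strong concavity, first-order condition, invertibility) transfers directly from the proof of Lemma~\ref{lem: unique social optimum} with the single substitution $\gamma \mapsto \gamma/2$ in the symmetric quadratic term, so the argument is short once that scaling is tracked carefully.
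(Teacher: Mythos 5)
Your proposal is correct and follows exactly the route the paper intends, which simply states that the proof is analogous to that of Lemma~\ref{lem: unique social optimum} with the symmetric quadratic term scaled by $\gamma/2$. Your explicit re-derivation of the feasibility bound via the Neumann series for $\Gmat/2$ (yielding $\betabar/(1-|\gamma|) \leq \betabar/(1-2|\gamma|) \leq \actiontilde$) correctly fills in the one detail the paper leaves implicit, namely that Lemma~\ref{lem: I - G mat inverse beta in A n} as stated applies to $\identity - \gamma(\adjacency+\adjacency^T)$ rather than its halved counterpart.
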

The proof is similar to that of Lemma \ref{lem: unique social optimum}.

\section{Discussion}

\subsection{An alternative to Algorithm \ref{alg: gradient play}}
\label{sec: appendix alternative to gradient play algorithm}

An alternative to Algorithm~\ref{alg: gradient play} is the following algorithm. 
At each time $k \geq 0$, each player $i \in [\numplayers]$ updates his action as 
\begin{align}
\label{eq: gradient play new utilities}
    \action_{\playeridx}^{\timeindex+1} = \Projection{\actionset_i}{\action_{\playeridx}^{\timeindex} + \stepsize_\playeridx \frac{\partial}{\partial a_i}\potentialfunction(\actionvector^{\timeindex})}, 
\end{align} 
where $\potentialfunction$ is an $\alpha$-potential function of the game satisfying \eqref{eq: phi and u_i derivative relation}. 
This update rule can be justified as follows. Suppose a central planner knows the $\alpha$-potential function and the utility functions of the game. Then, to each player $i \in [\numplayers]$, the planner makes a payment of $e_i(\actionvector) := \potentialfunction(\actionvector) - \utility_\playeridx(\actionvector)$ for every action profile $\actionvector \in \actionset$. Thus, the new utility of player~$i$ is $v_\playeridx(\actionvector) := \utility_\playeridx(\actionvector) + e_i(\actionvector) = \potentialfunction(\actionvector)$. 
Under suitable regularity assumptions on $\potentialfunction$, such as smoothness and concavity, it can be guaranteed that (i) $\{\actionvector^k\}_{k = 0}^\infty$ converges 
to a maximizer of $\potentialfunction$, which is an $\alpha$-Nash equilibrium of the game, and (ii) the total payment made by the central planner at each iteration to each player~$i$ is $e_i(\actionvector)$ which is at most $\numplayers \alpha + \max_{i \in [\numplayers]} \inf_{\actionvectorfixed \in \actionset} (\potentialfunction(\actionvectorfixed) - \utility_\playeridx(\actionvectorfixed))$.\footnote{By the fundamental theorem of calculus, we obtain that for any $\actionvector, \actionvectorfixed \in \actionset$, $|\potentialfunction(\actionvector) - \utility_\playeridx(\actionvector) - (\potentialfunction(\actionvectorfixed) - \utility_\playeridx(\actionvectorfixed))| = \left|\int_0^1 [\nabla \potentialfunction(\actionvectorfixed + \epsilon(\actionvector - \actionvectorfixed)) - \nabla \utility_\playeridx(\actionvectorfixed + \epsilon(\actionvector - \actionvectorfixed))]^T \cdot (\actionvector - \actionvectorfixed) d\epsilon\right| \leq \numplayers \alpha$, by using the inequality \eqref{eq: phi and u_i derivative relation}. Hence, for all $\actionvectorfixed \in \actionset$, $e_i(\actionvector^k) = \potentialfunction(\actionvector^k) - \utility_\playeridx(\actionvector^k) = \potentialfunction(\actionvector^k) - \utility_\playeridx(\actionvector^k) - (\potentialfunction(\actionvectorfixed) - \utility_\playeridx(\actionvectorfixed)) + \potentialfunction(\actionvectorfixed) - \utility_\playeridx(\actionvectorfixed) \leq \numplayers \alpha + \potentialfunction(\actionvectorfixed) - \utility_\playeridx(\actionvectorfixed)$. } Thus, the update rule in \eqref{eq: gradient play new utilities} achieves convergence to a specific $\alpha$-Nash equilibrium corresponding to a maximizer of the $\alpha$-potential function (see Section~\ref{sec: price of stability}), while Algorithm~\ref{alg: gradient play} only ensures convergence to the set of $2\alpha$-Nash equilibria.

\subsection{Numerical example to illustrate Theorem \ref{thm: pos}}
\label{sec: appendix numerical example for pos}

We present an example to illustrate the bound in Theorem~\ref{thm: pos} numerically on a sequence of growing \erdosrenyi networks. 

\begin{example}
    Consider an LQ network game with $\gamma = 1/4$, and a sequence of network games sampled using the \erdosrenyi model as described in Example \ref{ex: dense erdos renyi networks} with edge probabilities $p(\numplayers) = 1/\numplayers$ and $p(\numplayers) = 1 - 1/\numplayers$, respectively. We set $G_{ij} = 1/\numplayers$ if there is an edge from $j$ to $i$ to satisfy Assumption~\ref{assn: contraction}. A plot of the welfare suboptimality bounds in Theorem~\ref{thm: pos} for such networks as a function of $\numplayers$ is shown in Figure \ref{fig: Pos ER graphs}. 
    \begin{figure}[t]
        \centering
        \begin{subfigure}[b]{0.49\textwidth}
            \includegraphics[width=\linewidth]{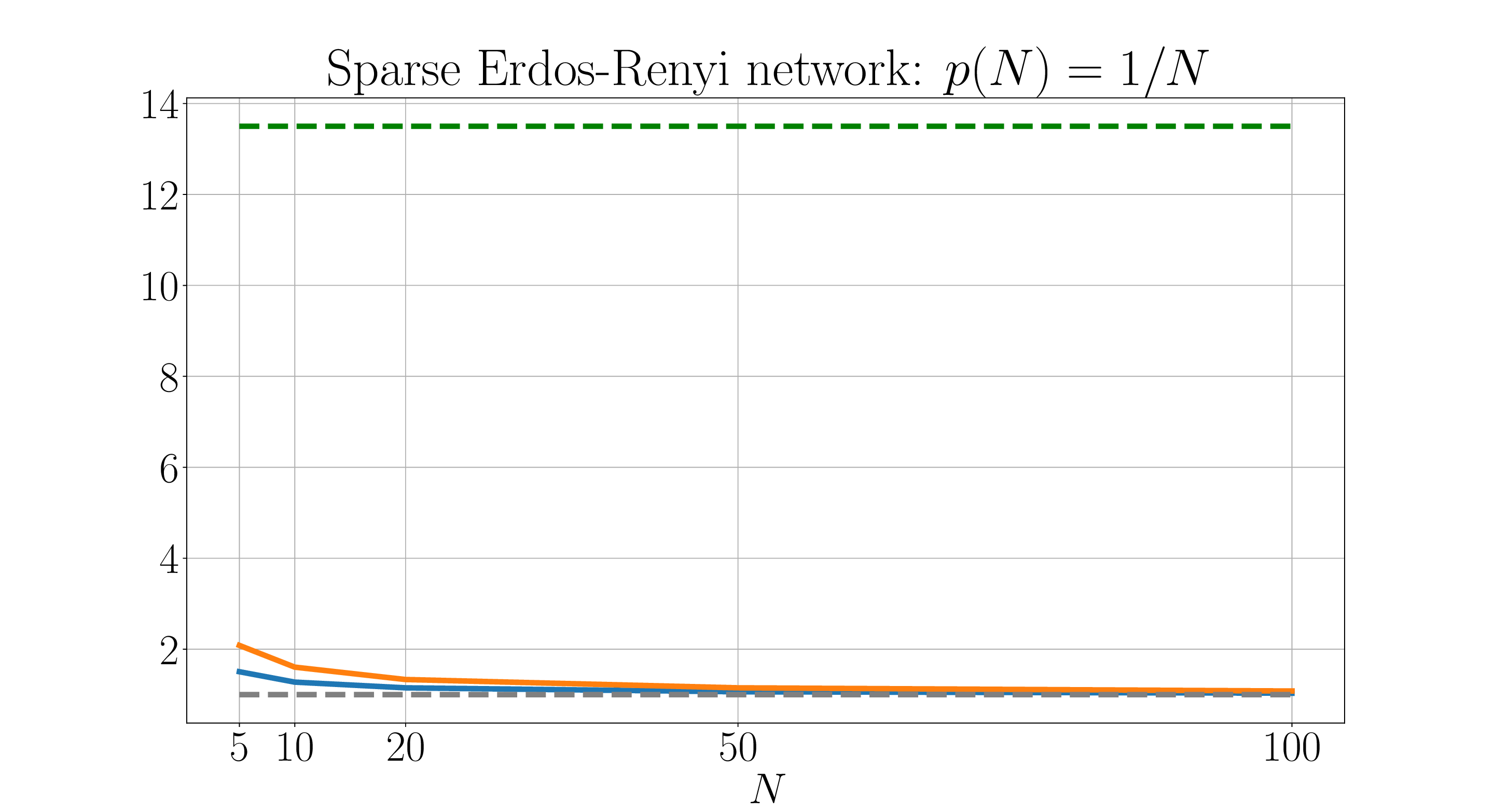}
        \end{subfigure}
        \begin{subfigure}[b]{0.49\textwidth}
            \includegraphics[width=\linewidth]{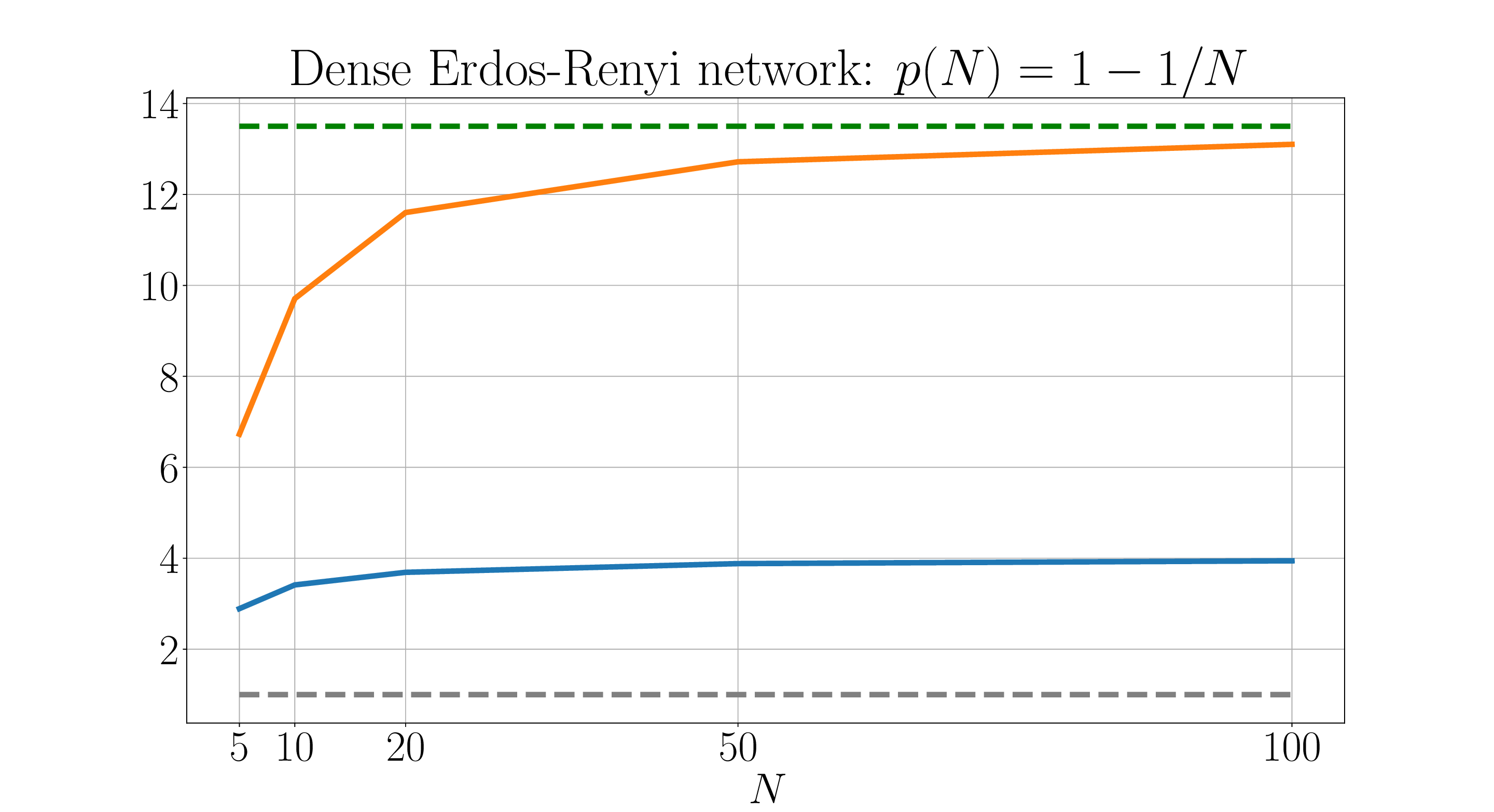}
        \end{subfigure}
        \hfill
        \begin{subfigure}[t]{0.5\textwidth}
            \includegraphics[width=\linewidth]{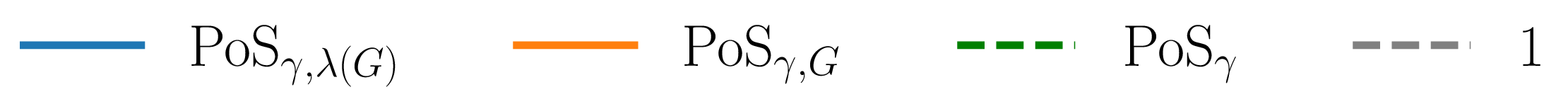}
        \end{subfigure}
        \caption{A plot of the welfare suboptimality bounds from Theorem~\ref{thm: pos} for graphs sampled randomly using the \erdosrenyi model in Example \ref{ex: dense erdos renyi networks} with $p(\numplayers) = 1/\numplayers$ and $p(\numplayers) = 1 - 1/\numplayers$ respectively, $G_{ij} \in \{0,1/\numplayers\}$, $\numplayers \in \{5, 10, 20, 50, 100\}$ and $\gamma = 1/4$. The results are averaged over $100$ instances of networks realized from each model. The bound $\posboundeigvals$ is the sharpest for both the networks. The bound $\posboundnetworkvals$ is looser but can be inferred from the network edge weights $\{G_{ij}\}_{(i,j) \in [\numplayers]^2}$ alone, and does not require computing the eigenvalues of $\adjacency$. The bound $\posboundgamma$ is the weakest since it does not use any information about the network $\adjacency$. Note that for $p(\numplayers) = 1/\numplayers$, as $\numplayers \to \infty$, the contribution of the local aggregate term $z_i(\actionvector_{-i}) = \sum_{j = 1}^\numplayers G_{ij} a_j$ to the utility of each player~$i$ in \eqref{eq: LQ utility} is negligible. Hence, it is expected that the welfare suboptimality approaches $1$. }
        \label{fig: Pos ER graphs}
    \end{figure}
\end{example}

\subsection{Additional numerical results}
\label{sec: appendix additional numerical results}

Consider the setup of Section \ref{sec: numerical results}.  
To verify that the learned actions are good approximate Nash equilibria, we compute the utility suboptimalities as shown in Table \ref{tab: utility suboptimality}. It is observed that the learned actions provide a good utility suboptimality value for the players, since they are $2\alpha$-Nash equilibria. 
\begin{table}[t]
    \centering
    \renewcommand{\arraystretch}{1.2} 
    \setlength{\tabcolsep}{2pt}
    \begin{tabular}{|c|c|c|} \hline
         \textbf{Action profile} &\textbf{Utility suboptimality} \\ \hline
         Algorithm~\ref{alg: sequential best response} &$0.01$ \\ \hline
         Algorithm~\ref{alg: gradient play} &$0.35$ \\ \hline
    \end{tabular}
    \caption{Utility suboptimality of the actions learned by Algorithms~\ref{alg: sequential best response} and \ref{alg: gradient play} for the LQ network game in Section \ref{sec: numerical results}. Given an action profile $\actionvector \in \actionset$, its utility suboptimality is defined as $\epsilon(\actionvector) := \max_{i \in [\numplayers]} [\max_{a_i' \in \actionset_i} \utility_\playeridx(a_i', \actionvector_{-i}) - \utility_\playeridx(\actionvector)] / |\utility_\playeridx(\actionvector)|$. As guaranteed by Theorems \ref{thm: sequential best response convergence} and \ref{thm: gradient play convergence}, $\epsilon(\actionvector)$ is small for an action profile $\actionvector$ learned by Algorithms~\ref{alg: sequential best response} and \ref{alg: gradient play}. 
    The reported values are averaged over $T = 100$ trials of the game. }
    \label{tab: utility suboptimality}
\end{table}

Finally, we plot the value of the $\alpha$-potential function in \eqref{eq: LQ alpha potential function} of the LQ network game as a function of the sequence of actions learned by the players using Algorithms~\ref{alg: sequential best response} and \ref{alg: gradient play} for some specific instances of the random network game. The plots are shown in Fig. \ref{fig: alpha potential for algorithms}. It can be observed that for each algorithm, the value of the $\alpha$-potential function monotonically increases. 
\begin{figure}[t]
    \centering
    \begin{subfigure}[b]{0.24\textwidth}
        \includegraphics[width=\linewidth]{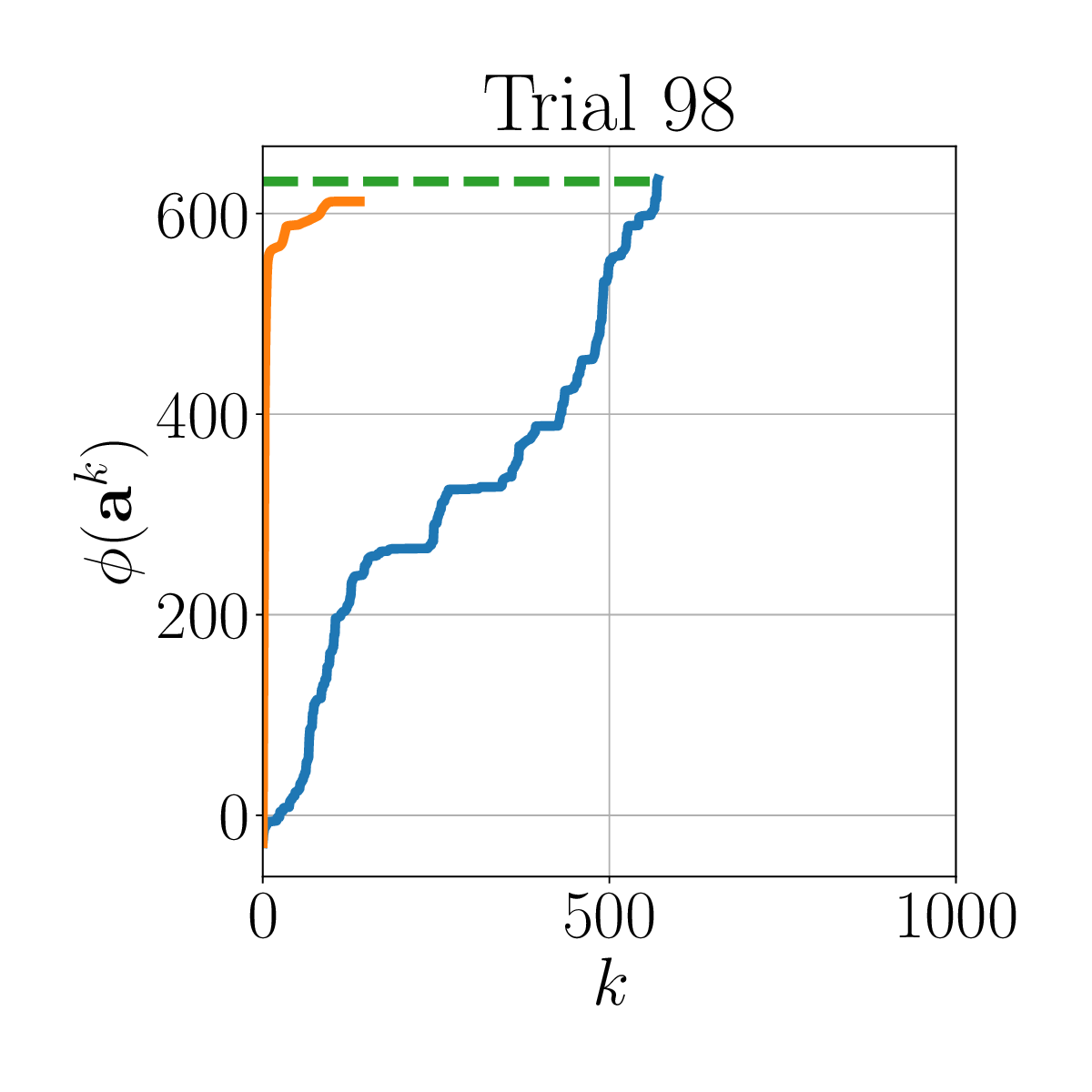}
    \end{subfigure}
    \begin{subfigure}[b]{0.24\textwidth}
        \includegraphics[width=\linewidth]{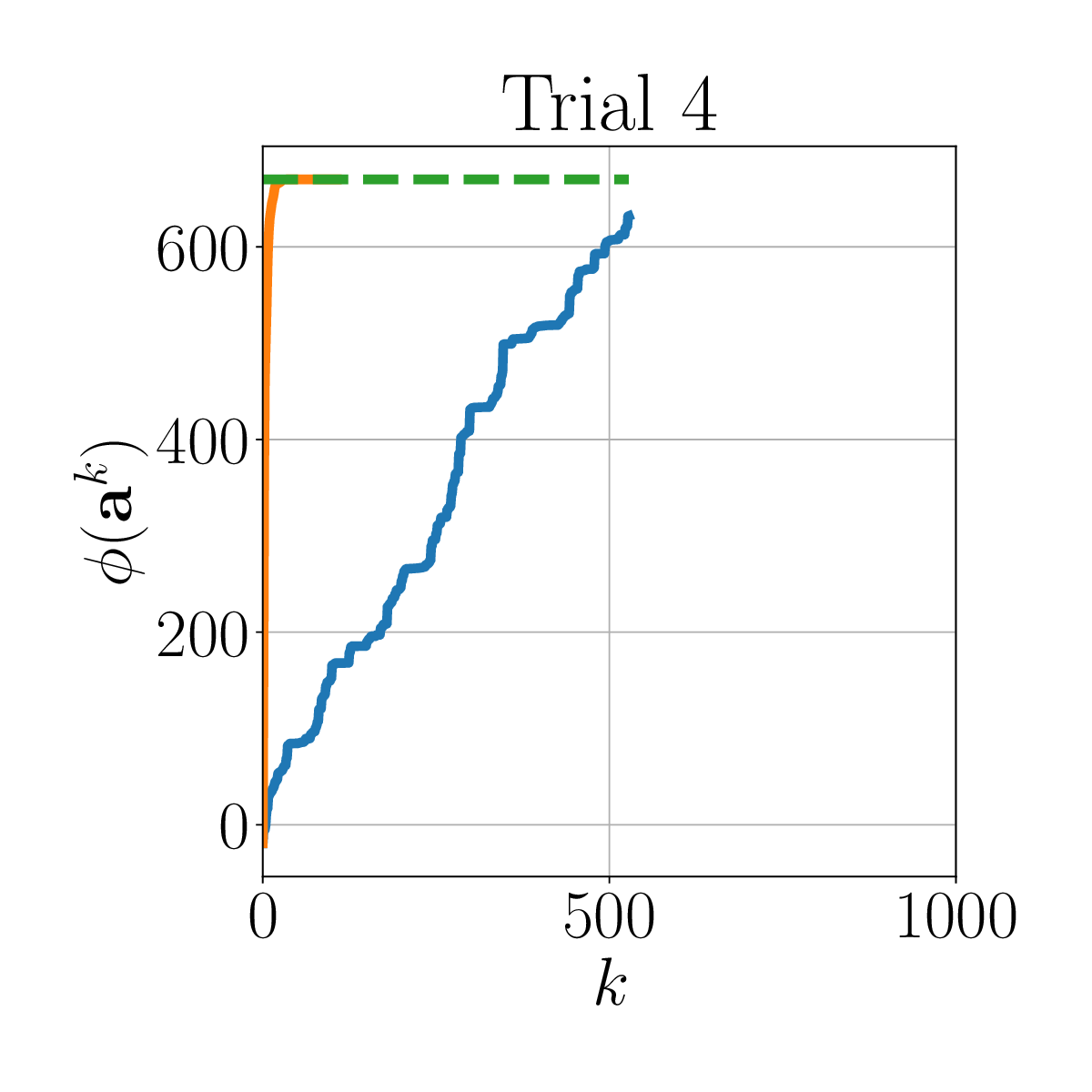}
    \end{subfigure}
    \begin{subfigure}[b]{0.24\textwidth}
        \includegraphics[width=\linewidth]{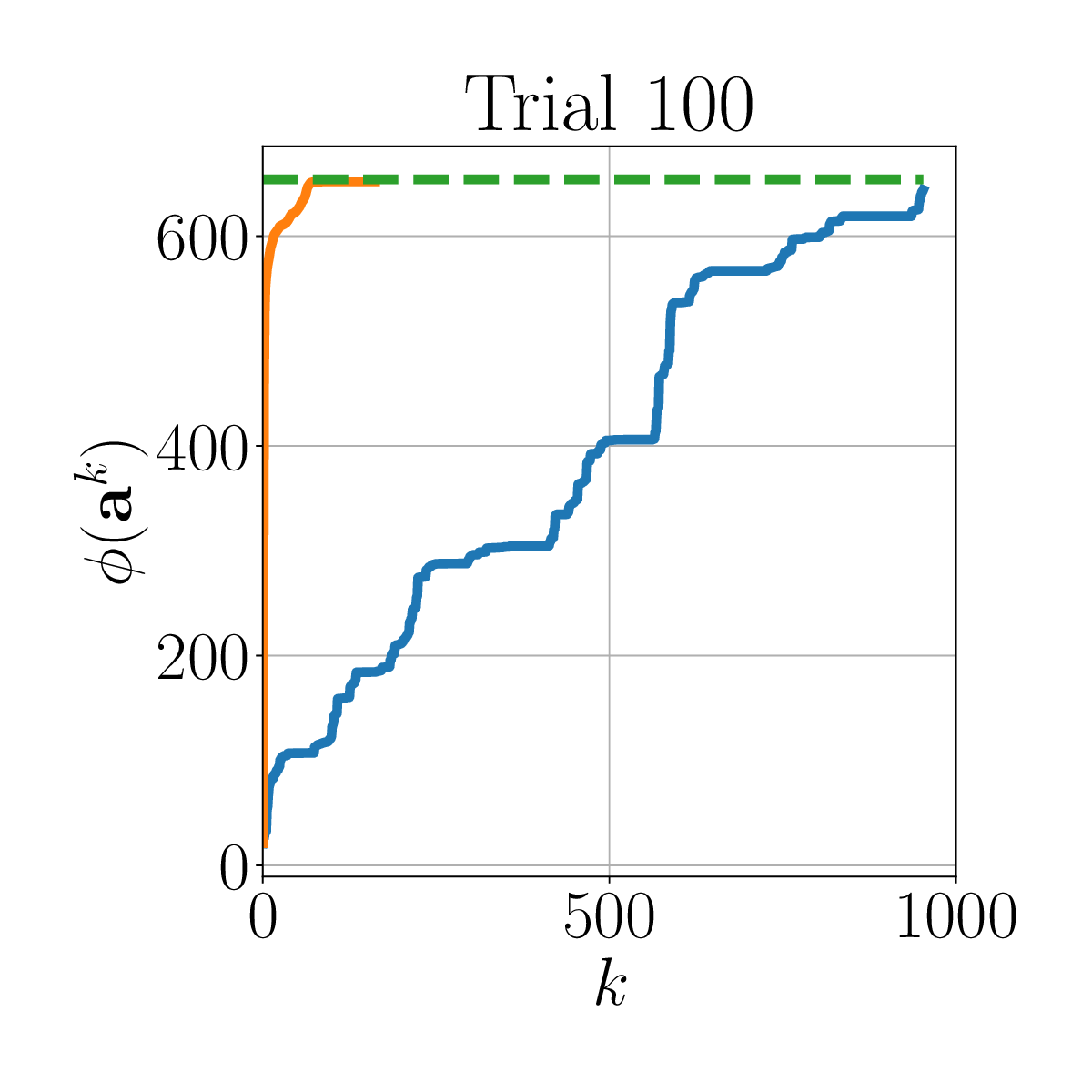}
    \end{subfigure}
    \begin{subfigure}[b]{0.24\textwidth}
        \includegraphics[width=\linewidth]{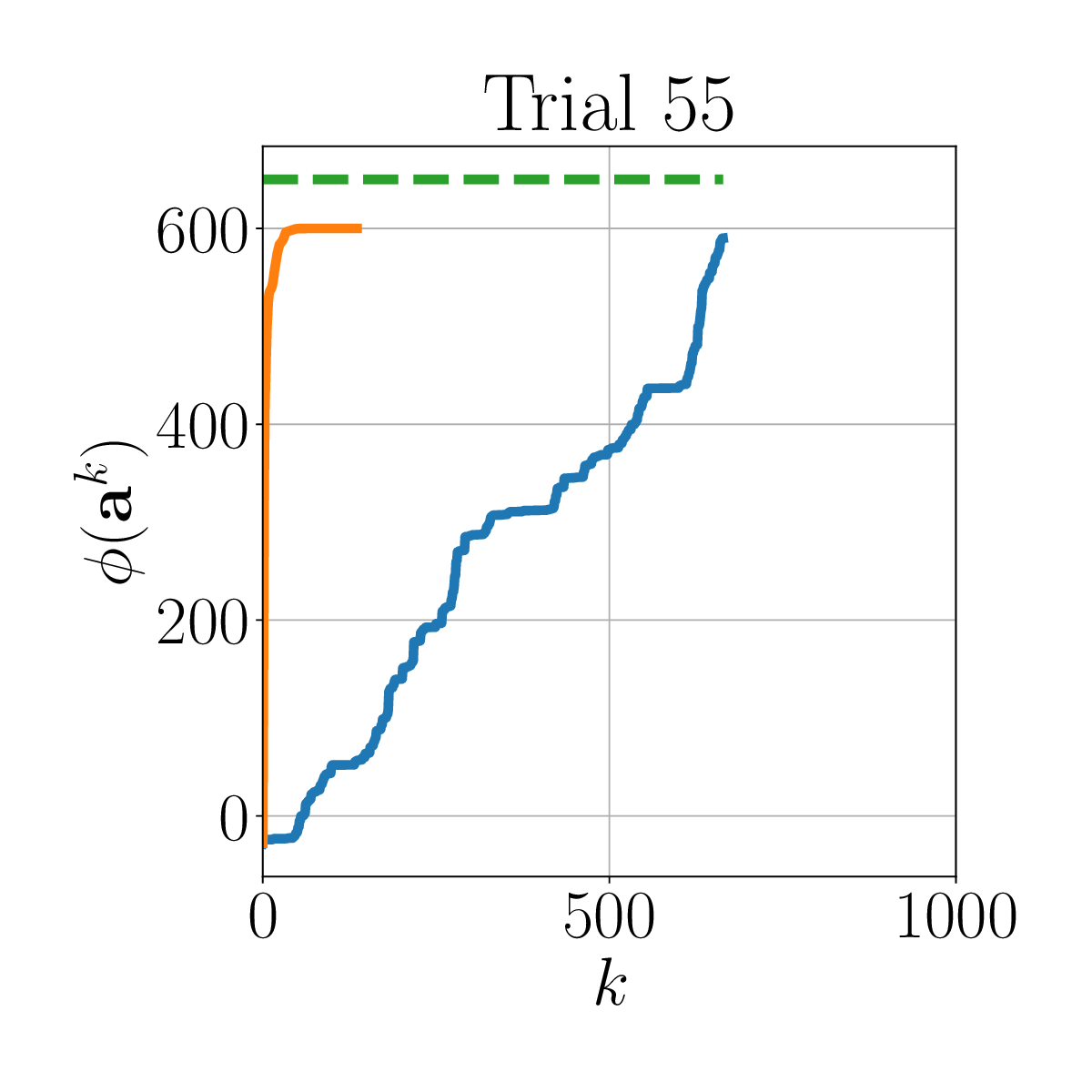}
    \end{subfigure}
    \begin{subfigure}[b]{0.5\textwidth}
        \includegraphics[width=\linewidth]{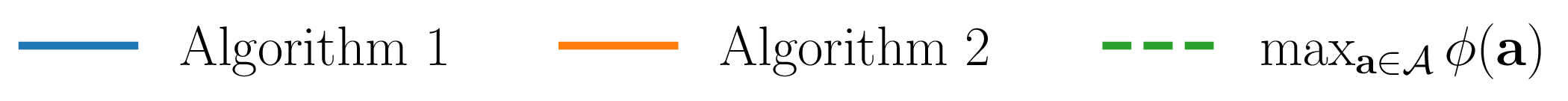}
    \end{subfigure}
    \caption{A plot of the $\alpha$-potential function in \eqref{eq: LQ alpha potential function} evaluated at each iteration of Algorithms~\ref{alg: sequential best response} and \ref{alg: gradient play} for some chosen trials of the LQ network game in Section \ref{sec: numerical results}. In each case, the $\alpha$-potential function monotonically increases along the trajectories of each algorithm. It does not always reach its maximum value. }
    \label{fig: alpha potential for algorithms}
\end{figure}

\end{document}